\theoremstyle{plain}
\setlist[itemize]{leftmargin=*,noitemsep, topsep=0pt}
\newtheorem{example}{Example}[section]
\newtheorem{definition}[example]{Definition}
\newtheorem{theorem}[example]{Theorem}
\definecolor{darkred}{RGB}{139,0,0}
\newcommand{\msignal}{\ensuremath{\mathsf{Merge~Similarity}}\xspace}
\newcommand{\nsignal}{\ensuremath{\mathsf{Naive~Bayes}}\xspace}
\newcommand{\csignal}{\ensuremath{\mathsf{Compression~Rate}}\xspace}
\newcommand{\vsignal}{\ensuremath{\mathsf{Vocabulary~Overlap}}\xspace}
\newcommand{\fsignal}{\ensuremath{\mathsf{Frequency~Estimation}}\xspace}
\newcommand{\ie}{{i.e., }}
\newcommand{\eg}{{e.g., }}
\newcommand{\et}{{et al.}}
\newcommand{\mypara}[1]{\smallskip\noindent\textbf{#1.} \xspace}
\newcommand{\mysubpara}[1]{\noindent\textit{#1}. \xspace}
\newcommand{\algcomment}[1]{\hfill {\color{blue} $\triangleright$ \emph{\small{#1}}}}
\begin{document}
%-------------------------------------------------------------------------------

%don't want date printed
\date{}

% make title bold and 14 pt font (Latex default is non-bold, 16 pt)
\title{\Large \bf Membership Inference Attacks on Tokenizers of Large Language Models}

%for single author (just remove % characters)
% \author{
% {\rm Anonymous Author(s)}\\
% % Your Institution
% % \and
% % {\rm Second Name}\\
% % Second Institution
% % copy the following lines to add more authors
% % \and
% % {\rm Name}\\
% %Name Institution
% } % end author
% \author{
% {\rm Anonymous Author(s)}\\
% % Your Institution
% % \and
% % {\rm Second Name}\\
% % Second Institution
% % copy the following lines to add more authors
% % \and
% % {\rm Name}\\
% %Name Institution
% } % end author

\author{
{\rm Meng Tong\textsuperscript{1\,*}}\ \ \ 
{\rm Yuntao Du\textsuperscript{2\,*}}\ \ \
{\rm Kejiang Chen\textsuperscript{1\,\href{mailto:chenkj@ustc.edu.cn}{\textcolor{black}{\Letter}}}}\ \ \
{\rm Weiming Zhang\textsuperscript{1}}\ \ \
{\rm Ninghui Li\textsuperscript{2}}\ \ \
\\
\\
\textsuperscript{1}\textit{University of Science and Technology of China} \ \ \ 
\textsuperscript{2}\textit{Purdue University} \ \ \ \\
\textsuperscript{*}\textit{Equal Contribution} \ \ \ 
\textsuperscript{\href{mailto:chenkj@ustc.edu.cn}{\textcolor{black}{\Letter}}}\textit{Corresponding Author} \ \ \
}

\maketitle
\begin{abstract}
%-------------------------------------------------------------------------------
Membership inference attacks (MIAs) are widely used to assess the privacy risks associated with machine learning models. However, when these attacks are applied to pre-trained large language models (LLMs), they encounter significant challenges, including mislabeled samples, distribution shifts, and discrepancies in model size between experimental and real-world settings. 
To address these limitations, we introduce tokenizers as a new attack vector for membership inference. Specifically, a tokenizer converts raw text into tokens for LLMs. Unlike full models, tokenizers can be efficiently trained from scratch, thereby avoiding the aforementioned challenges. In addition, the tokenizer’s training data is typically representative of the data used to pre-train LLMs.
Despite these advantages, the potential of tokenizers as an attack vector remains unexplored.
To this end, we present the first study on membership leakage through tokenizers and explore five attack methods to infer dataset membership. Extensive experiments on millions of Internet samples reveal the vulnerabilities in the tokenizers of state-of-the-art LLMs. To mitigate this emerging risk, we further propose an adaptive defense. Our findings highlight tokenizers as an overlooked yet critical privacy threat, underscoring the urgent need for privacy-preserving mechanisms specifically designed for them.

\end{abstract}
\section{Introduction}\label{sec: intro}
Scaling up the pre-training data for large language models (LLMs) has been shown to improve performance~\cite{muennighoff2023scaling, brown2020language, henighan2020scaling, kaplan2020scaling}. Nevertheless, the rapid expansion of pre-training data has also raised concerns about whether these commercial models are trained on sensitive or copyrighted information~\cite{united1976code, meeus2024sok}. For instance, on June 4, 2025, Reddit filed a lawsuit against Anthropic, alleging the unlawful use of data from its 100 million daily users to train LLMs~\cite{nytimes2025reddit}. Furthermore, an increasing body of research~\cite{25LLMV,carlini2021extracting,25Towards} has documented instances in which LLMs memorize and leak private information. %\let\thefootnote\relax\footnote{ \hspace{-0.5cm}\textsuperscript{*}\,Equal Contribution  \\ \hspace{-0.5cm}\textsuperscript{\Letter}\,Corresponding Author}

To assess potential data misuse, extensive research has explored the membership inference attacks (MIAs) in LLMs~\cite{shokri2017membership,zarifzadeh2024low,pang2023white,tong2025membership}.
In particular, an MIA aims to determine whether a specific data sample or dataset was used to train the target model (i.e., \textit{member}) or not (i.e., \textit{non-member}). To achieve this, existing MIAs primarily rely on the model's output as the attack vector~\cite{duarte2024cop,25Towards,zhang2024min}.
Although this vector is widely adopted, these attacks face significant challenges in reliably demonstrating their effectiveness for LLMs~\cite{hayes2025strong, meeus2024sok}, as shown in Figure~\ref{fig: evaluation flaws}. The primary obstacle is that faithful evaluation~\cite{hoffmann2022training} requires an evaluator to pre-train an LLM from scratch~\cite{ye2022enhanced}, which incurs significant computational costs.
As a result, existing MIAs are typically evaluated using LLMs that have already been pre-trained by others. Nevertheless, this may lead to MIA evaluation exhibiting \textit{distribution shifts}~\cite{maini2024llm, duarte2024cop} or containing \textit{mislabeled samples}~\cite{meeus2024sok, shi2023detecting}. Furthermore, many of the evaluated models (e.g., Pythia-12B~\cite{biderman2023pythia}) are much smaller than practical deployed LLMs (e.g., DeepSeek-R1-671B~\cite{guo2025deepseek}), limiting the ability to assess current MIAs in real-world conditions. Given these challenges, a natural question arises: \textit{Can we exploit an attack vector for MIAs against LLMs that avoids these limitations?}

\begin{figure}[t]
\centering
\includegraphics[width=1\columnwidth]{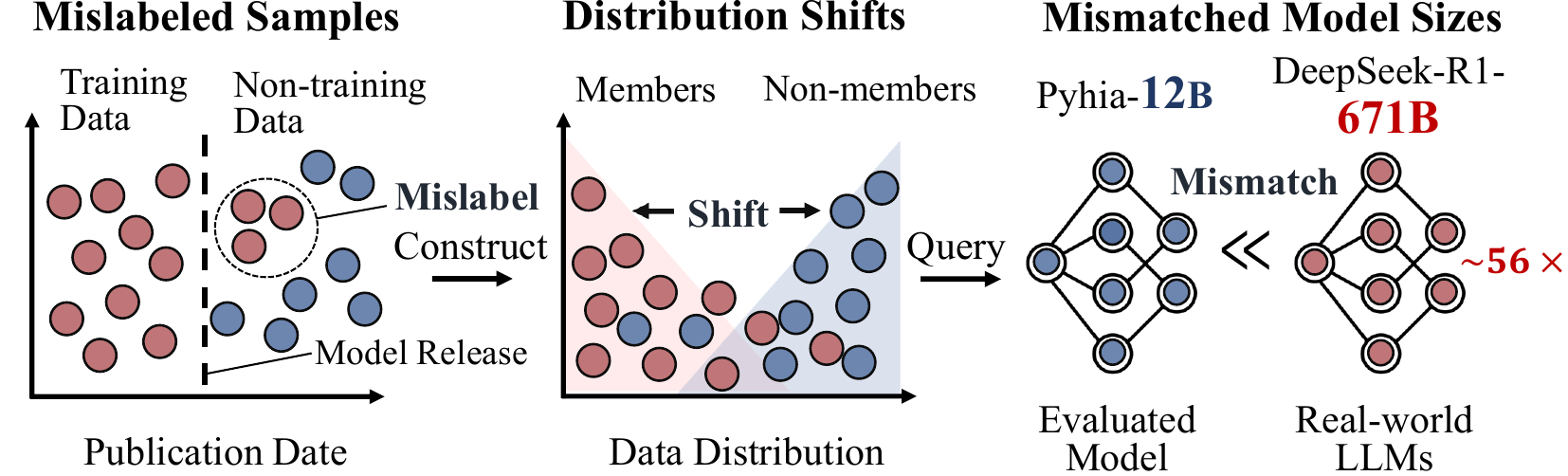}
\caption{Evaluation challenges in MIAs against LLMs.}
\label{fig: evaluation flaws}
\end{figure}

% \begin{figure}[t!]
%        \subfigure[MIA~via~$\mathsf{Vocabulary~Overlap}$]{ \hspace{0.1em}\includegraphics[width=0.48\columnwidth]{fig/key_plot_v.pdf}}
%     \subfigure[MIA~via~$\mathsf{Frequency~Estimation}$]{ \hspace{0.1em}\includegraphics[width=0.48\columnwidth]{fig/key_plot_f.pdf}}
% \caption{Performance of our MIAs on tokenizers of LLMs. \textbf{\underline{Key Finding:}} Scaling up LLMs~\textup{\cite{mayilvahanan2025llms}} involves expanding the tokenizer's vocabulary~\cite{tao2024scaling,huangover} and thus improving its compression efficiency (\ie bytes per token)~\cite{liu-etal-2025-superbpe}. Yet, our figures show that it also increases tokenizer's vulnerability to MIAs.}
% \label{fig: abs}
% \end{figure}

\mypara{New Attack Vector}
Motivated by this question, we explore a new attack vector that targets other components of LLMs.
Typically, an LLM comprises a tokenizer, a transformer network, and an output layer~\cite{touvron2023llama}. Among these components, the tokenizer has been open-sourced in commercial LLMs such as OpenAI-o3~\cite{openai2025tiktoken} and Gemini-1.5~\cite{karter2024gemini} to support transparent billing. Building on this observation, \textit{we propose the previously overlooked tokenizer as a new attack vector for membership inference.}
Specifically, a tokenizer~\cite{zouhar2023formal} is in charge of converting text into tokens for LLMs. Its training data is typically representative of the overall pre-training corpus of the LLMs~\cite{touvron2023llama,team2023gemini,brown2020language,parmar2024nemotron,dai2024deepseekmoe}. Its training process simply involves merging the most frequent strings into a vocabulary using the byte-pair encoding (BPE) algorithm~\cite{galle2019investigating}. This straightforward process enables training a tokenizer from scratch, which aligns with the inference game~\cite{ye2022enhanced} and avoids mislabeled samples or distribution shifts. Furthermore, the simplicity of BPE also makes it feasible to train a tokenizer that matches those used in state-of-the-art LLMs (see Figure~\ref{fig: cmp tokenizers}).

% Are tokenizers vulnerable to MIAs? How accurately can an adversary infer membership?
Despite these advantages, the feasibility of using tokenizers as an attack vector has not yet been explored. In this paper, we present the first study to exploit toknizers for MIAs and propose five attack methods for inferring dataset membership:
\begin{itemize}
\item \mypara{MIA via \msignal} This attack trains shadow tokenizers \cite{shokri2017membership} and compares their token merge orders to that of the target tokenizer. If the merge order of the target tokenizer closely matches that of shadow tokenizers trained on a particular dataset, the dataset is classified as a \textit{member}. However, the effectiveness of this attack is limited. Only a few distinctive tokens display a membership signal in merge order, making it difficult for membership inference.

\item \mypara{MIA via \vsignal} Leveraging these distinctive tokens, we propose a more effective attack, MIA via \vsignal. This attack also involves training shadow tokenizers. But instead of comparing merge orders, it classifies a dataset as a \textit{member} if the distinctive tokens in the target tokenizer’s vocabulary significantly overlap with those from shadow tokenizers trained on that dataset.

\item \mypara{MIA via \fsignal} While our results show that MIA via \vsignal achieves strong performance, it requires substantial time to train multiple shadow tokenizers. For efficient implementation, we introduce MIA via \fsignal, which trains only a single shadow tokenizer. This attack evaluates whether training the target tokenizer on a dataset is necessary for certain tokens to appear in its vocabulary. If this condition is met, this attack classifies the dataset as a \textit{member}.
\item  Additionally, as part of our evaluation in Section~\ref{setup}, we further explore two attack methods for potential alternatives: MIA via \nsignal and MIA via \csignal.

\end{itemize}

We conduct extensive evaluations using millions of Internet data~\cite{raffel2020exploring}. To match real-world practice, we align the trained tokenizers in evaluations with those used in commercial LLMs~\cite{touvron2023llama,biderman2023pythia,guo2025deepseek}. The experimental results indicate that MIAs via \vsignal and \fsignal achieve strong performance across various settings. 
For example, MIA via \vsignal achieves an AUC score of 0.771 against a tokenizer with two hundred thousand tokens, whereas MIA via \fsignal achieves an AUC score of 0.740. More importantly, our experiments show that scaling laws increase tokenizers' vulnerability to MIAs. This finding suggests that MIAs could become more effective on scaled-up tokenizers in the future.

\mypara{Our Contributions} Our main contributions are as follows:
\begin{itemize}
    \item We introduce the tokenizer as a new attack vector for membership inference and conduct the first study demonstrating its feasibility in LLMs.
    \item We explore five attack methods for set-level membership inference against tokenizers, revealing the vulnerabilities in these foundational components of state-of-the-art LLMs.
    \item We conduct extensive evaluations using real-world Internet datasets. The results show that our shadow-based attacks demonstrate strong performance against tokenizers.
    \item We further analyze tokenizers from commercial LLMs. The results show that the tokenizers, such as OpenAI-o200k~\cite{openai2025tiktoken} and DeepSeek-R1~\cite{guo2025deepseek}, also contain distinctive tokens for implementing membership inference.
\end{itemize}

\mypara{Main Findings} We have the following key findings:
\begin{itemize}
    \item According to prior work~\textup{\cite{ huangover,mayilvahanan2025llms}}, scaling up the intelligence of LLMs involves expanding the tokenizer's vocabulary~\textup{\cite{tao2024scaling}} and thus improving its compression efficiency~\textup{\cite{liu-etal-2025-superbpe}}. However, our experimental results show that it also increases the tokenizer's vulnerability to effective MIAs.
    \item The membership status of the target dataset with more data samples is typically more accurately inferred by MIAs. %LLMs that are trained on larger datasets are typically more vulnerable to our tokenizer-based attacks. 
    \item While removing infrequent tokens from the tokenizer’s vocabulary can partially reduce the effectiveness of MIAs, this approach also lowers the tokenizer’s compression efficiency. Moreover, even with this mitigation, MIAs can remain effective for inferring large datasets.
\end{itemize}

\mypara{Organization} The remainder of this paper is organized as follows. Section~\ref{bg} presents preliminaries on tokenizer training and membership inference. Section~\ref{threat_model} introduces the threat models for membership inference attacks. Section~\ref{mia_method} presents three of our attack methods. Section~\ref{mia_evaluation} introduces two additional methods and reports the experimental results. Section~\ref{related_work} reviews related work. Section~\ref{discussion} discusses the limitations of LLM  dataset inference. Section~\ref{conclusion} concludes the paper.

\section{Preliminaries}\label{bg}
\subsection{Tokenizer Training}
A tokenizer~\cite{zouhar2023formal} is a fundamental component in LLMs, converting raw text into a format that the model can process. Formally, a tokenizer is defined as a function $f_{\mathcal{V}}: S \rightarrow \mathcal{V}^*$ that maps an input string $s \in S$ (\eg a sentence or document) into a sequence of tokens from a vocabulary $\mathcal{V}$.
In practice, this function is learned from a collection of text datasets $\mathcal{D}$. Specifically, its training objective is to segment and encode the data in a way that maximizes compression efficiency~\cite{zouhar2023formal}. This process begins by initializing the vocabulary $\mathcal{V}$ with basic symbols, such as individual characters.
During training, the tokenizer $f_{\mathcal{V}}$ iteratively merges the most frequent pairs of symbols in the data via the byte-pair encoding (BPE) algorithm~\cite{shibata1999byte}. This iterative process results in a token merge order: each token $t_i \in \mathcal{V}$ is assigned an index $i$ corresponding to the iteration, where it was merged into the vocabulary $\mathcal{V}$.

In commercial LLM applications, the tokenizer also serves as a basis for token billing. As the tokenizer directly determines how users are charged based on the number of tokens in a message, its operation is critical for ensuring transparent billing~\cite{huang2025efficient}. To promote such transparency in token counting, the organizations behind major LLMs~\cite{openai2025tiktoken,karter2024gemini,guo2025deepseek} have open-sourced their tokenizers, making their vocabularies and token merge orders publicly available.

\subsection{Membership Inference}
The concept of membership inference attacks was first introduced by Shokri \et~\cite{shokri2017membership}, who demonstrated that an adversary can determine whether a specific data record was included in a model's training set. Specifically, they propose to train multiple shadow models that imitate the behavior of the target model. By comparing the output distributions of shadow models trained with and without a specific data record, the adversary can infer whether the data record was part of the target model's training data~\cite{nasr2019comprehensive}.

Building upon this foundational research, subsequent studies~\cite{carrington2022deep,carlini2021extracting,ndss26cpmia} have investigated the effectiveness of MIAs on a variety of machine learning models, including ResNet-18~\cite{he2016deep} and BERT~\cite{devlin2019bert}. Nonetheless, as models increase in size and are trained on larger datasets over fewer epochs, the overfitting signal for individual samples decreases, resulting in reduced MIA performance on LLMs~\cite{duanmembership}. To address this limitation, recent MIAs~\cite{25LLMV} instead focus on dataset membership, which aggregates signals from individual samples to enhance the detection of membership. 
% However, for the reasons discussed in Section~\ref{sec: intro}, these attacks~\cite{choicontaminated} also face evaluation challenges in LLMs, such as mislabeled samples, distribution shifts, and mismatched model sizes.
% However, evaluating these attacks presents significant challenges, such as \textit{mislabeled samples}~\cite{meeus2024sok}, \textit{distribution shifts}~\cite{duarte2024cop}, and discrepancies in model size between experimental and real-world settings. 
% Additionally, the effectiveness of these attacks typically depends on further assumptions. For example, some~\cite{maini2024llm} assumes that an adversary has access to the LLM's output loss, while another~\cite{xiong2025hey} requires the ability to fine-tune the target LLM. However, these assumptions are not guaranteed to hold in closed-source LLMs. Moreover, they can be defended by adding noise during the model training process~\cite{zhang2025soft}.
\section{Threat Models}\label{threat_model}

\begin{table}[t]
    \centering
    \caption{Tokenizer Information of Commercial LLMs.}
    \label{Tokenizer Justification}
    \resizebox{\columnwidth}{!}{
    \begin{threeparttable}  % Wrap everything inside the threeparttable environment

    \vspace{-0.1cm}
    
    \begin{tabular}{cccc}
    \toprule 
    \textbf{LLMs} & \textbf{Affiliation} & \textbf{Tokenizer's Training Data} & \textbf{Download Tokenizer} \\
    \midrule 
    LLaMA~\cite{touvron2023llama} & Meta & LLaMA's Pretraining Data & \href{https://github.com/meta-llama/llama}{GitHub} \\
    Gemini~\cite{team2023gemini} & Google & Gemini's Pretraining Data & \href{https://github.com/googleapis/go-genai/blob/2c046453716ca9c10d445da5c0923b5b170773f2/tokenizer/tokenizer.go}{GitHub} \\
    GPT-3\tnote{1}~\hspace{0.07cm}\cite{brown2020language} & OpenAI & GPT-3's Pretraining Data & \href{https://github.com/openai/tiktoken/blob/main/tiktoken_ext/openai_public.py}{GitHub} \\
    Nemotron\tnote{2}~\hspace{0.07cm}\cite{parmar2024nemotron} & NVIDIA & Nemotron's Pretraining Data & \href{https://huggingface.co/nvidia/Nemotron-4-340B-Instruct}{Hugging Face} \\
    DeepSeek~\cite{dai2024deepseekmoe} & DeepSeek & DeepSeek's Pretraining Data & \href{https://huggingface.co/deepseek-ai/deepseek-moe-16b-base}{Hugging Face} \\
    \bottomrule
    \end{tabular}

    \begin{tablenotes}
      \item[1] GPT-3 utilizes the GPT-2's tokenization, i.e., tokenizer, which is trained on the WebText~\cite{radford2019language} (one of GPT-3's pretraining corpora).
      \item[2] The Series of Nemotron~\cite{adler2024nemotron,parmar2024nemotron} share both the pretraining data and the tokenizer.
    \end{tablenotes}
    \end{threeparttable} 
        }% Close the threeparttable environment
        \vspace{-0.1cm}
\end{table}

\mypara{Attack Scenario} We consider an attack scenario in which an adversary implements MIAs to infer the pretraining datasets of an LLM. As demonstrated in Table~\ref{Tokenizer Justification}, the tokenizer for such an LLM is trained on the model's pretraining data. The adversary exploits the tokenizer as an attack vector for MIAs. Specifically, it downloads the tokenizer from official links, which were originally intended to support transparent billing for the service. Then, the adversary captures the distinctive tokens in the tokenizer to calculate the membership signal and infer the final result. Unlike previous MIAs against LLMs which directly reflect the model's memorization of pretraining data~\cite{25Towards}, MIAs on tokenizers lead to privacy implications, as illustrated by the case below:
\begin{itemize}
    \item[] \mysubpara{Case for Privacy Infringement} A distinctive token ‘davidjl’ from a Reddit user’s data appears in OpenAI’s tokenizers~\cite{hn36242914, reddittop}. Indeed, OpenAI claims that its tokenizers were trained on models' pretraining corpus, which includes Reddit data~\cite{radford2019language, brown2020language}. By leveraging distinctive tokens, MIAs on such tokenizers can infer whether users' data was used to train the tokenizers, thereby revealing the models' pretraining corpus and identifying privacy infringements.
\end{itemize}

\mypara{Adversary's Objective}
Given a set of pretraining datasets $\mathcal{D}_\textup{mem}$ sampled from an underlying distribution \(\mathbb{D}\) (denoted as $\mathcal{D}_\textup{mem}\leftarrow \mathbb{D}$), we write $\mathcal{V}_\text{target} \leftarrow \mathcal{T}(\mathcal{D}_\textup{mem})$ to represent a tokenizer's vocabulary $\mathcal{V}_\text{target}$ is trained by running the BPE algorithm $\mathcal{T}$~\cite{galle2019investigating} on $\mathcal{D}_\textup{mem}$. This training process results in the target tokenizer $f_{\mathcal{V}_\text{target}}$.
Given a target dataset $D \in \mathbb{D}$, the adversary's objective is to determine whether $D$ was part of the pretraining data used to construct the vocabulary $\mathcal{V}_\text{target}$. To achieve this, the adversary employs a membership inference attack $\mathcal{A}$,  which can be formally defined as:
\begin{equation}
\mathcal{A} \colon D, f_{\mathcal{V}_\text{target}} \rightarrow \{0,1\}, 
\end{equation}
where $1$ indicates \(D \in \mathcal{D}_\textup{mem}\), and $0$ indicates \(D \notin \mathcal{D}_\textup{mem}\).

\mypara{Adversary's Capabilities} In alignment with the attack scenario, where commercial LLMs open-source their tokenizers~\cite{openai2025tiktoken,karter2024gemini}, we assume that the adversary has access to the target tokenizer $f_{\mathcal{V}_\text{target}}$ and its token vocabulary $\mathcal{V}_\text{target} = \{t_1, t_2, \dots, t_{|\mathcal{V}_\text{target}|}\}$, where each token $t_i \in \mathcal{V}_\text{target}$ was merged at iteration $i$ during the training process. Furthermore, we assume that the adversary is able to sample auxiliary datasets $\mathcal{D}_\textup{aux}$ from the same distribution as the training data used by the target tokenizer, \ie $\mathcal{D}_{\text{aux}}\gets\mathbb{D}$. Leveraging datasets $\mathcal{D}_\textup{aux}$, the adversary can use the BPE algorithm $\mathcal{T}$ to train shadow tokenizers. This assumption is consistent with previous work~\cite{carlini2022membership, li2021membership, salem2018ml}. It is also realistic in practice, as the training data is primarily sourced from web content~\cite{guo2025deepseek}.
%for tokenizers is representative of that in LLMs~\cite{biderman2023pythia, hayase2024data, black2022gpt}, both of which are primarily sourced from publicly available web content~\cite{openai2024systemcard, anthropic2025systemcard, guo2025deepseek}.

\section{Attack Methodology}\label{mia_method}
In this section, we present our MIAs against pre-trained LLMs. 
For each method, we start by introducing our design intuition. Then we describe the attack methodology.

\subsection{\texorpdfstring{Baseline: MIA via \msignal}{MIA via Merge Similarity}}\label{initial_exploration}

Shadow-based  MIAs~\cite{hayes2025strong,hui2021practical,yeom2018privacy,li2021membership,salem2018ml} involve training auxiliary models to calibrate predictions. Inspired by these attacks, we formalize MIA via \msignal on tokenizers. 

\mypara{Design Intuition} Prior work~\cite{carlini2022membership} has revealed that the overfitting behavior of machine learning models can vary depending on whether a particular data point was present in the training data. Based on this insight, we hypothesize that tokenizers may also differ depending on whether a dataset was included in the training data. Specifically, we assume that the token merge order can serve as an indicator of the overfitting phenomenon. Thus, merge orders in vocabularies $\mathbb{V}_{\text{in}} = \{ {\mathcal{V}_\text{in}} \leftarrow \mathcal{T}(\mathcal{D}_\textup{aux}\bigcup\, \{D\}) \;|\; \mathcal{D}_\textup{aux} \leftarrow \mathbb{D} \}$ and $\mathbb{V}_{\text{out}}  = \{ {\mathcal{V}_\text{out}} \leftarrow \mathcal{T}(\mathcal{D}_\textup{aux} \setminus \{D\}) \;|\; \mathcal{D}_\textup{aux} \leftarrow \mathbb{D} \}$ can differ depending on whether the target dataset $D$ was included in the training data. Building on this hypothesis, an adversary can exploit this difference by comparing the similarity $\rho$ of token merge orders for pairs $(\mathcal{V}_\text{in},\mathcal{V}_\text{target})$ and $(\mathcal{V}_\text{out},\mathcal{V}_\text{target})$. If the average value for $\rho(\mathcal{V}_\text{in},\mathcal{V}_\text{target})$ is higher than that of $ \rho(\mathcal{V}_\text{out},\mathcal{V}_\text{target})$, it is more likely that $D\in\mathcal{D}_\textup{mem}$.

\mypara{Attack Method}
This attack consists of four steps.
\begin{enumerate}[label={(\roman*)}, leftmargin=1.8em, labelsep=0.3em]
    \item The adversary randomly samples a collection of datasets $\mathcal{D}_\textup{aux}\leftarrow \mathbb{D}$ for $N$ times, and trains $N$ shadow tokenizers considering inclusion or exclusion of the target dataset $D$. Thus, the adversary obtains sets $\mathbb{V}_{\text{in}}$ and $\mathbb{V}_{\text{out}}$.
    \item The adversary computes the similarity of token merge orders for each $\rho(\mathcal{V}_\text{in},\mathcal{V}_\text{target})$ and $\rho(\mathcal{V}_\text{out},\mathcal{V}_\text{target})$ via Spearman's rank correlation coefficient~\cite{sedgwick2014spearman}.
    \item The membership signal for target dataset $D$ is defined as:
    \begin{equation} 
      \hspace{-2.2em}  \frac{1}{2} + \frac{ \sum_{\mathcal{V}_\text{in}\in\mathbb{V}_{\text{in}}}{\rho(\mathcal{V}_\text{in},\mathcal{V}_\text{target})}}{4|\mathbb{V}_{\text{in}}|}- \frac{ \sum_{\mathcal{V}_\text{out}\in\mathbb{V}_{\text{out}}}\rho(\mathcal{V}_\text{out},\mathcal{V}_\text{target})}{4|\mathbb{V}_{\text{out}}|},
    \end{equation}
    where it ranges from 0 to 1. 
    \item If the membership signal is larger than a decision-making threshold $\tau$, output 1 (\textit{member}). Otherwise, output 0.
\end{enumerate}

% \yuntao{its common to call it as the ``membership signals'', maybe we can also try other signals, for example, Jaccard distance or min-k? which may more algin with our intuition: use rare tokens to determine the membership. Ideally, we should have a shadow-based MIA that works yet is expensive to train shadow models, and has the following approach that works as well yet is more efficient by capturing the less frequent tokens. Because reviewers would feel strange that shadow-free MIA works better than shadow-based MIA. Last Update: 2025/7/23}

\begin{figure}[t]
    \centering
    \subfigure[Vocabulary Size: $80\text{,\,}000$]{\includegraphics[width=0.49\columnwidth]{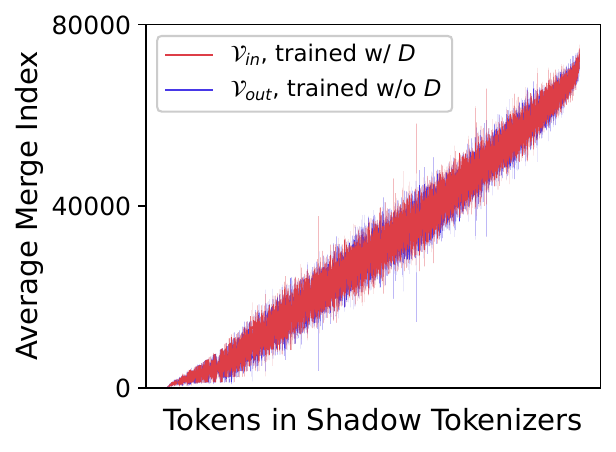}}
    \hspace{0.1em}\subfigure[Vocabulary Size: $110\text{,\,}000$]{\includegraphics[width=0.49\columnwidth]{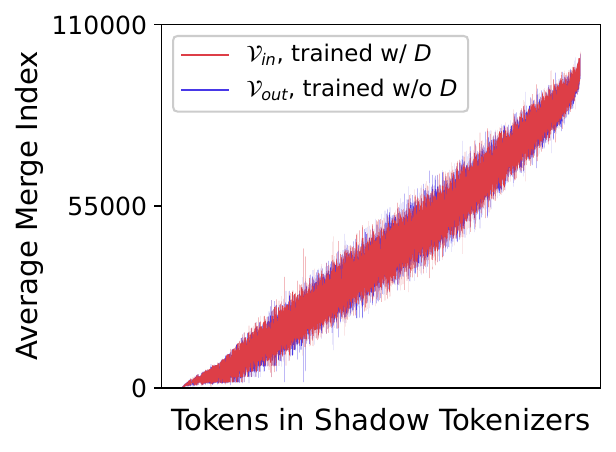}}
    \\
     \subfigure[Vocabulary Size: $140\text{,\,}000$]{\includegraphics[width=0.49\columnwidth]{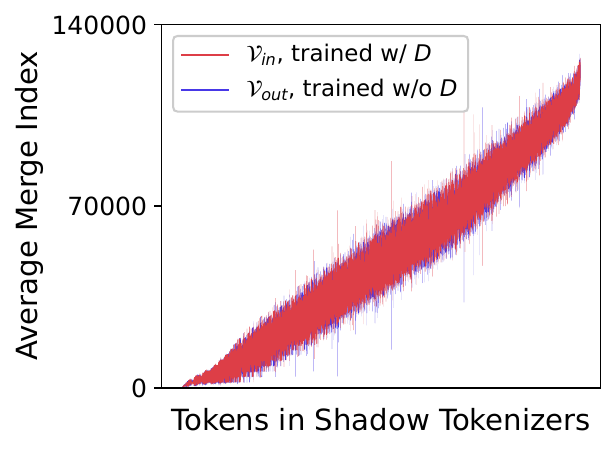}}
    \hspace{0.1em}\subfigure[Vocabulary Size: $170\text{,\,}000$]{\includegraphics[width=0.49\columnwidth]{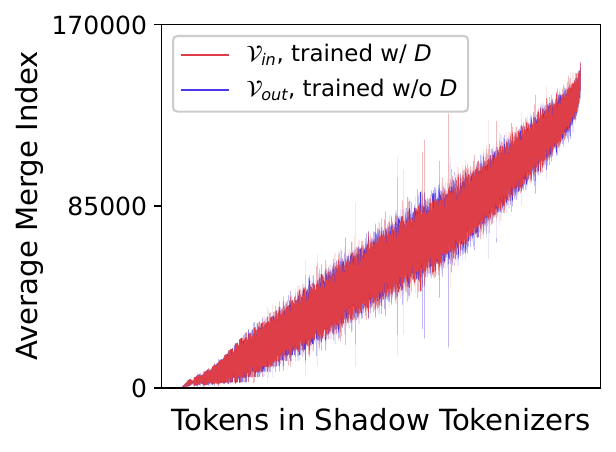}}
\caption{Average merge index for tokens in $\mathcal{V}_\text{in}$ and $\mathcal{V}_\text{out}$. It is shown that overall merge orders in $\mathcal{V}_\text{in}$ and $\mathcal{V}_\text{out}$ resemble.}
    \label{fig: initial}
\end{figure}
We conduct validation experiments for this attack using real-world Internet data~\cite{raffel2020exploring} (see Section~\ref{main exp} for detailed experiments). However, the results demonstrate unsatisfactory
performance of MIA via \msignal in distinguishing between \textit{members} and \textit{non-members}. These results are probably due to the overall distributions of token merge orders in $\mathcal{V}_\text{in}$ and $\mathcal{V}_\text{out}$ resembling each other, as illustrated in Figure~\ref{fig: initial}. The minor discrepancies observed between these distributions suggest weak overfitting signals from the perspective of global tokens. Consequently, the correlation values of Spearman’s rank $\rho(\mathcal{V}_\text{in},\mathcal{V}_\text{target})$ and $\rho(\mathcal{V}_\text{out},\mathcal{V}_\text{target})$ remain highly similar, making it hard for to infer the membership of the dataset $D$.

\subsection{\texorpdfstring{Improved MIA via \vsignal}{Improved MIA via Vocabulary Overlap}}\label{improved MIA}
Building on the observation that global token distribution can obscure overfitting signals from the target dataset $D$, we shift our focus to a more fine-grained analysis. Specifically, we solely examine those distinctive tokens whose merge index differs between the vocabularies $\mathbb{V}_{\text{in}}$ and $\mathbb{V}_{\text{out}}$. Our analysis suggests that only when the tokenizer is trained on dataset $D$, some distinctive tokens in $D$ are more likely to be overfit in its vocabulary. Typically, these distinctive tokens more frequently appear in $\mathbb{V}_{\text{in}}$, but are seldom found in $\mathbb{V}_{\text{out}}$. As a result, there exist minor discrepancies between the vocabularies $\mathbb{V}_{\text{in}}$ and $\mathbb{V}_{\text{out}}$. Leveraging this insight, we propose an improved approach, MIA via \vsignal.

\mypara{Design Intuition} When a target tokenizer $f_{\mathcal{V}_\text{target}}$ is trained on a target dataset $D$, its vocabulary $\mathcal{V}_\text{target}$ is likely to overfit the distinctive tokens present in dataset $D$. In fact, existing analysis has shown that OpenAI's tokenizer contains tokens unique to the Reddit forum~\cite{hn36242914,reddittop}. Building on this, we hypothesize that: the more distinctive tokens from $D$ that are found in $\mathcal{V}_\text{target}$, the more likely it is that $\mathcal{V}_\text{target}$ was trained on $D$. To quantify the overlap of distinctive tokens, one effective approach is to use the Jaccard index, which measures the similarity between two sets by focusing on the presence of shared elements. 
Specifically, an adversary can exploit this by computing the \vsignal using Jaccard index $J$ for pairs $(\mathcal{V}_\text{in},\mathcal{V}_\text{target})$ and $(\mathcal{V}_\text{out},\mathcal{V}_\text{target})$ in terms of these distinctive tokens. We write $\mathcal{V}_{\text{non}}=(\bigcup_{\mathcal{V}_\text{in}\in\mathbb{V}_{\text{in}}} \mathcal{V}_\text{in}) \bigcap (\bigcup_{\mathcal{V}_\text{in}\in\mathbb{V}_{\text{out}}} \mathcal{V}_\text{out})$ to denote the set of non-distinctive tokens.
If the average value for $J(\mathcal{V}_\text{in}\backslash\mathcal{V}_{\text{non}}, \mathcal{V}_\text{target}\backslash\mathcal{V}_{\text{non}})$ is higher than that for $J(\mathcal{V}_\text{out}\backslash\mathcal{V}_{\text{non}}, \mathcal{V}_\text{target}\backslash\mathcal{V}_{\text{non}})$, it is more likely $D\in\mathcal{D}_\textup{mem}$. 

\mypara{Attack Method} We structure this attack in five steps. 
\begin{algorithm}[t]
\begin{algorithmic}[1]
  \REQUIRE Target dataset $D$, vocabulary of target tokenizer $\mathcal{V}_\text{target}$, underlying distribution $\mathbb{D}$, number of shadow tokenizers $N$, BPE algorithm $\mathcal{T}$, threshold $\tau$
  \STATE $\mathbb{V}_\text{in} \gets \{\}$, ~ $\mathbb{V}_\text{out} \gets \{\}$ 
  \STATE \textcolor{gray!61}{\textit{\# Step 1: Train $N$ shadow tokenizers}}
  \FOR{$N$ times}
    \STATE $\mathcal{D}_{\text{aux}} \gets^\$ \mathbb{D}$ \algcomment{randomly sample auxiliary datasets}
    \STATE $\mathcal{V}_\text{in} \gets \mathcal{T}(\mathcal{D}_{\text{aux}}\bigcup\,\{D\})$ \algcomment{train IN tokenizer}
    \STATE $\mathbb{V}_\text{in} \gets \mathbb{V}_\text{in} \cup \{\mathcal{V}_\text{in}\}$
    \STATE $\mathcal{V}_\text{out} \gets \mathcal{T}(\mathcal{D}_{\text{aux}}\setminus \{D\})$ \algcomment{train OUT tokenizer}
    \STATE $\mathbb{V}_\text{out} \gets \mathbb{V}_\text{out} \cup \{\mathcal{V}_\text{out}\}$
  \ENDFOR
 \STATE \textcolor{gray!61}{\textit{\# Step 2: Compute non-distinctive tokens}}
 \STATE $\mathcal{V}_{\text{non}} \gets \left(\bigcup_{\mathcal{V}_\text{in} \in \mathbb{V}_\text{in}} \mathcal{V}_\text{in} \right) \cap \left( \bigcup_{\mathcal{V}_\text{out} \in \mathbb{V}_\text{out}} \mathcal{V}_\text{out} \right)$ 
 \STATE $J_\text{in} \gets 0$, ~ $J_\text{out} \gets 0$, ~ $\tilde{\mathcal{V}}_\text{target} \gets \mathcal{V}_\text{target} \setminus \mathcal{V}_{\text{non}}$
 \STATE \textcolor{gray!61}{\textit{\# Step 3: Calculate Jaccard index}}
  \FOR{each $\mathcal{V}_\text{in} \in \mathbb{V}_\text{in}$} 
    \STATE $\tilde{\mathcal{V}}_\text{in} \gets \mathcal{V}_\text{in} \setminus \mathcal{V}_{\text{non}}$ \algcomment{filter non-distinctive tokens in $\mathcal{V}_\text{in}$}
    \STATE $J_\text{in} \gets J_\text{in} + \dfrac{|\tilde{\mathcal{V}}_\text{in} \cap \tilde{\mathcal{V}}_\text{target}|}{|\tilde{\mathcal{V}}_\text{in} \cup \tilde{\mathcal{V}}_\text{target}|}$ \algcomment{sum Jaccard index in $\mathbb{V}_\text{in}$}
  \ENDFOR
  \FOR{each $\mathcal{V}_\text{out} \in \mathbb{V}_\text{out}$} 
    \STATE $\tilde{\mathcal{V}}_\text{out} \gets \mathcal{V}_\text{out} \setminus \mathcal{V}_{\text{non}}$ \algcomment{filter non-distinctive tokens in $\mathcal{V}_\text{out}$}
    \STATE $J_\text{out} \gets J_\text{out} + \dfrac{|\tilde{\mathcal{V}}_\text{out} \cap \tilde{\mathcal{V}}_\text{target}|}{|\tilde{\mathcal{V}}_\text{out} \cup \tilde{\mathcal{V}}_\text{target}|}$ \algcomment{sum Jaccard index in $\mathbb{V}_\text{out}$}
  \ENDFOR
  \STATE \textcolor{gray!61}{\textit{\# Step 4: Compute membership signal}}
  \STATE $\textsc{Signal} \gets \dfrac{1}{2} + \dfrac{J_\text{in}}{2|\mathbb{V}_\text{in}|} - \dfrac{J_\text{out}}{2|\mathbb{V}_\text{out}|} $
  \STATE \textcolor{gray!61}{\textit{\# Step 5: Infer the membership}}
  \RETURN $\mathbbm{1}\left[\textsc{Signal} > \tau\right]$ 
\end{algorithmic}
\caption{\textbf{MIA via \vsignal.} 
We train shadow tokenizers with and without the target dataset $D$, filter out non-distinctive tokens, and compute the membership signal. If the signal is larger than a threshold $\tau$, the dataset $D$ is inferred as a member. Otherwise, it is inferred as a non-member.
}
\label{alg: signal v}
\end{algorithm}

\begin{enumerate}[label={(\roman*)}, leftmargin=1.8em, labelsep=0.3em]
    \item The adversary randomly samples a collection of datasets $\mathcal{D}_\textup{aux}\leftarrow \mathbb{D}$ for $N$ times, and trains $N$ shadow tokenizers considering inclusion or exclusion of the target dataset $D$. This process results in vocabulary sets $\mathbb{V}_{\text{in}}$ and $\mathbb{V}_{\text{out}}$.
    \item The adversary computes the non-distinctive tokens as:
    \begin{equation}\label{non-dis}
        \mathcal{V}_{\text{non}}=(\bigcup_{\mathcal{V}_\text{in}\in\mathbb{V}_{\text{in}}}\hspace{-0.5em} \mathcal{V}_\text{in})\hspace{0.1em} \bigcap \hspace{0.1em}(\bigcup_{\mathcal{V}_\text{in}\in\mathbb{V}_{\text{out}}}\hspace{-0.5em} \mathcal{V}_\text{out}).
    \end{equation}
    \item  The adversary calculates the overfitting signals using the Jaccard index~\cite{bag2019efficient} for each $J(\mathcal{V}_\text{in}\backslash\mathcal{V}_{\text{non}}, \mathcal{V}_\text{target}\backslash\mathcal{V}_{\text{non}})$ and $J(\mathcal{V}_\text{out}\backslash\mathcal{V}_{\text{non}}, \mathcal{V}_\text{target}\backslash\mathcal{V}_{\text{non}})$.
    \item The membership signal for dataset $D$ is defined as:
    \begin{align} 
      \hspace{-2.2em}  \frac{1}{2} &+ \frac{ \sum_{\mathcal{V}_\text{in}\in\mathbb{V}_{\text{in}}}{J(\mathcal{V}_\text{in}\backslash\mathcal{V}_{\text{non}}, \mathcal{V}_\text{target}\backslash\mathcal{V}_{\text{non}})}}{2|\mathbb{V}_{\text{in}}|} \notag \\ &- \frac{ \sum_{\mathcal{V}_\text{out}\in\mathbb{V}_{\text{out}}}J(\mathcal{V}_\text{out}\backslash\mathcal{V}_{\text{non}}, \mathcal{V}_\text{target}\backslash\mathcal{V}_{\text{non}})}{2|\mathbb{V}_{\text{out}}|},
    \end{align}
    where it ranges from 0 to 1. 
    \item If the membership signal is larger than a decision-making threshold $\tau$, output 1 (\textit{member}). Otherwise, output 0.
\end{enumerate}

The detailed process of this attack is outlined in Algorithm~\ref{alg: signal v}. However, like other shadow-based MIAs~\cite{shokri2017membership,ye2022enhanced,carlini2022membership}, we find that MIA via \vsignal requires multiple shadow tokenizers (\eg 96) to effectively capture membership signals. As a result, training such a large number of shadow tokenizers incurs a substantial time cost.

\subsection{\texorpdfstring{Efficient MIA via \fsignal}{Efficient MIA via Frequency Estimation}}\label{sec: signal f}
\begin{figure}[t]
    \centering
    \hspace{-0.4em}\subfigure[Vocabulary Size: $80\text{,\,}000$]{\includegraphics[width=0.49\columnwidth]{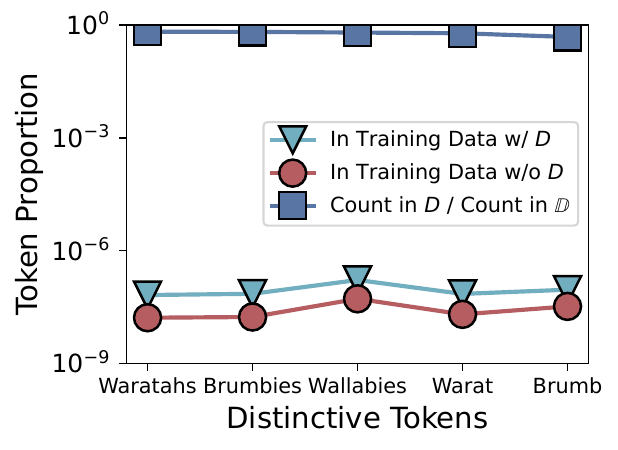}}
    \hspace{0.1em}
    \subfigure[Vocabulary Size: $140\text{,\,}000$]{\includegraphics[width=0.49\columnwidth]{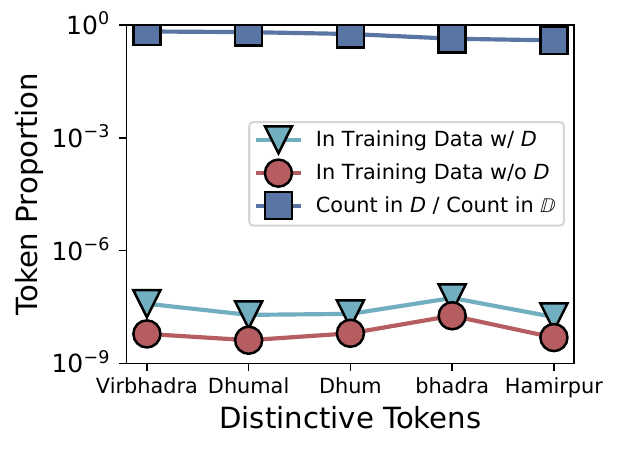}}
\caption{Distinctive tokens in MIA via \vsignal.}
    \label{fig: diff}
\end{figure}
MIA via \vsignal raises a natural question: Can we design an attack that relies on fewer shadow tokenizers and thus reduces the overall time cost? To address this, we investigate whether it is possible to identify distinctive tokens directly by analyzing their statistical characteristics. Motivated by this, we examine such distinctive tokens and derive two key insights: \ding{182} The distinctive tokens of dataset $D$ appear infrequently in the training data of the tokenizer trained on $D$. \ding{183} As shown in Figure~\ref{fig: diff} and Figure~\ref{fig: full diff}, the majority of occurrences of these distinctive tokens in the underlying distribution $\mathbb{D}$ are found in the dataset $D$. 
Given these characteristics, if dataset $D$ is excluded from the tokenizer’s training data, the frequency of such distinctive tokens becomes lower. As a result, these tokens with low frequency are unlikely to be merged into the vocabulary during tokenizer training, since BPE primarily merges the most frequent tokens.
This observation suggests that including dataset $D$ in the training data is almost a necessary condition for some tokens to be merged into the tokenizer's vocabulary. Motivated by these insights, we introduce the MIA via \fsignal.

\mypara{Design Intuition} It is hypothesized that tokenizer training probably exhibits overfitting by incorporating distinctive tokens from the training datasets into its vocabulary~\cite{hn36242914,reddittop}. Building on this intuition, an adversary could exploit such overfitting by evaluating whether including dataset $D$ in the training data is necessary for the merging of some tokens in vocabulary $\mathcal{V}_\text{target}$. If the presence of such distinctive tokens in $\mathcal{V}_\text{target}$ strongly depends on dataset $D$, it is likely $D\in\mathcal{D}_\textup{mem}$.

\mypara{Necessity Evaluation} However, no existing metric evaluates this necessity. To fill this gap, we introduce a new metric: Relative Token Frequency with Self-information (RTF-SI).

\begin{definition}[\textbf{Relative Token Frequency with Self-information}]
Let $\mathbb{D}$ denote a data distribution. Given a dataset $D\subseteq \mathbb{D}$ and a target tokenizer's vocabulary $\mathcal{V}_\text{target}$, the Relative Token Frequency with Self-Information (RTF-SI) of a token $t_i\in \mathcal{V}_\text{target}$ in $D$ is defined as:
\begin{equation}
    \mathrm{RTF\text{-}SI}(D, t_i,  \mathcal{V}_\text{target}) := \operatorname{RTF}(t_i, D) \cdot \operatorname{SI}(t_i, \mathcal{V}_\text{target}),
\end{equation}
where the relative token frequency (RTF) is calculated as:
\begin{equation}
    \operatorname{RTF}(t_i,D) = \frac{n_D(t_i)}{\sum_{D' \in \mathbb{D}} n_{D'}(t_i)},
\end{equation}
with $n_D(t_i)$ denoting the count of token $t_i$ in the dataset $D$. The self-information (SI) is given by:
\begin{equation}
    \operatorname{SI}(t_i, \mathcal{V}_\text{target}) = -\log \Pr\left(t_i \mid \mathcal{V}_\text{target}\right),
\end{equation}
where $\Pr\left(t_i \mid \mathcal{V}_\text{target}\right)$ is the frequency of token $t_i$ appearing in the training data $\mathcal{D}_\textup{mem}$ associated with the vocabulary $\mathcal{V}_\text{target}$. Ideally, this probability is computed as:
\begin{equation}
    \Pr\left(t_i \mid \mathcal{V}_\text{target}\right) = \frac{\sum_{D' \in \mathcal{D}_\textup{mem}} n_{D'}(t_i)}{\sum_{t' \in \mathcal{V}_\text{target}} \sum_{D' \in \mathcal{D}_\textup{mem}} n_{D'}(t')}.
\end{equation}
\end{definition}

\begin{figure}[t]
    \centering
    \hspace{-0.4em}\subfigure[Vocabulary Size: $80\text{,\,}000$]{\includegraphics[width=0.49\columnwidth]{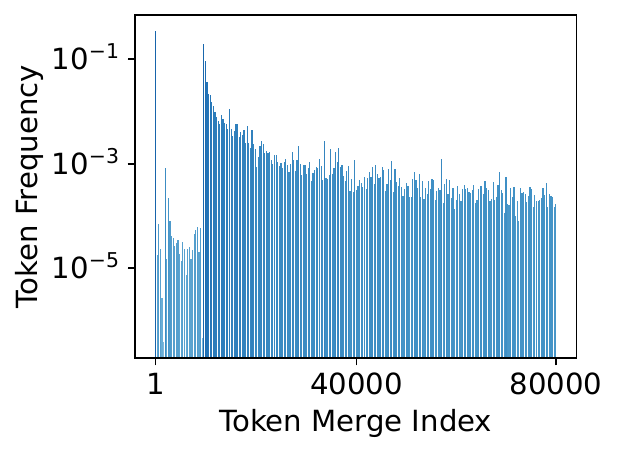}}
    \hspace{0.1em}
    \subfigure[Vocabulary Size: $110\text{,\,}000$]{\includegraphics[width=0.49\columnwidth]{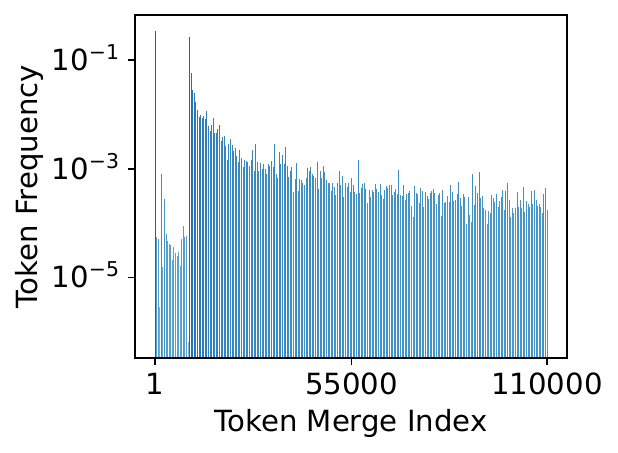}}\\
    \hspace{-0.4em}\subfigure[Vocabulary Size: $140\text{,\,}000$]{\includegraphics[width=0.49\columnwidth]{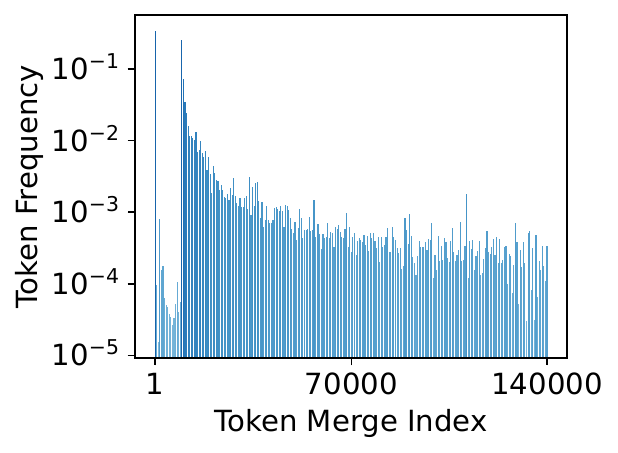}}
    \hspace{0.1em}
    \subfigure[Vocabulary Size: $170\text{,\,}000$]{\includegraphics[width=0.49\columnwidth]{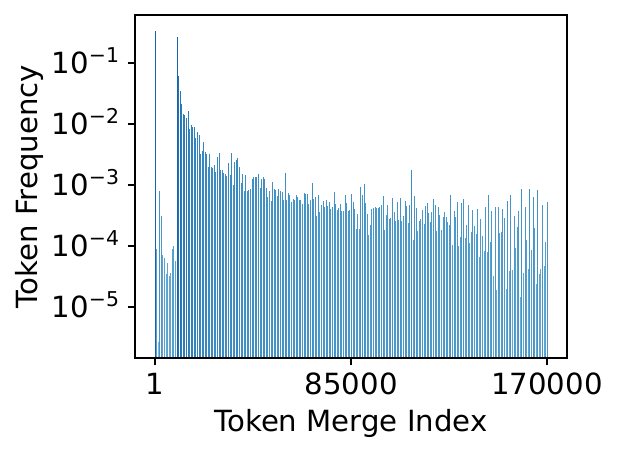}}
\caption{Relationship between token merge index and frequency in training data, indicating they follow a power law.}
    \label{fig: merge}
\end{figure}

RTF-SI evaluates whether it is necessary to include dataset $D$ in the training data for constructing the target vocabulary, $\mathcal{V}_\text{target}$. Building on the classic TF-IDF definition~\cite{ramos2003using,aizawa2003information,sparck1972statistical}, RTF-SI modifies the normalization used in the TF component. A high RTF-SI for the target dataset $D$ suggests that:
\ding{182} Token $t_i$ is with high self-information, \ie it appears infrequently in the training data of the target tokenizer.
\ding{183} Dataset $D$ contributes the majority of token $t_i$'s relative frequency in the underlying distribution $\mathbb{D}$.
As a result, the absence of dataset $D$ may significantly affect some merged tokens in vocabulary $\mathcal{V}_\text{target}$, indicating that $D\in\mathcal{D}_\textup{mem}$.

\begin{table}[t!]
    \centering
    \caption{Power-law fit on token frequency in training data.}
    \resizebox{0.9\columnwidth}{!}{
    \begin{tabular}{l|ccccc}
    \toprule
    \rule{0pt}{2.5ex} $|\mathcal{V}_\text{target}|$ & 80{,\,}000 & 110{,\,}000 & 140{,\,}000 & 170{,\,}000 & 200{,\,}000 \\
    \midrule
    $x_\textup{min}$          & 9{,\,}782 &  9{,\,}782 &  9{,\,}782 &  9{,\,}782 &  9{,\,}782 \\
    $\alpha$                  & 3.831 & 3.409 & 3.111 & 2.914 & 2.764 \\
    Std. Error                  & 0.001 & 0.001 & 0.002 & 0.001 & 0.001 \\
    \bottomrule
    \end{tabular}}
    \label{fit powerlaw}
%\vspace{-0.5em}
\end{table}

\mypara{Frequency Estimation}
In practical implementation, an adversary can estimate the RTF component using auxiliary datasets $\mathcal{D}_{\text{aux}}\gets\mathbb{D}$. However, the SI component is not directly observable, as the frequency $\Pr\left(t_i \mid \mathcal{V}_\text{target}\right)$ in the training data is not available. To estimate this, we draw on the power law~\cite{clauset2009power}, which has been widely applied to approximate word frequency. According to the power law, when a list of measured values exceeding a threshold $x_\textup{min}\in \mathbb{Z}_{> 0}$ is sorted in decreasing order, the $n$-th value is approximately proportional to $1/n^\alpha$, where $\alpha\in\mathbb{R}_{> 0}$ is a constant. As shown in Figure~\ref{fig: merge}, there is a power-law relationship between token merge order and frequency. For rigorous verification, we fit the frequency $\Pr\left(t_i \mid \mathcal{V}_\text{target}\right)$ in a power-law distribution~\cite{clauset2009power}:
\begin{equation}\label{eq: pw}
\Pr(t_i \mid \mathcal{V}_\text{target}) \propto \frac{1}{i^\alpha},~\text{where}~t_i \in \mathcal{V}_\text{target}~\text{and}~i > x_\textup{min}.
\end{equation}
The estimation results in Table~\ref{fit powerlaw} show a small standard error between the estimated and actual values, which supports using the power law to approximate the value $\Pr\left(t_i \mid \mathcal{V}\right)$ in the SI component. Given this, RTF-SI can also be computed as:
\begin{algorithm}[t!]
\begin{algorithmic}[1]
  \REQUIRE Target dataset $D$, vocabulary of target tokenizer $\mathcal{V}_\text{target}$, underlying distribution $\mathbb{D}$, sampling times $N$, BPE algorithm $\mathcal{T}$, power-law fit function \texttt{pl.fit}, threshold $\tau$
  \STATE $\tilde{\mathbb{D}} \gets \{\}$
  \STATE \textcolor{gray!61}{\textit{\# Step 1: Prepare for frequency estimation}}
  \FOR{$N$ times}
    \STATE $\mathcal{D}_{\text{aux}} \gets^\$ \mathbb{D}$ \algcomment{randomly sample auxiliary datasets}
    \STATE $\tilde{\mathbb{D}} \gets \tilde{\mathbb{D}} \cup \mathcal{D}_{\text{aux}}$
  \ENDFOR
 \STATE $\mathcal{V}_\text{shadow}\gets\mathcal{T}(\mathcal{D}_{\text{aux}})$\algcomment{train shadow tokenizer}
 \STATE \textcolor{gray!61}{\textit{\# Step 2: Estimate components in RTF-SI}}
 \STATE $\alpha,\,x_\textup{min}\gets\text{\texttt{pl.fit}}(\mathcal{V}_\text{shadow},\mathcal{D}_{\text{aux}})$ \algcomment{fit token frequency}
 \FOR{$i = x_\textup{min}+1$ to $|\mathcal{V}_\text{target}|$}
    \STATE $\operatorname{RTF}(D, t_i)\gets\dfrac{n_D(t_i)}{\sum_{D' \in \tilde{\mathbb{D}}\bigcup\{D\}} n_{D'}(t_i)}$
    \STATE $\operatorname{SI}(t_i, \mathcal{V}_\text{target}) \gets \displaystyle\log(\hspace{-0.5em}\sum_{j=x_\textup{min}+1}^{|\mathcal{V}_\text{target}|} \frac{i^\alpha}{j^\alpha})$ \algcomment{apply Theorem 4.2}
 \ENDFOR
 
 \STATE \textcolor{gray!61}{\textit{\# Step 3: Compute membership signal}}
 \FOR{$i = x_\textup{min}+1$ to $|\mathcal{V}_\text{target}|$}
    \STATE $\mathrm{RTF\text{-}SI}(D, t_i,  \mathcal{V}_\text{target}) \gets \operatorname{RTF}(t_i, D) \cdot \operatorname{SI}(t_i, \mathcal{V}_\text{target})$
\ENDFOR
\STATE $\mathrm{RTF\text{-}SI}_{\max}\gets \max_{t_i\in\mathcal{V}_\text{target},\,i > x_{\text{min}}} \mathrm{RTF\text{-}SI}(D, t_i,  \mathcal{V}_\text{target})$ 
  \STATE $\textsc{Signal} \gets \dfrac{1}{1 + e^{-{\mathrm{RTF\text{-}SI}_{\max}}}}$ \algcomment{normalize by sigmoid}
  \STATE \textcolor{gray!61}{\textit{\# Step 4: Infer the membership}}
  \RETURN $\mathbbm{1}\left[\textsc{Signal} > \tau\right]$ 
\end{algorithmic}
\caption{\textbf{MIA via \fsignal.} 
We train a shadow tokenizer to fit the power-law distribution of token frequency, approximate RTF-SI for each token $t_i\in\mathcal{V}_\text{target}$ where $i > x_{\text{min}}$, and compute the membership signal based on the maximum RTF-SI. If the membership signal is larger than a decision-making threshold $\tau$, the dataset $D$ is inferred as a member. Otherwise, it is inferred as a non-member.
}
\label{alg: signal f}
\end{algorithm}

\begin{theorem}[\textbf{RTF-SI under the Power Law}]\label{theo}
Under the power-law distribution~\textup{\cite{clauset2009power}}, the frequency $\Pr(t_i \mid \mathcal{V}_\text{target})$ of a token $t_i \in \mathcal{V}_\text{target}$ is proportional to $1/i^\alpha$:
\begin{equation}
\Pr(t_i \mid \mathcal{V}_\text{target}) \propto \frac{1}{i^\alpha},
\end{equation}
where $i > x_\textup{min}$, and $\alpha\in\mathbb{R}_{> 0}$\,, $x_\textup{min}\in\mathbb{Z}_{> 0}$ are constants defined by the power law.
Then, RTF-SI can be approximated by its lower bound:
\begin{equation}
\operatorname{RTF\text{-}SI}( D, t_i, \mathcal{V}_\text{target}) \geq \frac{n_D(t_i)}{\sum_{D' \in \mathbb{D}} n_{D'}(t_i)} \cdot \log (\hspace{-0.9em}\sum_{j=x_\textup{min}+1}^{|\mathcal{V}_\text{target}|}\hspace{-0.9em}{i^\alpha}/{j^\alpha}).
\end{equation}
\end{theorem}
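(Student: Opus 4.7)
The plan is to isolate the two factors of $\operatorname{RTF\text{-}SI}$ and observe that the $\operatorname{RTF}$ factor already equals $n_D(t_i)/\sum_{D'\in\mathbb{D}} n_{D'}(t_i)$ by definition, so the entire argument reduces to producing a lower bound on $\operatorname{SI}(t_i,\mathcal{V}_\text{target}) = -\log \Pr(t_i \mid \mathcal{V}_\text{target})$, which in turn requires an \emph{upper} bound on $\Pr(t_i \mid \mathcal{V}_\text{target})$ that is computable from $\alpha$, $x_{\textup{min}}$, and $|\mathcal{V}_\text{target}|$ alone. Since the $\operatorname{RTF}$ factor is non-negative, the final inequality then follows by multiplication.

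First, I would invoke the power-law hypothesis to write $\Pr(t_i \mid \mathcal{V}_\text{target}) = C/i^\alpha$ for every $i > x_{\textup{min}}$, where $C > 0$ is the unknown normalizing constant that accounts for the truncation of the tail and for the probability mass residing on the head tokens $t_1,\dots,t_{x_{\textup{min}}}$. The key constraint I would exploit is that the total mass on the tail cannot exceed one:
\begin{equation}
\sum_{j=x_{\textup{min}}+1}^{|\mathcal{V}_\text{target}|} \Pr(t_j \mid \mathcal{V}_\text{target}) \;=\; C \sum_{j=x_{\textup{min}}+1}^{|\mathcal{V}_\text{target}|} \frac{1}{j^\alpha} \;\leq\; 1.
\end{equation}
Solving this inequality for $C$ and substituting back gives the pointwise bound
\begin{equation}
\Pr(t_i \mid \mathcal{V}_\text{target}) \;\leq\; \frac{1/i^\alpha}{\sum_{j=x_{\textup{min}}+1}^{|\mathcal{V}_\text{target}|} 1/j^\alpha} \;=\; \frac{1}{\sum_{j=x_{\textup{min}}+1}^{|\mathcal{V}_\text{target}|} i^\alpha/j^\alpha},
\end{equation}
valid for every $i > x_{\textup{min}}$. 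Taking negative logarithms (monotone on positive reals) flips the inequality and yields the required lower bound
$\operatorname{SI}(t_i, \mathcal{V}_\text{target}) \geq \log\bigl(\sum_{j=x_{\textup{min}}+1}^{|\mathcal{V}_\text{target}|} i^\alpha/j^\alpha\bigr)$.

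Finally, I would multiply both sides by the non-negative quantity $\operatorname{RTF}(t_i,D) = n_D(t_i)/\sum_{D' \in \mathbb{D}} n_{D'}(t_i)$, which preserves the inequality and produces exactly the bound stated in the theorem. The only subtlety to flag in a careful write-up is the truncation step: the normalization has to be taken over the tail $\{x_{\textup{min}}+1, \dots, |\mathcal{V}_\text{target}|\}$ rather than over all of $\mathcal{V}_\text{target}$, because the power law is only assumed to hold above $x_{\textup{min}}$; using the tail sum gives the cleanest and tightest computable bound, and the head tokens only contribute additional probability mass that makes the inequality $C \sum 1/j^\alpha \le 1$ strictly hold. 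I do not anticipate a substantial obstacle — the argument is essentially a truncated Zipf-tail normalization combined with monotonicity of the logarithm — so the proof should be short and largely algebraic.
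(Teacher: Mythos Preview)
Your proposal is correct and follows essentially the same argument as the paper: write $\Pr(t_i\mid\mathcal{V}_\text{target})=C/i^\alpha$, bound $C$ from above via the tail-normalization constraint $\sum_{j>x_{\min}} C/j^\alpha \le 1$, take negative logarithms to lower-bound $\operatorname{SI}$, and multiply by the non-negative $\operatorname{RTF}$ factor. Your added remark about why truncating at $x_{\min}$ is the right move (the head tokens only absorb additional mass) is a nice clarification the paper's proof leaves implicit.
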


The detailed proof of Theorem~\ref{theo} is provided in Appendix~\ref{a: proof}. The power law allows an adversary to estimate RTF-SI without directly accessing the frequency $\Pr\left(t_i \mid \mathcal{V}_\text{target}\right)$. Since the power law estimates the frequency of tokens $t_i\in\mathcal{V}_\text{target}$ with merge index $i > x_{\text{min}}$, MIA via \fsignal also concentrates on them.

\mypara{Attack Method} We outline this attack in four steps below.

\begin{enumerate}[label={(\roman*)}, leftmargin=1.8em, labelsep=0.3em]
    \item The adversary randomly samples a collection of datasets $\mathcal{D}_\textup{aux}\leftarrow \mathbb{D}$ $N$ times, comprising a set $\tilde{\mathbb{D}}$. Then, the adversary trains a shadow tokenizer $f_{\mathcal{V}_\text{shadow}}$ using a $\mathcal{D}_\textup{aux} \subseteq \tilde{\mathbb{D}}$. 
    \item The adversary fits the power-law distribution in Equation~\ref{eq: pw} using the vocabulary $\mathcal{V}_\text{shadow}$ and its training data. For each token $t_i\in\mathcal{V}_\text{target}$ where $i > x_{\text{min}}$, the adversary approximate its $\operatorname{RTF}(D, t_i)$ on set $\tilde{\mathbb{D}}\bigcup\{D\}$, and estimate its $\operatorname{SI}(t_i, \mathcal{V}_\text{target})$ via the fitted power-law distribution.
    \item Based on the Theorem~\ref{theo}, the membership signal for target dataset $D$ is defined as follows:
    \begin{equation}\label{eq: signal f}
    \hspace{-2.0 em}  \sigma\big( \max_{t_i\in\mathcal{V}_\text{target},\,i > x_{\text{min}}} \underbrace{\frac{n_D(t_i)}{\sum_{D' \in \tilde{\mathbb{D}}\bigcup\{D\}} n_{D'}(t_i)} }_{\operatorname{RTF}(D, t_i)}\cdot \underbrace{\log (\hspace{-0.9em}\sum_{j=x_\textup{min}+1}^{|\mathcal{V}_\text{target}|}\hspace{-0.9em}{i^\alpha}/{j^\alpha})}_{\operatorname{SI}(t_i, \mathcal{V}_\text{target})}\hspace{-0.2em}\big),
    \end{equation}
    where $\sigma$ denotes the sigmoid function~\cite{rumelhart1986learning}. Thereby, the value of Equation~\ref{eq: signal f} ranges from 0 to 1.
    \item If the membership signal is larger than a decision-making threshold $\tau$, output 1 (\textit{member}). Otherwise, output 0.
\end{enumerate}

The detailed process of this attack is shown in Algorithm~\ref{alg: signal f}. The membership signal for MIA via \fsignal is defined as the maximum RTF-SI value for the target dataset $D$. Specifically, if including dataset $D$ in the training data is necessary for at least one token $t_i$ to be merged into the vocabulary $\mathcal{V}_\text{target}$, it suggests that the absence of dataset $D$ has a significant influence on the already constructed vocabulary. Consequently, it is likely that dataset $D$ is a \textit{member}.

\section{Attack Evaluation}\label{mia_evaluation}
In this section, we first introduce the experimental settings. Next, we develop two shadow-free membership inference methods to serve as additional exploration and baselines. Finally, we present the evaluation results.
\subsection{Experimental Setup}\label{setup}
\mypara{Datasets} According to disclosures from existing LLMs~\cite{touvron2023llama,biderman2023pythia,guo2025deepseek}, the training data for tokenizers is primarily sourced from publicly available web content. To ensure a realistic evaluation of our attacks, we therefore utilize real-world web data from the \textit{C4} corpus~\cite{raffel2020exploring} in our evaluation. Specifically, the \textit{C4} corpus is created by cleaning and filtering web pages from Common Crawl, widely used for training and evaluating natural language processing models. In our evaluation, we utilize 1,681,296 web pages across 4,133 websites (i.e., $\mathbb{D}$) from the \textit{C4} corpus, with each website treated as a dataset $D$. 
% Figure~\ref{fig: dataset distribution} shows a histogram of sample counts per dataset. 

\begin{figure*}[t]
\centering
\includegraphics[width=1\textwidth]{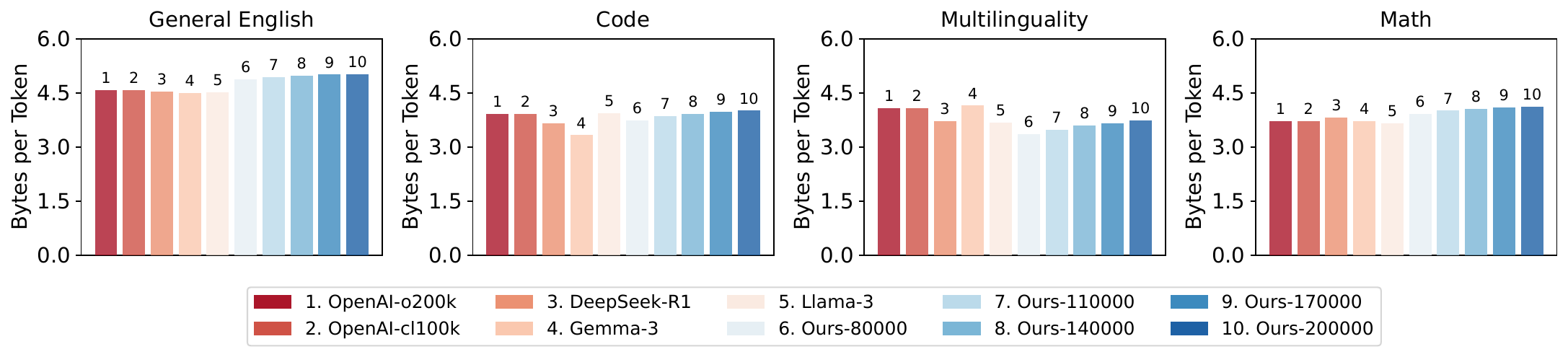}
\caption{Comparison of tokenizer utility based on the metric of bytes per token. Specifically, “Ours-80000” refers to our trained target tokenizer with a vocabulary size of 80,000 tokens. The above experimental results indicate that the utility performance of the target tokenizers utilized in our evaluations is comparable to that of tokenizers used in state-of-the-art LLMs~\cite{openai2025tiktoken, karter2024gemini, guo2025deepseek}.}
\label{fig: cmp tokenizers}
\end{figure*}
\begin{figure}[t]
\centering
\includegraphics[width=0.8\columnwidth]{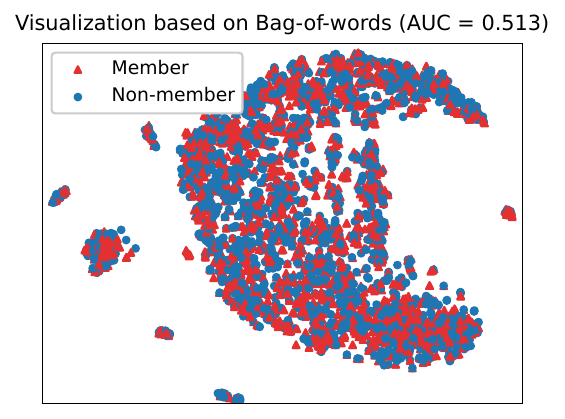}
\caption{Visualization of the test set using t-SNE~\cite{maaten2008visualizing}. The results confirm no distribution shifts in our evaluation.}
\label{fig: blind mia}
\end{figure}

\mypara{Tokenizer Training} Following prior work~\cite{carlini2022membership,shokri2017membership}, we randomly select half of the datasets in $\mathbb{D}$ to serve as training data. Consistent with DeepSeek~\cite{bi2024deepseek}, we train the target tokenizers using the HuggingFace library. The tokenizer vocabulary size ranges from 80,000 to 200,000 tokens, with the upper bound matching that of OpenAI’s latest tokenizer, o200k~\cite{openai2025tiktoken}.
For MIA via \msignal and MIA via \vsignal, the adversary trains 96 shadow tokenizers. For MIA via \fsignal, the adversary samples the auxiliary datasets 10 times to compose a set $\tilde{\mathbb{D}}$.

\mypara{Verification of No Distribution Shifts} As discussed in previous works~\cite{das2025blind,duanmembership,meeus2024sok}, distribution shifts between \textit{members} and \textit{non-members} can invalidate the evaluation of MIAs.
In such scenarios, an evaluator~\cite{das2025blind,meeus2024sok} can exploit bag-of-words features extracted from test samples and train a random forest classifier to detect whether a sample was part of the training data, even without access to the target model.
To ensure that there is no distribution shift in our evaluation, we follow the methodology of prior work~\cite{das2025blind} by training a random forest classifier and leveraging the bag-of-words features to distinguish between \textit{members} and \textit{non-members}. The experimental results are illustrated in Figure~\ref{fig: blind mia}.
The AUC score of 0.513 via this approach to distinguish \textit{members} and \textit{non-members} confirms the absence of distribution shifts in our evaluations.

\mypara{Comparison to Commercial Tokenizers} To assess how well our trained target tokenizers in experimental settings mimic the commercial tokenizers in the real world, we compare their utility performance using a standard compression metric: the bytes per token~\cite{liu-etal-2025-superbpe}. Specifically, following prior work~\cite{slagle2024spacebyte,dagan2024getting}, we compute the ratio of UTF-8 bytes in a given text to the number of tokens generated by the tokenizer. A higher score is desired. We conduct this evaluation across widely used benchmarks: general English (\textit{WikiText-103}~\cite{merity2016pointer}), code (\textit{GitHub Code}~\cite{codeparrot_github_code_2025}), multilingual content (\textit{MGSM}~\cite{saparov2023language}), and mathematics (\textit{GPQA}~\cite{rein2024gpqa}). As shown in Figure~\ref{fig: cmp tokenizers}, the utility performance of our trained tokenizers is comparable to that of the commercial tokenizers~\cite{openai2025tiktoken,karter2024gemini,guo2025deepseek,touvron2023llama}. 
\begin{table*}[t]
    \centering
    \caption{Comparison of MIAs against target tokenizers. Here, BA denotes the metric of balanced accuracy, and TPR refers to the metric of TPR @ 1.0\% FPR. The bold values indicate the best performance, while the underlined values denote the second-best. It is observed that MIA via \vsignal and MIA via \fsignal outperform other baseline methods.}

    \resizebox{1\textwidth}{!}{
    \begin{tabular}{@{}l@{\hskip 8pt}
                    c@{\hskip 8pt}
                    c@{\hskip 8pt}
                    ccc ccc ccc ccc ccc@{}}
 
        & \raisebox{1.8ex}{\multirow{3}{*}{\rotatebox{90}{\parbox[c]{1.4cm}{\centering \footnotesize Shadow Tokenizers}}}}
        & \raisebox{1.8ex}{\multirow{3}{*}{\rotatebox{90}{\parbox[c]{1.4cm}{\centering \footnotesize Auxiliary Datasets}}}}
        & \multicolumn{3}{c}{\raisebox{0.3ex}{\textbf{$|\mathcal{V}_\text{target}|=80{,}000$}}} 
        & \multicolumn{3}{c}{\raisebox{0.3ex}{\textbf{$|\mathcal{V}_\text{target}|=110{,}000$}}} 
        & \multicolumn{3}{c}{\raisebox{0.3ex}{\textbf{$|\mathcal{V}_\text{target}|=140{,}000$}}} 
        & \multicolumn{3}{c}{\raisebox{0.3ex}{\textbf{$|\mathcal{V}_\text{target}|=170{,}000$}}}  
        & \multicolumn{3}{c}{\raisebox{0.3ex}{\textbf{$|\mathcal{V}_\text{target}|=200{,}000$}}}  \\
        \cmidrule(l{1pt}r{13pt}){4-6} \cmidrule(l{6pt}r{13pt}){7-9} 
        \cmidrule(l{6pt}r{13pt}){10-12} \cmidrule(l{6pt}r{13pt}){13-15} \cmidrule(l{6pt}r{0pt}){16-18}
       \textbf{Attack Approach} & & 
        & AUC & BA & TPR
        & AUC & BA & TPR 
        & AUC & BA & TPR 
        & AUC & BA & TPR 
        & AUC & BA & TPR \\
        \midrule
    \csignal
    &\Circle &\Circle
    & 0.507 & 0.513 & 0.72\% 
    & 0.509 & 0.517 & 0.71\% 
    & 0.508 & 0.514 & 0.68\%  
    & 0.508 & 0.513 & 0.82\%  
    & 0.509 & 0.516 & 0.77\%  \\
    \makecell{\nsignal, $k\!=\!20{,}000$}
     &\Circle &\CIRCLE 
    & 0.534 & 0.526 & 3.78\% 
    & 0.526 & 0.524 & 3.44\% 
    & 0.535 & 0.533 & 4.11\% 
    & 0.546 & 0.538 & 5.86\%  
    & 0.564 & 0.551 & 8.03\% \\
    \makecell{\nsignal, $k\!=\!40{,}000$}
    &\Circle &\CIRCLE 
    & 0.543 & 0.530 & 5.18\% 
    & 0.546 & 0.533 & 6.10\% 
    & 0.542 & 0.537 & 5.95\% 
    & 0.550 & 0.542 & 7.60\%  
    & 0.572 & 0.557 & 10.70\% \\
    \makecell{\nsignal, $k\!=\!60{,}000$}
    &\Circle &\CIRCLE 
    & 0.543 & 0.530 & 2.22\% 
    & 0.551 & 0.538 & 5.57\% 
    & 0.543 & 0.536 & 2.80\%  
    & 0.553 & 0.545 & 4.94\%  
    & 0.572 & 0.557 & 8.86\%  \\

    \msignal
    &\CIRCLE &\CIRCLE 
    & 0.493 & 0.509 & 1.06\% 
    & 0.494 & 0.507 & 1.02\% 
    & 0.494 & 0.508 & 0.97\%  
    & 0.495 & 0.506 & 0.92\%  
    & 0.495 & 0.508 & 0.87\%  \\
    \vsignal
    &\CIRCLE &\CIRCLE 
    &\textbf{0.693} & \textbf{0.666} & \textbf{26.77\% }
    & \textbf{0.718} & \textbf{0.672} & \textbf{28.75\%} 
    & \textbf{0.736 }& \textbf{0.696} & \textbf{29.72\% } 
    & \textbf{0.761} & \textbf{0.709} & \textbf{32.53\%}  
    & \textbf{0.771} & \textbf{0.711} & \textbf{34.61\%}  \\
    \fsignal
    &\CIRCLE &\CIRCLE 
    & \underline{0.610} & \underline{0.614} & \underline{21.30\%} 
    & \underline{0.645} & \underline{0.641} & \underline{22.00\%} 
    & \underline{0.676} & \underline{0.660} & \underline{22.41\%}  
    & \underline{0.707} & \underline{0.681} & \underline{25.61\% } 
    & \underline{0.740} & \underline{0.705} & \underline{27.88\%}  \\
    \bottomrule
    \end{tabular}
    }
    \label{tab:attack-full-metrics}
\end{table*}
 
\begin{figure*}[t!]
     \centering   
    \subfigure[Vocabulary Size: 80,000]{\includegraphics[width=0.32\textwidth]{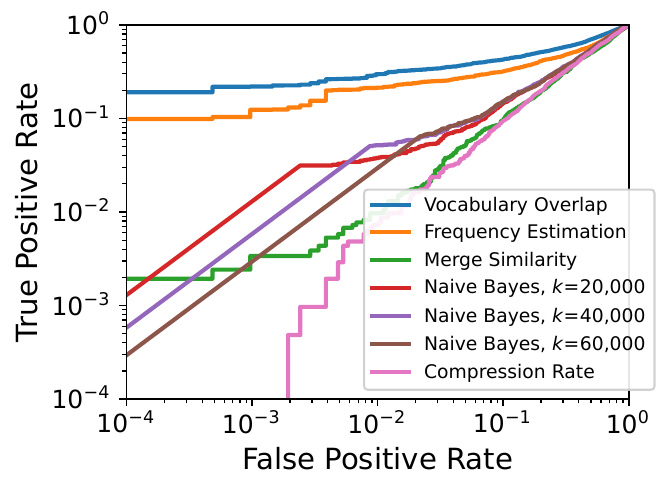}}
    \subfigure[Vocabulary Size: 140,000]{\includegraphics[width=0.32\textwidth]{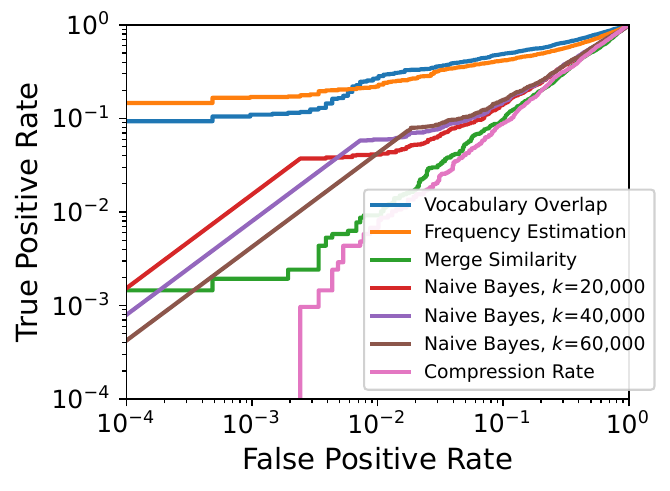}}
        \subfigure[Vocabulary Size: 200,000]{\includegraphics[width=0.32\textwidth]{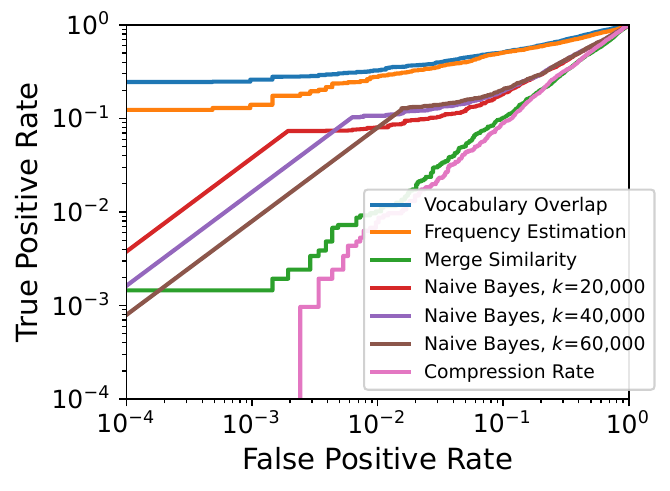}}
\caption{Success rate of our attacks on tokenizers with different vocabulary sizes. The experimental results demonstrate that, when evaluated at low false positive rates, both MIA via \vsignal and MIA via \fsignal consistently outperform other methods. Notably, even at a $0.01\%$ false positive rate, both attacks achieve a true positive rate of nearly 10\%.}
    \label{fig: ROC curves}
\end{figure*}

\mypara{Attack Baselines} As our work is the first to investigate MIAs targeting tokenizers, there are no existing baselines from prior studies. Therefore, we establish our own baselines by comparing MIA via \vsignal\ and MIA via \fsignal\ with three other methods: MIA via \msignal\ (see Section~\ref{initial_exploration}), as well as two attack methods we developed. These two attacks are described below:
\begin{itemize}
\item {\mypara{MIA via \nsignal}}
Since every token originates from at least one of the tokenizer's training datasets, the adversary can approximate the empirical probability $\Pr(t_i \rightarrow D)$ that the token $t_i \in \mathcal{V}_\text{target}$ comes from the dataset $D$. Specifically, this probability $\Pr(t_i \rightarrow D)$ can be represented by:
\begin{equation}
\Pr(t_i \rightarrow D) = \frac{n_D(t_i)}{\sum_{D' \in \tilde{\mathbb{D}}\cup\{D\}} n_{D'}(t_i)},
\end{equation}
where $n_D(t_i)$ is the frequency of $t_i$ in $D$, and $\tilde{\mathbb{D}}$ is constructed by sampling auxiliary datasets $\mathcal{D}_\textup{aux}$ $N$ times as an estimation for distribution ${\mathbb{D}}$. We default to setting $N=10$ in evaluation. The probability that $D \in \mathcal{D}_\textup{mem}$ is given by:
\begin{equation}
\Pr(D \in \mathcal{D}_\textup{mem}) = 1 - \Pr\big( \bigcap_{t_i \in \mathcal{V}_\text{target}} { t_i \not\rightarrow D } \big),
\end{equation} where $t_i \not\rightarrow D$ means $t_i$ was not sourced from $D$. Assuming rare tokens in training data typically come from disjoint datasets, the probability $\Pr(D \in \mathcal{D}_\textup{mem}) $ thus can be approximated via the Naive Bayes~\cite{webb2017naive}:
\begin{align}
\Pr(D \in \mathcal{D}_\textup{mem}) &\approx 1 - \prod_{t_i \in \mathcal{V}_\textup{rare}} \Pr(t_i \not\rightarrow D) \\
&\approx 1 - \prod_{t_i \in \mathcal{V}_\textup{rare}} \big(1 - \Pr(t_i \rightarrow D)\big),
\end{align}
where $\mathcal{V}_\textup{rare} \subseteq \mathcal{V}_\text{target}$ is the set of tokens with the top $k$ merge orders, selected as likely rare tokens in training data (see Equation~\ref{eq: pw}). The membership signal for the target dataset $D$ is defined by the probability $\Pr(D \in \mathcal{D}_\textup{mem})$. If it is larger than a threshold $\tau$, output 1 (\textit{member}); or, output 0.

\item {\mypara{MIA via \csignal}} 
The objective of tokenizer training is to maximize the compression rate of a given text corpus~\cite{zouhar2023formal}. 
Based on this optimization objective, we hypothesize that a tokenizer achieves higher compression rates on datasets it was trained on~\cite{hayasedata}. 
Leveraging this insight, an adversary can calculate the compression rate of a given dataset $D$ using the metric of bytes per token~\cite{liu-etal-2025-superbpe}.
The membership signal for the target dataset $D$ is defined by this compression rate. If it is larger than a decision-making threshold $\tau$, output 1 (\textit{member}); otherwise, output 0.

% \item \mypara{LLM Dataset Inference~\cite{maini2024llm}} We further investigate the effectiveness differences in the MIAs targeting the tokenizers and the LLMs themselves, which are trained on the same dataset. The dataset-level MIA~\cite{maini2024llm} performs a statistical test on a given dataset $D$ and determines if an LLM was trained on $D$ based on p-values. This attack holds an assumption that an adversary possesses an unpublished reference dataset $D'$ (e.g., drafts of articles, blog posts, or books) from the same distribution as dataset $D$, thereby the features of $D'$ are similar to that of $D$ and essentially $D'\notin \mathcal{D}_\textup{mem}$. However, the web data is publicly accessible. It is hard to find such a reference dataset $D'$ for web data. Even though we still pose a strong assumption about the adversary in this attack. We assume the adversary possesses a non-member dataset that is randomly sampled from the dataset collection $\mathbb{D}-\mathcal{D}_\textup{mem}$, whose features may not be similar to $D$.
\end{itemize}

\mypara{Evaluation Metrics} Following prior studies~\cite{shokri2017membership,duanmembership,carlini2022membership}, we report the MIA performance using three convincing metrics:
\begin{itemize} 
\item \mypara{AUC}  This metric quantifies the overall distinguishability of an MIA by computing the area under the receiver operating characteristic (ROC) curve~\cite{carrington2022deep}.
\item \mypara{Balanced Accuracy} This metric (denoted as BA) measures the overall correct predictions on membership by averaging the true positive rate and the true negative rate.
\item \mypara{TPR at Low FPR}
Proposed by~\cite{carlini2022membership}, this metric evaluates the true positive rate (TPR) when the false positive rate (FPR) is low. Following prior work~\cite{meeus2024did,25Towards}, we report TPR @ 1.0\% FPR in our evaluations (denoted as TPR). 
\end{itemize}

\vspace{-1em}
\subsection{Main Results}\label{main exp}

\begin{tcolorbox}[enhanced, drop shadow southwest, top=2pt,bottom=2pt]
\textbf{Finding 1.} \textit{According to prior work~\textup{\cite{ huangover,mayilvahanan2025llms}}, scaling up the intelligence of LLMs involves expanding the tokenizer's vocabulary~\textup{\cite{tao2024scaling}} and thus improving its compression efficiency~\textup{\cite{liu-etal-2025-superbpe}}. Yet, the results show it also increases a tokenizer's vulnerability to MIAs.}
\end{tcolorbox}

\mypara{Overall Performance}
As shown in Table~\ref{tab:attack-full-metrics}, the MIA via \vsignal and the  MIA via \fsignal consistently demonstrate strong performance and outperform other baseline methods across different vocabulary sizes. For instance, the MIA via \vsignal achieves an AUC score of 0.771 when evaluated on a target tokenizer with 200,000 tokens, whereas the MIA via \fsignal achieves a comparable AUC score of 0.740.
 Beyond these results, we observe that the performance of both attacks improves as vocabulary size increases. This trend suggests that MIAs targeting tokenizers in LLMs may become even more effective as state-of-the-art models continue to scale and adopt larger vocabularies in their tokenizers~\cite{tao2024scaling, huangover, mayilvahanan2025llms}. One possible explanation is that larger vocabularies contain more tokens, which may increase the likelihood of merging the distinctive tokens from the training data. As a result, expanding the tokenizer’s vocabulary may unintentionally increase its vulnerability to effective MIAs. 
\begin{table*}[t]
    \centering
    \caption{Impact of the target dataset size on MIAs. BA denotes balanced accuracy. TPR refers to TPR @ 1.0\% FPR. It is observed that MIA via \vsignal and MIA via \fsignal perform better for target datasets with larger sizes.}

    \resizebox{1\textwidth}{!}{
    \begin{tabular}{@{}l c
                    ccc ccc ccc ccc ccc@{}}
        \toprule
        \multirow{2}{*}{\textbf{Attack Approach}}
        & \multirow{2}{*}{{\#Dataset Size}}
        & \multicolumn{3}{c}{\textbf{$|\mathcal{V}_\text{target}|=80{,}000$}} 
        & \multicolumn{3}{c}{\textbf{$|\mathcal{V}_\text{target}|=110{,}000$}} 
        & \multicolumn{3}{c}{\textbf{$|\mathcal{V}_\text{target}|=140{,}000$}} 
        & \multicolumn{3}{c}{\textbf{$|\mathcal{V}_\text{target}|=170{,}000$}}  
        & \multicolumn{3}{c}{\textbf{$|\mathcal{V}_\text{target}|=200{,}000$}}  \\
        \cmidrule(l{7pt}r{13pt}){3-5} \cmidrule(l{6pt}r{13pt}){6-8} 
        \cmidrule(l{6pt}r{13pt}){9-11} \cmidrule(l{6pt}r{13pt}){12-14} \cmidrule(l{6pt}r{0pt}){15-17}
        & 
        & AUC & BA & TPR
        & AUC & BA & TPR 
        & AUC & BA & TPR 
        & AUC & BA & TPR 
        & AUC & BA & TPR \\
        \midrule
    \multirow{3}{*}{\vsignal}
    & {$|D|\in[0,400)$} 
    & 0.672 & 0.652 & 21.58\% %
    & 0.696 & 0.655 & 24.76\% %
    & 0.714 & 0.676 &  27.56\%  %
    & 0.737 & 0.687 & 27.82\%  %
    & 0.747 & 0.692 &  29.22\%  \\ %
    & {$|D|\in[400,800)$} 
    & 0.739 & 0.695 & 33.73\% %
    & 0.758 & 0.718 & 41.27\% %
    & 0.791 & 0.755 & 42.77\% % 
    & 0.808 & 0.761 & 43.37\%  
    & 0.808 & 0.766 & 43.67\%  \\
    & {$|D|\in[800,1200)$} 
    & 0.773 & 0.720 & 33.75\% %
    & 0.785 & 0.757 & 43.75\%
    & 0.797 & 0.767 & 45.00\%  
    & 0.826 & 0.813 & 47.50\%
    & 0.882& 0.838  & 62.50\%  \\
    \midrule
    \multirow{3}{*}{\fsignal}
    & {$|D|\in[0,400)$} 
    & 0.599 & 0.608 & 18.91\%  
    & 0.631 & 0.632 &  19.73\%
    & 0.662 & 0.648 &  20.31\%
    & 0.695 & 0.668 & 21.83\%  
    & 0.729 & 0.691 & 25.84\%  \\
    & {$|D|\in[400,800)$}
    & 0.629 & 0.624 & 23.49\% 
    & 0.683 & 0.662 & 29.27\%
    & 0.728 & 0.697 & 30.42\%  
    & 0.747 & 0.713 & 31.63\%  
    & 0.774 & 0.736 & 32.83\% \\
    & {$|D|\in[800,1200)$} 
    & 0.758 & 0.734 & 33.75\% 
    & 0.772 &  0.756 & 40.00\% 
    & 0.774 & 0.761 & 41.25\%  
    & 0.814 & 0.789 & 50.00\%  
    & 0.843 & 0.810 & 53.75\%  \\
    \bottomrule
    \end{tabular}
    }
    \label{tab: dataset score}
\end{table*}

\begin{table}[t]
    \centering
    \caption{Time cost (hours) for training $N$ shadow tokenizers, where each tokenizer has a vocabulary size of 200{,}000.}
    \label{training_time}
    \resizebox{0.95\columnwidth}{!}{
    \begin{tabular}{l|ccccc}
    \toprule 
    {Tokenizer Count} &$N$=1 &$N$=32 &$N$=64 &$N$=96 &$N$=128\\
    \midrule 
    {Training Time} &0.024 &0.731 &1.498 &2.251 &3.054 \\
    \bottomrule
    \end{tabular}}
\end{table}

% \begin{figure}[t]
% \hspace{1cm}
% \includegraphics[width=0.73\columnwidth]{fig/roc_curve_vsize_80000.pdf}
% \caption{ROC curves for membership inference. The MIAs are performed on a tokenizer trained with 80,000 tokens.}
% \label{fig: ROC curves1}
% \end{figure}

\mypara{ROC Analysis} The prior study~\cite{carlini2022membership} has highlighted the importance of MIAs being able to reliably infer even a small number of a model's training data. To demonstrate this capability of our MIAs, Figure~\ref{fig: ROC curves} presents the full log-scale ROC curves for various attack methods across different vocabulary sizes. It is observed that both MIA via \vsignal and MIA via \fsignal can reliably infer the membership of datasets, particularly in regions with low false positive rates. For example, when applied to a tokenizer with 140,000 tokens, these two attacks achieve true positive rates ranging from approximately 10\% to 30\% at a false positive rate below 1\%. Even at a false positive rate of  $0.01\%$, both attacks can still achieve a true positive rate of nearly 10\%. Additional ROC curve results can be found in Figure~\ref{fig: full ROC curves}.

\mypara{Efficiency Analysis} We further analyze the computational cost for both MIA via \vsignal and MIA via \fsignal across two phases: shadow tokenizer training and the remaining inference. In the phase of shadow tokenizer training, MIA via \vsignal trains multiple tokenizers (e.g., 96), resulting in a high computational cost. In contrast, MIA via \fsignal requires training only a single tokenizer, significantly reducing the overall cost. As shown in Table~\ref{training_time}, this leads to substantial savings in training time. In the inference phase, MIA via \vsignal involves frequent comparisons across different tokenizers, whereas MIA via \fsignal primarily estimates a power-law distribution, a much simpler computation. Table~\ref{MIA efficiency} confirms the shorter inference time of the latter method. For example, MIA via \vsignal takes over two hours to infer the membership of 4,133  datasets from a tokenizer with 140,000 tokens. However, MIA via \fsignal accomplishes the same task in under 20 minutes, making it efficient for large-scale attacks.

\subsection{Ablation Study}

\begin{tcolorbox}[enhanced, drop shadow southwest, top=2pt,bottom=2pt]
\textbf{Finding 2.} \textit{ The membership status of the target dataset with more data samples is typically more accurately inferred by MIAs from the tokenizer.\hspace{-5cm}}
\end{tcolorbox}
\begin{table}[t]
    \centering
    \caption{Time cost (hours) for MIAs via \vsignal and \fsignal inferring 4,133 target datasets.}
    \resizebox{0.95\columnwidth}{!}{
    \begin{tabular}{l|ccccc}
    \toprule
    Vocabulary Size & 80{,\,}000 & 110{,\,}000 & 140{,\,}000 & 170{,\,}000 & 200{,\,}000 \\
    \midrule
    \vsignal & 1.230 & 1.608 & 2.067 & 2.613 & 3.375 \\ 
    \fsignal & 0.170 & 0.235 & 0.303 & 0.373 & 0.432 \\
    \bottomrule
    \end{tabular}}
    \label{MIA efficiency}
\end{table}

\begin{figure}[t!]
    % \centering
    \subfigure[MIA via \vsignal]{\hspace{0.1em}\includegraphics[width=0.49\columnwidth]{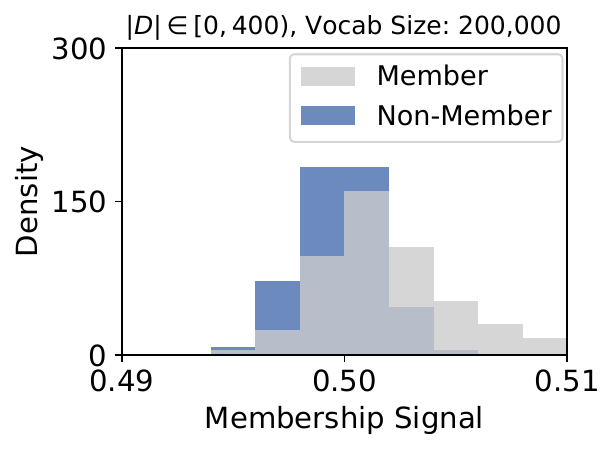}}
    \subfigure[MIA via \vsignal]{\hspace{0.1em}\includegraphics[width=0.49\columnwidth]{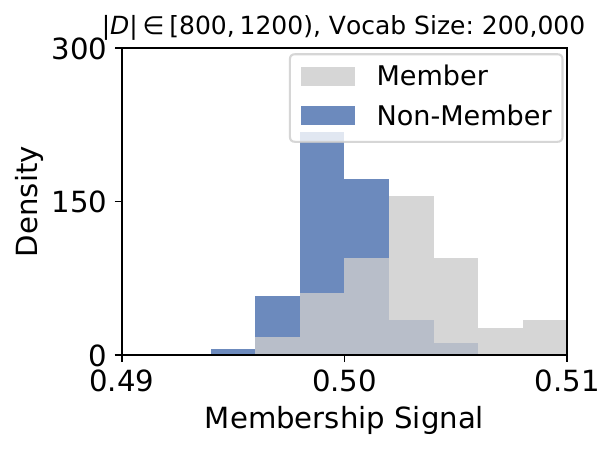}}\\
    \subfigure[MIA via \fsignal]{\hspace{0.1em}\includegraphics[width=0.49\columnwidth]{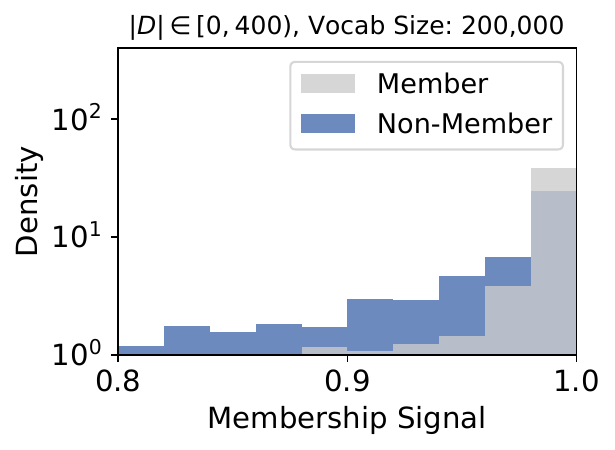}}
    \subfigure[MIA via \fsignal]{\hspace{0.1em}\includegraphics[width=0.49\columnwidth]{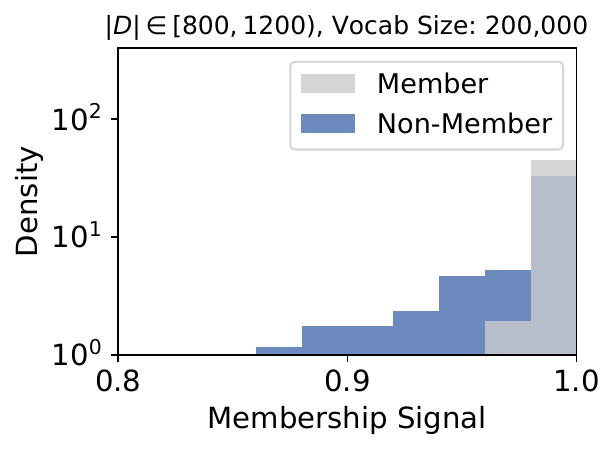}}
\caption{Distribution of \textit{members} and \textit{non-members}.}\label{fig: dataset score}
\end{figure}

\begin{table*}[t]
    \centering
    \caption{MIAs against tokenizers with the min count defense. Here, $n_\text{min}$ denotes a threshold. If a token appears fewer than $n_\text{min}$ times in training data, it is likely a distinctive token and will be excluded from the vocabulary $\mathcal{V}_\text{target}$. $|\mathcal{V}_\text{target}|\leq 80{,}000$ indicates the vocabulary size prior to applying defense is 80{,}000. BA denotes balanced accuracy. TPR refers to TPR @ 1.0\% FPR.}

    \resizebox{1\textwidth}{!}{
    \begin{tabular}{@{}l c
                    ccc ccc ccc ccc ccc@{}}
        \toprule
        \multirow{2}{*}{\textbf{Attack Approach}}
        & \multirow{2}{*}{{\#Min Count}}
        & \multicolumn{3}{c}{{$|\mathcal{V}_\text{target}|\leq 80{,}000$}} 
        & \multicolumn{3}{c}{{$|\mathcal{V}_\text{target}|\leq 110{,}000$}} 
        & \multicolumn{3}{c}{{$|\mathcal{V}_\text{target}|\leq 140{,}000$}} 
        & \multicolumn{3}{c}{{$|\mathcal{V}_\text{target}|\leq 170{,}000$}}  
        & \multicolumn{3}{c}{{$|\mathcal{V}_\text{target}|\leq 200{,}000$}}  \\
        \cmidrule(l{7pt}r{13pt}){3-5} \cmidrule(l{6pt}r{13pt}){6-8} 
        \cmidrule(l{6pt}r{13pt}){9-11} \cmidrule(l{6pt}r{13pt}){12-14} \cmidrule(l{6pt}r{0pt}){15-17}

        &  
        & AUC & BA & TPR
        & AUC & BA & TPR 
        & AUC & BA & TPR 
        & AUC & BA & TPR 
        & AUC & BA & TPR \\
        \midrule
    \multirow{4}{*}{\vsignal}
    & w/o {defense} 
    &{0.693} & {0.666} & {26.77\% }
    & {0.718} & {0.672} & {28.75\%} 
    & {0.736 }& {0.696} & {29.72\% } 
    & {0.761} & {0.709} & {32.53\%}  
    & {0.771} & {0.711} & {34.61\%}  \\
    & w/ {$n_\text{min}=32$} 
    & 0.663 & 0.638 & 23.57\% %
    & 0.699 & 0.657 & 23.76\% %
    & 0.718 & 0.671 &  26.48\%  %
    & 0.736 & 0.686 & 28.41\%  %
    & 0.746 & 0.691 &  30.83\%  \\ %
    & w/ {$n_\text{min}=48$} 
    & 0.663 & 0.638 & 23.57\% %
    & 0.697 & 0.655 & 23.76\% %
     & 0.714 & 0.671 & 25.79\%
    & 0.717 & 0.665 &  26.09\%
    & 0.734 & 0.683 &  30.83\% \\
    & w/ {$n_\text{min}=64$} 
    & 0.663 & 0.638 &  21.93\% %
    & 0.685  & 0.640 & 23.57\%
    & 0.699  & 0.657  &  23.77\%  
    & 0.707 & 0.663 &  26.48\%
    &0.717 &  0.671 &  26.48\%  \\
    \midrule
    \multirow{4}{*}{\fsignal}
    & w/o {defense} 
    & {0.610} & {0.614} & {21.30\%} 
    & {0.645} & {0.641} & {22.00\%} 
    & {0.676} & {0.660} & {22.56\%}  
    & {0.707} & {0.681} & {25.61\% } 
    & {0.740} & {0.705} & {27.88\%}  \\
    & w/ {$n_\text{min}=32$} 
    & 0.600 &  0.602 &  18.25\% %
    & 0.633 & 0.630 & 21.30\% %
    & 0.664 & 0.647 &  22.41\%  %
    & 0.695 & 0.669 &  24.15\%  %
    & 0.730 & 0.693 &  28.80\%  \\ %
    & w/ {$n_\text{min}=48$} 
    & 0.598 & 0.598 & 18.05\% %
    & 0.633 & 0.626 & 21.06\% %
    & 0.663 & 0.645 &  21.78\% % 
    & 0.690 & 0.663 &   23.23\%  
    & 0.692 & 0.664 & 25.41\%  \\
    & w/ {$n_\text{min}=64$} 
    &  0.596 & 0.600 & 17.13\% %
    & 0.630 & 0.624 & 20.72\%
    & 0.661 & 0.645 &  23.81\%  
    & 0.666 & 0.648 &  22.36\%
    & 0.668 & 0.648  &  23.72\%  \\
    \bottomrule
    \end{tabular}
    }
    \label{tab: against defense}
\end{table*}

\begin{figure}[t]
\hspace{-0.15cm}
\includegraphics[width=0.97\columnwidth]{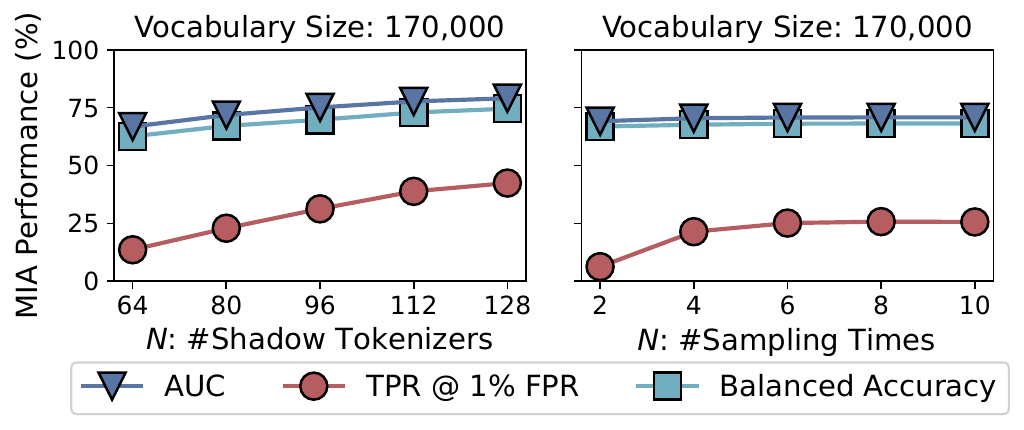}
\caption{Impact of $N$. Left: MIA via $\mathsf{Vocabulary}$ $\mathsf{Overlap}$, training $N$ shadow tokenizers. Right: MIA via $\mathsf{Frequency}$ $\mathsf{Estimation}$, sampling auxiliary datasets $N$ times.}
\label{fig: Shadow N}
\end{figure}

\mypara{Impact of Dataset Size $|D|$} Recent studies~\cite{ meeus2024did, maini2024llm, 25LLMV} have shown that increasing the amount of data used for membership inference can improve the attack performance. This finding is particularly relevant in the context of high-value litigation nowadays, where the datasets at stake are often massive~\cite{nytimes2025reddit, npr2025nyt}.
Motivated by this, we investigate whether MIAs can more effectively infer the membership of larger datasets from the tokenizers. Table~\ref{tab: dataset score} presents the performance of MIA via \vsignal and MIA via \fsignal for varying-size target datasets. The results show that both attacks become more effective as the dataset size increases, achieving particularly strong performance on large datasets. For instance, the MIA via \vsignal achieves an AUC score of 0.882 on datasets containing 800 to 1,200 data samples, whereas the MIA via \fsignal achieves a competitive AUC score of 0.843. 
Figure~\ref{fig: dataset score} further illustrates the relationship between dataset size and the membership signal in both attacks. As the dataset size increases, the overlap between the membership signal distributions for \textit{members} and \textit{non-members} decreases, thereby enhancing the discriminative power of the attacks. This reduced overlap is likely attributable to the presence of more distinctive tokens in larger datasets, which serve to strengthen the membership signal. Additional evaluations illustrating the membership signal distributions are presented in Figure~\ref{fig: full dataset score V} and Figure~\ref{fig: full dataset score F}.

\begin{table}[t]
    \centering
\caption{Tokenizer utility measured by bytes per token. Utility scores that decrease after applying the defense are in red.}
 \resizebox{\columnwidth}{!}{
    \begin{tabular}{@{}l c
                    cccc@{}}
        \toprule
        \multirow{2}{*}{\textbf{Benchmark}}
        & \multirow{2}{*}{{$|\mathcal{V}_\text{target}|\leq$}}
        & \multicolumn{4}{c}{Tokenizer Utility Measured by Bytes per Token}  \\
        \cmidrule(l{7pt}r{1pt}){3-6} 
        
        & &  w/o {defense}  &  w/ {$n_\text{min}=32$}  &  w/ {$n_\text{min}=48$} 
        &  w/ {$n_\text{min}=64$}   \\
    \midrule
    \multirow{5}{*}{\textit{WikiText}~\cite{merity2016pointer}}
    &  $80,000$ 
     & 4.873  & 4.873~\textcolor{darkred}{(-0.0)} & 4.873~\textcolor{darkred}{(-0.0)}
    & 4.873~\textcolor{darkred}{(-0.0)}\\
    &$110,000$  
     &  4.943  &  4.943~\textcolor{darkred}{(-0.0)} &  4.943~\textcolor{darkred}{(-0.0)}
    &  4.943~\textcolor{darkred}{(-0.0)}\\
    & $140,000$ 
     &  4.986  &  4.986~\textcolor{darkred}{(-0.0)} &  4.986~\textcolor{darkred}{(-0.0)}
    &  4.986~\textcolor{darkred}{(-0.0)}\\
    & $170,000$
     &  5.008  &  5.008~\textcolor{darkred}{(-0.0)} &  5.006~\textcolor{darkred}{(-0.002)}
    &  4.995~\textcolor{darkred}{(-0.013)}\\
    & $200,000$
     &  5.025  &  5.025~\textcolor{darkred}{(-0.0)} &  5.012~\textcolor{darkred}{(-0.013)}
    &  5.000~\textcolor{darkred}{(-0.025)}\\
    \midrule
    \multirow{5}{*}{\textit{Github}~\cite{codeparrot_github_code_2025}}
    &  $80,000$ 
     & 3.740  & 3.739~\textcolor{darkred}{(-0.001)} & 3.739~\textcolor{darkred}{(-0.001)}
    & 3.738~\textcolor{darkred}{(-0.002)}\\
    &$110,000$  
    & 3.853 & 3.851~\textcolor{darkred}{(-0.002)}
    & 3.851~\textcolor{darkred}{(-0.002)} &3.849~\textcolor{darkred}{(-0.004)} \\
    & $140,000$ 
     &3.924  &3.921~\textcolor{darkred}{(-0.003)} & 3.921~\textcolor{darkred}{(-0.003)}
    &3.918~\textcolor{darkred}{(-0.006)}  \\
    & $170,000$
     &3.973 &3.970~\textcolor{darkred}{(-0.003)} & 3.967~\textcolor{darkred}{(-0.006)}
    &3.947~\textcolor{darkred}{(-0.026)}  \\
    & $200,000$
    &4.009 & 4.006~\textcolor{darkred}{(-0.003)} & 3.990~\textcolor{darkred}{(-0.019)}
    & 3.965~\textcolor{darkred}{(-0.044)}  \\
    \midrule
    \multirow{5}{*}{\textit{MGSM}~\cite{saparov2023language}}
    &  $80,000$ 
    & 3.357  & 3.356~\textcolor{darkred}{(-0.001)} & 3.356~\textcolor{darkred}{(-0.001)}
    & 3.355~\textcolor{darkred}{(-0.002)}\\
    &$110,000$  
    & 3.476 & 3.475~\textcolor{darkred}{(-0.001)}
    &3.474~\textcolor{darkred}{(-0.002)}  &3.473~\textcolor{darkred}{(-0.003)} \\
    & $140,000$ 
    & 3.601
     & 3.601~\textcolor{darkred}{(-0.0)} & 3.601~\textcolor{darkred}{(-0.0)}
    &3.600~\textcolor{darkred}{(-0.001)}  \\
    & $170,000$
     &3.663
     & 3.663~\textcolor{darkred}{(-0.0)}  & 3.662~\textcolor{darkred}{(-0.001)} 
    &3.639~\textcolor{darkred}{(-0.024)}  \\
    & $200,000$
    &3.740 
    &3.740~\textcolor{darkred}{(-0.0)} & 3.739~\textcolor{darkred}{(-0.001)}
    &3.716~\textcolor{darkred}{(-0.024)} \\
    \midrule
 \multirow{5}{*}{\textit{GPQA}~\cite{rein2024gpqa}}
    &  $80,000$ 
     & 3.925  
     & 3.924~\textcolor{darkred}{(-0.001)}  & 3.924~\textcolor{darkred}{(-0.001)}  & 3.923~\textcolor{darkred}{(-0.002)} \\
    &$110,000$  
    & 4.008 
    & 4.007~\textcolor{darkred}{(-0.001)}
    & 4.006~\textcolor{darkred}{(-0.002)} & 4.003~\textcolor{darkred}{(-0.005)} \\
    & $140,000$ 
     & 4.056
     & 4.055~\textcolor{darkred}{(-0.001)} & 4.053~\textcolor{darkred}{(-0.003)}
    & 4.050~\textcolor{darkred}{(-0.006)}  \\
    & $170,000$
     & 4.094 
     & 4.094~\textcolor{darkred}{(-0.0)} & 4.092~\textcolor{darkred}{(-0.002)}
    & 4.086~\textcolor{darkred}{(-0.008)}  \\
    & $200,000$
    &4.117 
    & 4.116~\textcolor{darkred}{(-0.001)} & 4.109~\textcolor{darkred}{(-0.008)}
    & 4.093~\textcolor{darkred}{(-0.024)}  \\
    \bottomrule
    \end{tabular}
    }
    \label{tab: BPT with defense}
\end{table}

\mypara{Impact of Hyperparameter $N$} We further analyze the impact of the hyperparameter $N$ on the performance of MIAs. Specifically, in the case of MIA via \vsignal, an adversary trains $N$ shadow tokenizers to capture distinctive tokens. As shown in the left plot of Figure~\ref{fig: Shadow N}, the effectiveness of this attack improves steadily as $N$ increases. However, after training more than 112 shadow tokenizers, the performance gain begins to plateau. 
In another attack, MIA via \fsignal, the adversary samples auxiliary datasets $N$ times to estimate probabilities in membership inference. As shown in the right plot of Figure~\ref{fig: Shadow N}, the effectiveness of this attack also increases with larger $N$. Nevertheless, the performance of MIA via \fsignal tends to be stable once $N$ exceeds 6. Although the results demonstrate that both attacks benefit from increasing $N$, a larger value of $N$ also incurs greater resource consumption during inference.

\begin{table*}[t]
    \centering
    \caption{Impact of the target dataset size on defense mechanism ($n_\text{min}=64$). $|\mathcal{V}_\text{target}|\leq 80{,}000$ indicates the vocabulary size prior to applying defense is 80{,}000. TPR refers to TPR @ 1.0\% FPR. The results show our MIAs remain effective on large datasets.}

    \resizebox{1\textwidth}{!}{
    \begin{tabular}{@{}l c
                    ccc ccc ccc ccc ccc@{}}
        \toprule
        \multirow{2}{*}{\textbf{Attack Approach }}
        & \multirow{2}{*}{{\#Dataset Size}}
        & \multicolumn{3}{c}{\textbf{$|\mathcal{V}_\text{target}|\leq 80{,}000$}} 
        & \multicolumn{3}{c}{\textbf{$|\mathcal{V}_\text{target}|\leq 110{,}000$}} 
        & \multicolumn{3}{c}{\textbf{$|\mathcal{V}_\text{target}|\leq 140{,}000$}} 
        & \multicolumn{3}{c}{\textbf{$|\mathcal{V}_\text{target}|\leq 170{,}000$}}  
        & \multicolumn{3}{c}{\textbf{$|\mathcal{V}_\text{target}|\leq 200{,}000$}}  \\
        \cmidrule(l{7pt}r{13pt}){3-5} \cmidrule(l{6pt}r{13pt}){6-8} 
        \cmidrule(l{6pt}r{13pt}){9-11} \cmidrule(l{6pt}r{13pt}){12-14} \cmidrule(l{6pt}r{0pt}){15-17}
        & 
        & AUC & BA & TPR
        & AUC & BA & TPR 
        & AUC & BA & TPR 
        & AUC & BA & TPR 
        & AUC & BA & TPR \\
        \midrule
    \multirow{3}{*}{\vsignal}
    & {$|D|\in[0,400)$} 
    & 0.646 & 0.632 &19.16\% %
    & 0.662 & 0.636 &  21.25\%
    & 0.677 & 0.642  & 21.25\%
    & 0.683 & 0.647  &23.30\%  %
    &  0.694 & 0.655 & 24.06\%  \\%
    
    & {$|D|\in[400,800)$} 
    & 0.703 & 0.666 &  21.25\%%
    & 0.731 & 0.675 & 21.58\% 
    &  0.751 & 0.699  & 22.85\%
    &  0.761 & 0.711  & 32.50\% 
    & 0.772 & 0.717 & 37.95\%  \\

    & {$|D|\in[800,1200)$} 
    & 0.712 & 0.670 &  25.30\% %
    & 0.743 & 0.702 & 27.11\% 
    & 0.768 & 0.729  & 30.12\% 
    & 0.795 &  0.746  &  33.73\%
    & 0.797 & 0.761 & 38.75\% \\

    \midrule
    
    \multirow{3}{*}{\fsignal}
    & {$|D|\in[0,400)$} 
   & 0.591 & 0.598 & 17.50\% 
    & 0.620 & 0.619 & 18.20\%
    & 0.651 & 0.636 & 20.11\%  
    & 0.654 & 0.637 & 21.07\%
    & 0.656 & 0.639 & 21.39\% \\

    & {$|D|\in[400,800)$}
    & 0.619 & 0.615 & 22.29\% 
    & 0.672 & 0.653 & 28.31\%
    & 0.717 & 0.682 & 30.42\%  
    & 0.718 & 0.683 & 31.93\%
    & 0.723  & 0.686 & 32.83\% \\
    
    & {$|D|\in[800,1200)$} 
    & 0.734 & 0.717 & 26.25\% 
    & 0.739 & 0.731   & 33.75\% 
    & 0.744  & 0.736 &   33.75\% 
    & 0.745 & 0.742 &   35.00\% 
    &  0.748 &  0.746 & 38.75\%  \\
    \bottomrule
    \end{tabular}
    }
    \label{tab: dataset score 64}
\end{table*}

\subsection{Adaptive Defense}\label{defense}

\begin{tcolorbox}[enhanced, drop shadow southwest, top=2pt,bottom=2pt]
\textbf{Finding 3.} \textit{Removing infrequent tokens from the target tokenizer’s vocabulary can partially reduce the effectiveness of MIAs. However, this mitigation comes at the cost of reduced tokenizer utility. Moreover, MIAs remain effective when inferring large datasets.}
\end{tcolorbox}
While defense against MIAs is not the primary focus of this work, we have also explored the defense mechanism to mitigate membership leakage in tokenizers.
Specifically, previous studies~\cite{25LLMV,ye2022enhanced,shokri2017membership} have demonstrated that overfitting signals are a key requirement for the success of MIAs. This suggests that methods designed to reduce overfitting may function as effective defense mechanisms~\cite{abadi2016deep,zhang2025soft}. Building on this insight, we assume an adaptive defender who reduces distinctive tokens in the tokenizer’s vocabulary.
\begin{figure}[t!]
    %    \subfigure[MIA~via~$\mathsf{Vocabulary~Overlap}$]{ \hspace{0.1em}\includegraphics[width=0.48\columnwidth]{fig/dp_plot_v.pdf}}
    % \subfigure[MIA~via~$\mathsf{Frequency~Estimation}$]{ \hspace{0.1em}\includegraphics[width=0.48\columnwidth]{fig/dp_plot_f.pdf}}
 \hspace{-0.15cm}
\includegraphics[width=\columnwidth]{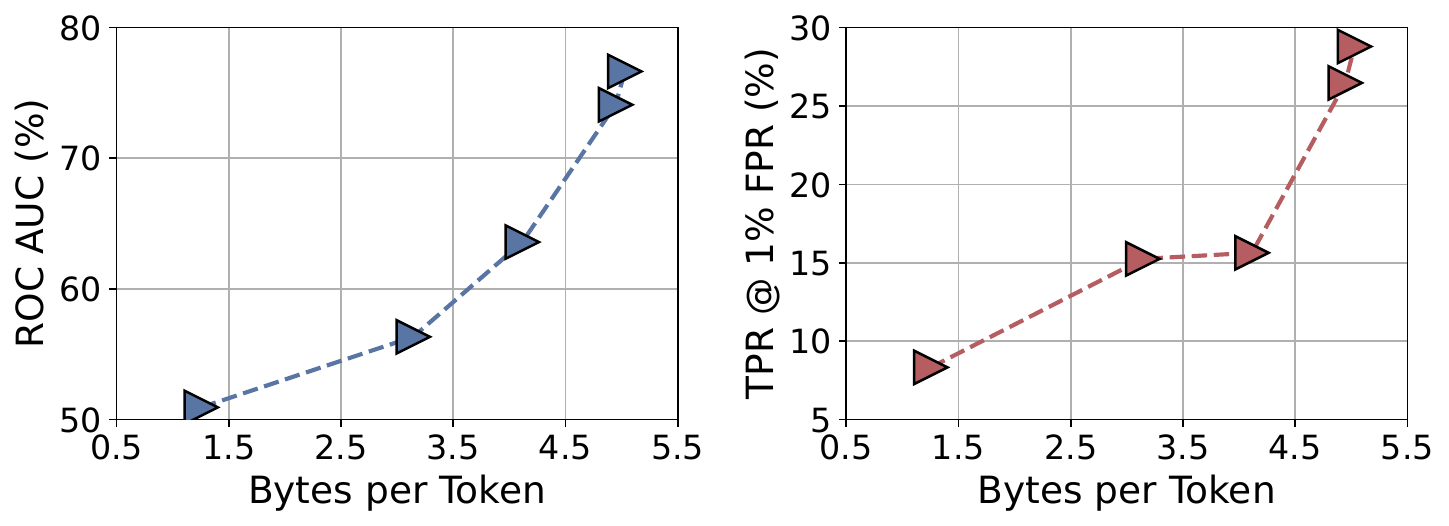}   
\caption{{Utility-privacy tradeoff, where higher attack performance (\ie AUC and TPR@1\%FPR) indicates greater privacy leakage, while higher bytes per token indicates better utility. We use \fsignal as the privacy attack.}}
\label{fig: dp}
\end{figure}

\mypara{Defender's Objective} Given the target tokenizer $f_{\mathcal{V}_\text{target}}$, the defender’s goal is to reduce the inference accuracy of MIAs on $f_{\mathcal{V}_\text{target}}$, while preserving its utility as much as possible.

\mypara{Defender's Capabilities} We assume that the defender is aware of the MIA strategy targeting the tokenizer $f_{\mathcal{V}_\text{target}}$, including the conditions that are important for their success. Specifically, our attacks rely on the distinctive tokens, which appear infrequently in the training data and overfit into the vocabulary $\mathcal{V}_\text{target}$. As a defense, the defender may modify the vocabulary $\mathcal{V}_\text{target}$ by identifying and removing such infrequent tokens. Thereby, it can mitigate the membership inference without significantly degrading the tokenizer utility.

\mypara{Min Count Mechanism} We introduce the min count mechanism as an adaptive defense against our attacks. In this mechanism, the defender post-processes the trained vocabulary $\mathcal{V}_\text{target}$ by filtering infrequent tokens. Let $n_\text{min}\in \mathbb{Z}_{> 0}$ denote the filtering threshold, and let $n_{D'}(t_i)$ represent the count of token $t_i$ in dataset $D'$. For each token $t_i\in\mathcal{V}_\text{target}$, if the aggregated count $\sum_{D' \in \mathcal{D}_\textup{mem}} n_{D'}(t_i)$ across the tokenizer’s training data is less than $n_\text{min}$, the defender removes $t_i$ from trained vocabulary $\mathcal{V}_\text{target}$. Table~\ref{tab: against defense} shows the results of MIAs against tokenizers with the min count mechanism. It is observed that, as the threshold $n_\text{min}$ increases, the defense can partially reduce the effectiveness of MIAs against tokenizers. However, this comes at the cost of the tokenizer’s utility. Table~\ref{tab: BPT with defense} shows that the compression efficiency of bytes
 per token diminishes under more strict filtering rules. While the min count mechanism can mitigate some inference risks, our MIAs remain effective, particularly for large datasets. For example, Table~\ref{tab: dataset score 64} reports an AUC of 0.797 when MIA via \vsignal infers the datasets ranging in size from 800 to 1,200 samples. %Additional experimental results under the min count mechanism are provided in Figure~\ref{tab: dataset score 32} and Figure~\ref{tab: dataset score 48}. 

%% new version
\mypara{Differentially Private Motivated Mechanism} 
{Differential privacy (DP)~\cite{li2021large} is an effective defense mechanism against some privacy attacks.
Prior studies~\cite{abadi2016deep} have demonstrated that DP can reduce the vulnerability of membership leakage in machine learning models. 
 %Yet, no previous work has explored the DP-based tokenizer training. To fill this gap, 
Here we explore an approach inspired by the exponential mechanism~\cite{dong2020optimal} for DP.  Specifically, in each token-merge step, instead of choosing the pair of adjacent tokens with the highest count to merge, this approach chooses each pair with count $n$ with probability proportional to $e^{\lambda n}$ for some parameter $\lambda$.   While this mechanism is inspired by DP, we do not claim that the mechanism satisfies DP relative to the removal of one dataset, because one dataset may include multiple copies of one pair.  Instead, we focus on empirical evaluation of privacy. 
We vary the value of $\lambda$, and assess the tradeoff of utility (measured by average bytes per token) and empirical privacy risk (quantified by AUC and TPR @ 1\% FPR).  
Figure~\ref{fig: dp} shows that significant improvement in privacy (\ie lower attack performance) can be achieved only at a high utility cost (\ie lower average bytes per token).}

\subsection{Additional Investigations}\label{additional_exp}
\begin{figure}[t!]
    \centering   
    \subfigure[Jaccard Similarity \label{fig: jaccard differ}]{\includegraphics[width=0.495\columnwidth]{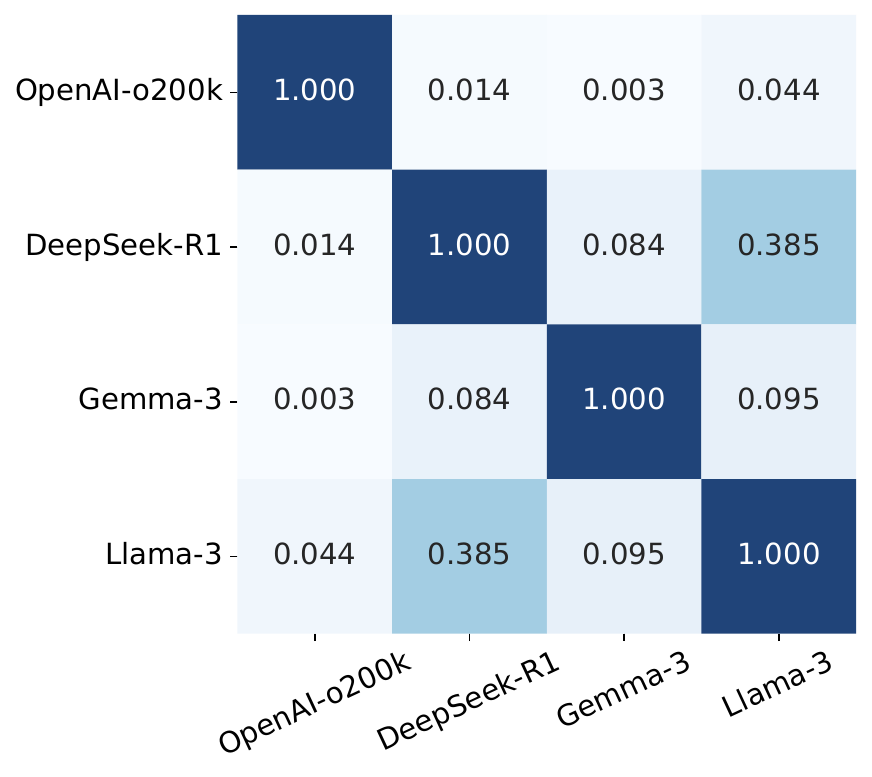}} 
    \subfigure[Differences in Merge Index \label{fig: merge differ}]{\includegraphics[width=0.495\columnwidth]{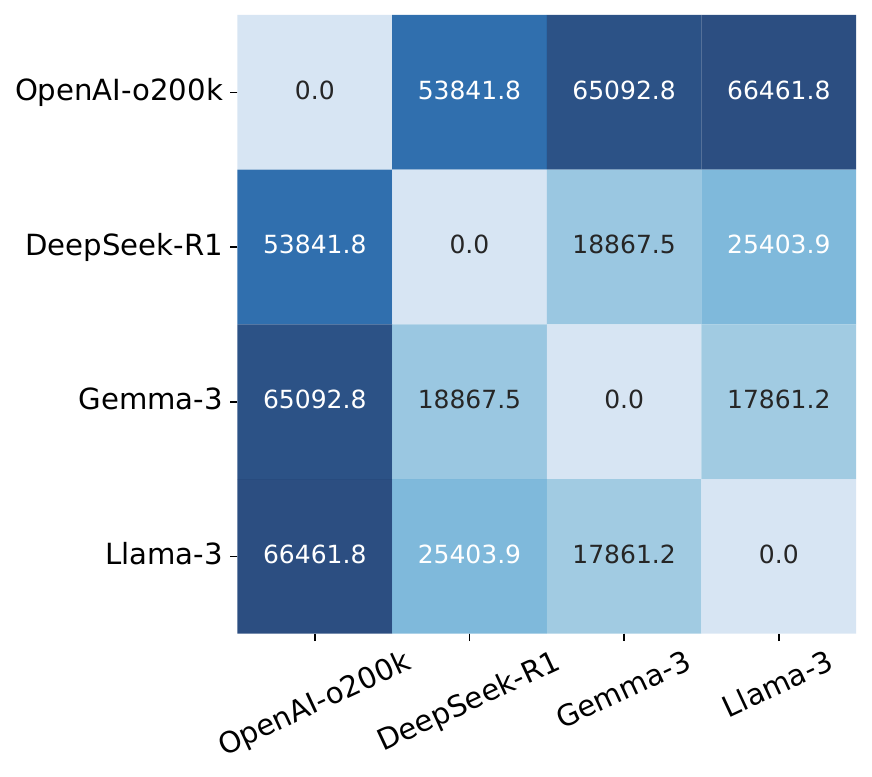}} 
\caption{Comparison of tokenizers in real-world LLMs.}
    \label{fig: exploration}
\end{figure}

\mypara{Distinctive Tokens in Real-world Tokenizers} Our MIAs exploit distinctive tokens, which can vary across different tokenizers, considering both their occurrence and their merge indices. A key question is whether real-world tokenizers differ in such ways that might enable effective MIAs. To investigate this, we compare the vocabularies of tokenizers used in real-world LLMs~\cite{openai2025tiktoken,karter2024gemini,guo2025deepseek,touvron2023llama}. Specifically, we limit our analysis to the first 120,000 tokens of each tokenizer to enable a fair comparison across varying sizes. Figure~\ref{fig: jaccard differ} presents the similarity between tokenizers' vocabularies based on the Jaccard index~\cite{bag2019efficient}. If there were no distinctive tokens, we would expect high similarity scores between any pair of tokenizers. However, the evaluation results show that real-world tokenizers contain a significant number of distinctive tokens, as evidenced by the maximum Jaccard index of only 0.385 observed between DeepSeek-R1~\cite{guo2025deepseek} and Llama-3~\cite{touvron2023llama}. In addition, Figure~\ref{fig: merge differ} illustrates the average absolute differences in the merge index for tokens between any two vocabularies. If there were no significant differences between the two vocabularies, we would expect only slight variations in the token merge indices. Nevertheless, the results suggest that token merge indices also vary largely in real-world tokenizers.

We further evaluate MIAs on out-of-distribution datasets, a single document containing unique tokens (canaries), and non-BPE tokenizers. Please refer to Appendix~\ref{add_eval} for details.

% \begin{figure}[t]
% \centering
% \includegraphics[width=0.8\columnwidth]{fig/tokenizer_merge_order_difference.pdf}
% \caption{Differences in merge order across tokenizers.}
% \label{fig: merge differ}
% \end{figure}

\section{Related Work}\label{related_work}
\mypara{Tokenizers} Tokenizers play a crucial role in enabling the generation and comprehension capabilities of LLMs~\cite{guo2025deepseek}. By converting raw text into discrete tokens, tokenizers provide the input representations that models require for inference~\cite{zouhar2023formal}. Recent research has highlighted the connection between tokenizers and scaling laws, suggesting that larger models benefit from a larger tokenizer vocabulary, leading to improved performance under the same training cost~\cite{huangover, tao2024scaling}. Nevertheless, our work reveals that scaling up the tokenizer's vocabulary increases its vulnerability to MIAs. We underscore the necessity of paying attention to these overlooked risks.

\mypara{Membership Inference} Membership inference attacks~\cite{zarifzadeh2024low} are designed to determine whether a specific entity was included in the training data of a machine learning (ML) model. These attacks have become fundamental tools for quantifying privacy leakage in various scenarios~\cite{zarifzadeh2024low,carlini2021extracting}.
The first  MIA is proposed by Shokri \et~\cite{shokri2017membership}, which focuses on demining record-level membership from ML-based classifiers. Recent studies underscore the importance of detecting whether specific data was used during the pre-training of LLMs, as such data may contain copyrighted or sensitive information. To address these risks, several membership inference methods have been proposed.
Shi \et\cite{shi2023detecting} introduced MIN-K\% PROB, a technique that detects membership by identifying outlier words with unusually low probabilities in previously unseen examples. 
Duarte \et\cite{duarte2024cop} proposed DE-COP, which probes LLMs using multiple-choice questions that include both verbatim and paraphrased versions of candidate sentences. 
% Zhang \et~\cite{zhang2024min} presented MIN-K\%++, which leverages the insight that training samples are likely to be local maxima under maximum likelihood training. 
However, these MIA methods face significant evaluation challenges. 
Duan \et \cite{duanmembership} argue that existing benchmarks suffer from a temporal distribution shift between \textit{members} and \textit{non-members}, potentially invalidating evaluation results. 
Meeus \et \cite{meeus2024sok} further observe that many methods may incorporate mislabeled samples and use impractical evaluation models in their experiments.
% To mitigate these privacy threats, differential privacy (DP) serves as a principal defense mechanism \cite{dwork2006differential}. 
% Defenders can leverage these techniques~\cite{abadi2016deep} to add noise to the gradients during model training, thereby obfuscating the data membership.

% Sablayrolles \et~\cite{sablayrolles2019white} investigate the black-box setting~\cite{hui2021practical}, where the adversary can only access the classifier’s output. 
% They employ Bayesian learning~\cite{faul2001analysis} to approximate optimal strategies for membership inference in this context. 
% In contrast, Leino \et~\cite{leino2020stolen} explore MIAs in the white-box setting, where the adversary can access the model's internal components. 
% They proposed using confidence scores to improve attack performance for membership inference.
% Choquette-Choo \et~\cite{choquette2021label} develop the label-only MIAs. They exploit the sensitivity of the output label to the input perturbation for membership inference.
% Beyond record-level attacks, Maini \et~\cite{maini2021dataset} introduce dataset-level membership inference, utilizing statistical testing to determine whether a specific dataset was used to train a target model.

% \mypara{MIAs on LLMs}
 %Yet, it is still unknown how to apply DP to tokenizers.

\section{Discussion}\label{discussion}

\mypara{LLM Dataset Inference}
Recent research~\cite{maini2021dataset} has emphasized the importance of dataset inference in LLMs. In particular, existing methods~\cite{maini2024llm,choicontaminated} attempt to predict whether a specific dataset was used to train a target LLM by analyzing the model's output. However, as noted in Section~\ref{sec: intro}, these methods face significant challenges during evaluation, such as mislabeled samples, distribution shifts, and mismatches in target model sizes compared to real-world models. Furthermore, these attacks typically introduce additional assumptions about the adversary, such as access to model output loss or fine-tuning the target model, which are not guaranteed to hold in closed-source LLMs. Moreover, they can be defended by adding noise during the model training process~\cite{zhang2025soft}.

\mypara{Limitations of Tokenizer Inference}
While the tokenizers listed in Table~\ref{Tokenizer Justification} are trained using the models' pretraining data, some tokenizers might not follow this paradigm. For instance, ChatGLM~\cite{glm2024chatglm} employs OpenAI's tokenizer. In such cases, MIAs on these tokenizers will only reveal the tokenizer's training data rather than LLMs' pretraining corpus. Moreover, due to the absence of ground-truth training data for commercial tokenizers, we are unable to evaluate our attacks on them. Instead, we conduct evaluations on our trained tokenizers with vocabulary sizes and utility comparable to those of commercial tokenizers (see Figure~\ref{fig: cmp tokenizers}). Notably, this limitation is not unique to our work but is a common challenge in the field of membership inference, as also observed in other related studies~\cite{liu2024please,li2025enhanced,hui2021practical,maini2024llm}. Furthermore, our attack evaluations are restricted to tokenizers used in LLMs. As a result, the feasibility of MIAs on tokenizers for classification models~\cite{liu2019roberta} and diffusion-based language models~\cite{sahoo2024simple} remains unexplored.%, representing a promising direction.

\section{Conclusion}\label{conclusion}
In this paper, we review the limitations of existing MIAs against pre-trained LLMs and introduce the tokenizer as a new attack vector to address these challenges. To demonstrate its feasibility for membership inference, we present the first study of MIAs on tokenizers of LLMs. By analyzing overfitting signals during tokenizer training, we proposed five attack methods for inferring dataset membership. 
Extensive evaluations on millions of Internet data demonstrate that our shadow-based attacks achieve strong performance. 
To mitigate these attacks, we further propose an adaptive defense mechanism. 
Although our proposed defense can reduce the membership leakage, it does so at the cost of tokenizer utility. 
Our findings highlight the vulnerabilities associated with LLMs' tokenizers. Through this endeavor, we hope our research contributes to the design of privacy-preserving tokenizers, towards building secure machine learning systems.

% 

%-------------------------------------------------------------------------------
% optional clearing of the page

% \cleardoublepage
\appendix
\section*{Ethical Considerations}
In this work, we conduct the first study of membership leakage through tokenizers. We recognize that it is our responsibility to carefully assess the ethical implications of our research. Our analysis follows the stakeholder-based framework and is guided by the principles outlined in The Menlo Report.

\mypara{Stakeholder-based Analysis} We identify several key stakeholders, which might be impacted by this research:

\begin{itemize}
\item \mysubpara{LLM Developers} Our work focuses on the tokenizers of LLMs and closely involves their developers. Leveraging our proposed methods, developers can estimate the privacy risk in the tokenizer before making it publicly accessible. This is essential as tokenizers in commercial LLMs are typically open-sourced to support transparent billing.
\item \mysubpara{Data Subjects} The introduction of a new attack vector for MIAs could be used to infer personal information and lead to privacy violations. However, our experiments solely utilize public,non-sensitive datasets collected by Google. We emphasize that our proposed attacks are never intended to facilitate or encourage the privacy leakage from LLMs. 
\item \mysubpara{Adversaries} This work could potentially be used by adversaries to exploit privacy vulnerabilities. In recognition of this, we propose an adaptive defense to mitigate such risk.
\end{itemize}

\mypara{Mitigation} To mitigate the membership leakage in tokenizers, we introduce the min count mechanism as an adaptive defense. Specifically, our attacks exploit distinctive tokens that occur rarely in the training data and are overfit into the tokenizer's vocabulary. The min count mechanism counters this by removing these infrequent tokens from the vocabulary. {Furthermore, we explore a differentially private motivated mechanism at each token merging step during tokenizer training. The experimental results demonstrate that both of our approaches reduce privacy risks.}  %Yet, it also reduces the tokenizer utility.

\mypara{Justification for Research and Publication} 
The vulnerabilities exploited by our MIAs stem from the current practice of open-sourcing LLM tokenizers. Since similar attack techniques could be independently discovered by adversaries, it is important to disclose and analyze these issues within the research community. Moreover, by publishing our findings, LLM developers can address such vulnerabilities proactively.

We emphasize that our proposed attacks are not intended to facilitate or encourage privacy leakage from LLMs. We hope our work contributes to the design of privacy-preserving tokenizers, laying the groundwork for secure and trustworthy systems that benefit society.

% This paper investigates membership inference attacks on tokenizers. All evaluations are strictly limited to the tokenizers we trained, and we do not implement attacks on any commercial tokenizers in real-world systems. For our experiments, we use the publicly available C4 dataset for evaluations, which was collected by Google from the Internet. It must be essential to note that Google has already removed all sensitive information from this dataset. Therefore, our study does not involve any personally identifiable information or sensitive documents originating from websites. While our proposed attacks achieve strong performance, we also introduce a corresponding defense mechanism to mitigate this vulnerability. We emphasize that our proposed attacks are never intended to facilitate or encourage the privacy leakage from LLMs. Our intention is to remind the community of the existence of this risk. We hope our work contributes to the design of privacy-preserving tokenizers, laying the groundwork for secure and trustworthy systems that benefit society.

\section*{Open Science}

All attack and defense codes, as well as datasets, are available at
\url{https://github.com/mengtong0110/Tokenizer-MIA} and \url{https://zenodo.org/records/20338916}.  

\section*{Acknowledgments}
We thank all the anonymous reviewers and shepherd for their valuable comments. The work of Meng Tong, Kejiang Chen, and Weiming Zhang was supported by National Natural Science Foundation of China under Grant U2336206 and 62472398.

% optional clearing of the page
% \cleardoublepage

% optional clearing of the page
% \cleardoublepage
\bibliographystyle{plain}
\bibliography{jobname}

\begin{table*}[t]
    \centering
    \caption{Impact of the target dataset size on defense mechanism ($n_\text{min}=32$).}
    \vspace{-0.1cm}
    \resizebox{\textwidth}{!}{
    \begin{tabular}{@{}l c
                    ccc ccc ccc ccc ccc@{}}
        \toprule
        \multirow{2}{*}{\textbf{Attack Approach}}
        & \multirow{2}{*}{{\#Dataset Size}}
        & \multicolumn{3}{c}{\textbf{$|\mathcal{V}_\text{target}|<80{,}000$}} 
        & \multicolumn{3}{c}{\textbf{$|\mathcal{V}_\text{target}|<110{,}000$}} 
        & \multicolumn{3}{c}{\textbf{$|\mathcal{V}_\text{target}|<140{,}000$}} 
        & \multicolumn{3}{c}{\textbf{$|\mathcal{V}_\text{target}|<170{,}000$}}  
        & \multicolumn{3}{c}{\textbf{$|\mathcal{V}_\text{target}|<200{,}000$}}  \\
        \cmidrule(l{7pt}r{13pt}){3-5} \cmidrule(l{6pt}r{13pt}){6-8} 
        \cmidrule(l{6pt}r{13pt}){9-11} \cmidrule(l{6pt}r{13pt}){12-14} \cmidrule(l{6pt}r{0pt}){15-17}
        & 
        & AUC & BA & TPR
        & AUC & BA & TPR 
        & AUC & BA & TPR 
        & AUC & BA & TPR 
        & AUC & BA & TPR \\
        \midrule
    \multirow{3}{*}{\vsignal}
    & {$|D|\in[0,400)$} 
    & 0.646 & 0.632 & 21.25\% %
    & 0.677 & 0.642 &   21.25\%
    & 0.694 & 0.655  & 23.30\%
    & 0.706 & 0.663  &24.44\%  %
    &  0.717 & 0.669 & 25.14\%  \\%
    
    & {$|D|\in[400,800)$} 
    & 0.703 & 0.667 &  21.25\%%
    & 0.761 & 0.711 & 22.85\% 
    &  0.772 & 0.717  & 33.73\%
    &  0.793 & 0.730  &  34.94\% 
    & 0.810 & 0.742 & 40.06\%  \\

    & {$|D|\in[800,1200)$} 
    & 0.712 & 0.670 &   25.30\% %
    & 0.768 & 0.729 & 30.12\% 
    & 0.798 & 0.761  & 38.75\% 
    & 0.823 &  0.775  &  43.75\%
    & 0.840 & 0.800 &  46.25\% \\

    \midrule
    
    \multirow{3}{*}{\fsignal}
    & {$|D|\in[0,400)$} 
   & 0.586 & 0.591 & 16.36\% 
    & 0.617 & 0.616 & 20.11\%
    & 0.649 & 0.635 & 20.62\%  
    & 0.682 & 0.657 & 21.01\%
    & 0.720 & 0.683 & 26.99\% \\

    & {$|D|\in[400,800)$}
    & 0.611 & 0.615 & 17.78\% 
    & 0.675 & 0.650 & 25.00\%
    & 0.720 & 0.683 & 29.52\%  
    & 0.739 & 0.702 &  30.42\%
    & 0.763  & 0.721 & 33.43\% \\
    
    & {$|D|\in[800,1200)$} 
    & 0.723 & 0.714 & 25.00\% 
    & 0.742 & 0.716   & 35.00\% 
    & 0.758  & 0.729 &   35.00\% 
    & 0.785 & 0.759 &   47.50\% 
    &  0.820 &  0.805 & 52.50\%  \\
    \bottomrule
    \end{tabular}
    }
    \label{tab: dataset score 32}
\end{table*}

\begin{table*}[t]
    \centering
    \caption{Impact of the target dataset size on defense mechanism ($n_\text{min}=48$).}
    \vspace{-0.1cm}
    \resizebox{\textwidth}{!}{
    \begin{tabular}{@{}l c
                    ccc ccc ccc ccc ccc@{}}
        \toprule
        \multirow{2}{*}{\textbf{Attack Approach}}
        & \multirow{2}{*}{{\#Dataset Size}}
        & \multicolumn{3}{c}{\textbf{$|\mathcal{V}_\text{target}|<80{,}000$}} 
        & \multicolumn{3}{c}{\textbf{$|\mathcal{V}_\text{target}|<110{,}000$}} 
        & \multicolumn{3}{c}{\textbf{$|\mathcal{V}_\text{target}|<140{,}000$}} 
        & \multicolumn{3}{c}{\textbf{$|\mathcal{V}_\text{target}|<170{,}000$}}  
        & \multicolumn{3}{c}{\textbf{$|\mathcal{V}_\text{target}|<200{,}000$}}  \\
        \cmidrule(l{7pt}r{13pt}){3-5} \cmidrule(l{6pt}r{13pt}){6-8} 
        \cmidrule(l{6pt}r{13pt}){9-11} \cmidrule(l{6pt}r{13pt}){12-14} \cmidrule(l{6pt}r{0pt}){15-17}
        & 
        & AUC & BA & TPR
        & AUC & BA & TPR 
        & AUC & BA & TPR 
        & AUC & BA & TPR 
        & AUC & BA & TPR \\
        \midrule
    \multirow{3}{*}{\vsignal}
    & {$|D|\in[0,400)$} 
    & 0.646 & 0.632 &   21.25\% %
    & 0.677 & 0.642 &     21.25\%
    &  0.687 & 0.645  &  22.85\%
    &0.694  & 0.655  &23.30\%  %
    &   0.710 & 0.667  &  27.12\%  \\%
    
    & {$|D|\in[400,800)$} 
    & 0.704 & 0.667 &   21.58\%%
    & 0.761 & 0.711 &  22.22\% 
    &  0.772 & 0.717  & 30.00\%
    & 0.775  &  0.721   &   34.64\% 
    &  0.790 & 0.730 & 34.94\%  \\

    & {$|D|\in[800,1200)$} 
    & 0.712 & 0.670 &   25.30\% %
    & 0.768 & 0.729 & 30.12\% 
    & 0.774 & 0.738  & 30.72\% 
    &  0.798 &  0.761   &   38.75\%
    &  0.823 &  0.767 &   45.00\% \\

    \midrule
    
    \multirow{3}{*}{\fsignal}
    & {$|D|\in[0,400)$} 
   & 0.586 & 0.591 & 16.36\% 
    & 0.617 & 0.616 & 18.20\%
    & 0.649 & 0.634 & 19.92\%  
    & 0.680 & 0.652 & 21.01\%
    & 0.680 &  0.654 & 22.79\% \\

    & {$|D|\in[400,800)$}
    & 0.588 & 0.597 & 16.93\% 
    & 0.619 & 0.623 &  20.11\%
    & 0.650 & 0.635 &   20.62\%  
    & 0.682 & 0.657 &  21.64\%
    &  0.720  &0.683  &   26.99\% \\
    
    & {$|D|\in[800,1200)$} 
    & 0.616 & 0.611 & 20.48\% 
    & 0.681 & 0.659   & 28.01\% 
    & 0.719  & 0.692 &  28.92\% 
    & 0.731 & 0.693 & 31.93\% 
    &   0.736 &  0.699 & 32.53\%  \\
    \bottomrule
    \end{tabular}
    }
    \label{tab: dataset score 48}
\end{table*}

\begin{table}[t]
    \centering
    \caption{Additional evaluations with $|\mathcal{V}_\text{target}|=200{,}000$.}
    \label{a_exp}
    \resizebox{\columnwidth}{!}{
    \begin{threeparttable}  % Wrap everything inside the threeparttable environment

    \vspace{-0.1cm}
    
    \begin{tabular}{lccc}
    \toprule 
    \textbf{Evaluation Settings} & \textbf{AUC} & \textbf{TPR @ 1\% FPR} & \textbf{ Balanced Acc.} \\
    \midrule 
    \textit{Canary Experiments}  \\
    \hspace{0.15cm}+\hspace{0.15cm}MIA via \vsignal\hspace{0.15cm}  &1.000 &100.00\%  &1.000\\
    \hspace{0.15cm}+\hspace{0.15cm}MIA via \fsignal\hspace{0.15cm}  &1.000 &100.00\%  &1.000\\
    \midrule 
    \textit{Out-of-distribution Datasets}  \\
    \hspace{0.15cm}+\hspace{0.15cm}MIA via \vsignal\hspace{0.15cm}  &0.757 &21.00\%  &0.730\\
    \hspace{0.15cm}+\hspace{0.15cm}MIA via \fsignal\hspace{0.15cm}  &0.739 &13.00\%  &0.715\\
    \midrule 
    \textit{Unigram Tokenizers}  \\
    \hspace{0.15cm}+\hspace{0.15cm}MIA via \vsignal\hspace{0.15cm}  &0.762 &23.00\%  &0.712\\
    \hspace{0.15cm}+\hspace{0.15cm}MIA via \fsignal\hspace{0.15cm}  &0.764 &30.00\%  &0.728\\
    \bottomrule
    \end{tabular}
    \end{threeparttable} 
        }% Close the threeparttable environment
        \vspace{-0.1cm}
\end{table}

\begin{table*}[t]
    \centering
    \caption{Impact of the target dataset size on defense mechanism ($n_\text{min}=64$).}
    \vspace{-0.1cm}
    \resizebox{\textwidth}{!}{
    \begin{tabular}{@{}l c
                    ccc ccc ccc ccc ccc@{}}
        \toprule
        \multirow{2}{*}{\textbf{Attack Approach }}
        & \multirow{2}{*}{{\#Dataset Size}}
        & \multicolumn{3}{c}{\textbf{$|\mathcal{V}_\text{target}|\leq 80{,}000$}} 
        & \multicolumn{3}{c}{\textbf{$|\mathcal{V}_\text{target}|\leq 110{,}000$}} 
        & \multicolumn{3}{c}{\textbf{$|\mathcal{V}_\text{target}|\leq 140{,}000$}} 
        & \multicolumn{3}{c}{\textbf{$|\mathcal{V}_\text{target}|\leq 170{,}000$}}  
        & \multicolumn{3}{c}{\textbf{$|\mathcal{V}_\text{target}|\leq 200{,}000$}}  \\
        \cmidrule(l{7pt}r{13pt}){3-5} \cmidrule(l{6pt}r{13pt}){6-8} 
        \cmidrule(l{6pt}r{13pt}){9-11} \cmidrule(l{6pt}r{13pt}){12-14} \cmidrule(l{6pt}r{0pt}){15-17}
        & 
        & AUC & BA & TPR
        & AUC & BA & TPR 
        & AUC & BA & TPR 
        & AUC & BA & TPR 
        & AUC & BA & TPR \\
        \midrule
    \multirow{3}{*}{\vsignal}
    & {$|D|\in[0,400)$} 
    & 0.646 & 0.632 &19.16\% %
    & 0.662 & 0.636 &  21.25\%
    & 0.677 & 0.642  & 21.25\%
    & 0.683 & 0.647  &23.30\%  %
    &  0.694 & 0.655 & 24.06\%  \\%
    
    & {$|D|\in[400,800)$} 
    & 0.703 & 0.666 &  21.25\%%
    & 0.731 & 0.675 & 21.58\% 
    &  0.751 & 0.699  & 22.85\%
    &  0.761 & 0.711  & 32.50\% 
    & 0.772 & 0.717 & 37.95\%  \\

    & {$|D|\in[800,1200)$} 
    & 0.712 & 0.670 &  25.30\% %
    & 0.743 & 0.702 & 27.11\% 
    & 0.768 & 0.729  & 30.12\% 
    & 0.795 &  0.746  &  33.73\%
    & 0.797 & 0.761 & 38.75\% \\

    \midrule
    
    \multirow{3}{*}{\fsignal}
    & {$|D|\in[0,400)$} 
   & 0.591 & 0.598 & 17.50\% 
    & 0.620 & 0.619 & 18.20\%
    & 0.651 & 0.636 & 20.11\%  
    & 0.654 & 0.637 & 21.07\%
    & 0.656 & 0.639 & 21.39\% \\

    & {$|D|\in[400,800)$}
    & 0.619 & 0.615 & 22.29\% 
    & 0.672 & 0.653 & 28.31\%
    & 0.717 & 0.682 & 30.42\%  
    & 0.718 & 0.683 & 31.93\%
    & 0.723  & 0.686 & 32.83\% \\
    
    & {$|D|\in[800,1200)$} 
    & 0.734 & 0.717 & 26.25\% 
    & 0.739 & 0.731   & 33.75\% 
    & 0.744  & 0.736 &   33.75\% 
    & 0.745 & 0.742 &   35.00\% 
    &  0.748 &  0.746 & 38.75\%  \\
    \bottomrule
    \end{tabular}
    }
    \label{tab: dataset score 64 a}
\end{table*}

\section{Proof of Theorem 4.2}\label{a: proof}

% \begin{manualtheorem}
% Under the power-law distribution~\textup{\cite{clauset2009power}}, the frequency $\Pr(t_i \mid \mathcal{V}_\text{target})$ of a token $t_i \in \mathcal{V}_\text{target}$ is proportional to $1/i^\alpha$:
% \begin{equation}
% \Pr(t_i \mid \mathcal{V}_\text{target}) \propto \frac{1}{i^\alpha},
% \end{equation} 
% where $i > x_\textup{min}$, and $\alpha\in\mathbb{R}_{> 0}$\,, $x_\textup{min}\in\mathbb{Z}_{> 0}$ are constants defined by the power law.
% Then, RTF-SI can be approximated by its lower bound:
% \begin{equation}
% \operatorname{RTF\text{-}SI}( D, t_i, \mathcal{V}_\text{target}) \geq \frac{n_D(t_i)}{\sum_{D' \in \mathbb{D}} n_{D'}(t_i)} \cdot \log (\hspace{-0.9em}\sum_{j=x_\textup{min}+1}^{|\mathcal{V}_\text{target}|}\hspace{-0.9em}{i^\alpha}/{j^\alpha}).
% \end{equation}
% \end{manualtheorem}

\begin{proof}
Given that $\Pr(t_i \mid \mathcal{V}_\text{target}) \propto \frac{1}{i^\alpha}$, let us assume 
\begin{equation}\label{op1}
\Pr(t_i \mid \mathcal{V}_\text{target}) = C/i^\alpha, 
\end{equation}
where $C \in \mathbb{R}_{> 0}$ is a constant. Since the sum of the frequencies $\Pr(t_i \mid \mathcal{V}_\text{target})$ for all $t_i\in \mathcal{V}_\text{target}$ is at most 1, it follows that
\begin{equation}
1 \geq \hspace{-0.3em}\sum_{j = x_\textup{min} + 1}^{|\mathcal{V}_\text{target}|} \Pr(t_j \mid \mathcal{V}_\text{target}) = \hspace{-0.3em}\sum_{j=1+x_\textup{min}}^{|\mathcal{V}_\text{target}|}\hspace{-0.7em}{C}/{j^\alpha}.
\end{equation}
Therefore, the upper bound for the constant $C$ is
\begin{equation}\label{up1}
C \leq \frac{1}{\sum_{j = x_\textup{min} + 1}^{|\mathcal{V}_\text{target}|} \frac{1}{j^\alpha}}.
\end{equation}
Using Equations~\ref{op1}~and~\ref{up1}, we derive the upper bound for the frequency $\Pr(t_i \mid \mathcal{V}_\text{target})$:
\begin{equation}
\Pr(t_i \mid \mathcal{V}_\text{target}) \leq \frac{1}{\sum_{j = x_\textup{min} + 1}^{|\mathcal{V}_\text{target}|} \frac{1}{j^\alpha}} \cdot \frac{1}{i^\alpha}.
\end{equation}
Taking the negative logarithm, the self-information of $t_i$ is bounded from below:
\begin{equation}
\operatorname{SI}(t_i, \mathcal{V}_\text{target}) = -\log \Pr(t_i \mid \mathcal{V}_\text{target}) \geq \log(\hspace{-0.9em}\sum_{j=x_\textup{min}+1}^{|\mathcal{V}_\text{target}|}\hspace{-0.9em}{i^\alpha}/{j^\alpha}).
\end{equation} 
Therefore, the RTF-SI satisfies:
\begin{equation}
\operatorname{RTF\text{-}SI}( D, t_i, \mathcal{V}_\text{target}) \geq \frac{n_D(t_i)}{\sum_{D' \in \mathbb{D}} n_{D'}(t_i)} \cdot \log (\hspace{-0.9em}\sum_{j=x_\textup{min}+1}^{|\mathcal{V}_\text{target}|}\hspace{-0.9em}{i^\alpha}/{j^\alpha}).
\end{equation}
This theorem allows an adversary to approximate the RTF-SI using its lower bound under a power-law distribution.
\end{proof}

\section{Additional Evaluations}\label{add_eval}

\mypara{Canary Experiments} Our MIAs focus on dataset-level membership. A key question is whether their effectiveness implies document-level risk~\cite{carlini2021extracting}. To this end, we conduct canary evaluations~\cite{zhang2025position,aerni2024evaluations}. Specifically, we synthesize a token from five alphabets and insert it into a single document 40 times. We use MIAs that perform well on datasets to infer the membership of such documents. Table~\ref{a_exp} shows that the canary cases are highly detectable. This can be attributed to our attacks, which identify unique tokens, thus establishing a one-to-one mapping between the tokens and the documents.

\mypara{Out-of-distribution Datasets} We further evaluate our MIAs using auxiliary datasets from corpora different from those used in the tokenizer’s training data. Specifically, we train the tokenizer on the \textit{PILE} corpus~\cite{biderman2022datasheet} (published in 2022) with the target dataset to infer membership. The auxiliary data is drawn from the \textit{C4} corpus~\cite{raffel2020exploring} (published in 2020), ensuring that no samples are shared between the auxiliary data and the tokenizer’s training data. The results in Table~\ref{a_exp} demonstrate that our MIAs perform effectively without relying on the shared data assumption in the threat model. Indeed, previous work~\cite{carlini2022membership} has shown that shadow-based MIAs with out-of-distribution auxiliary data can successfully infer membership.

\mypara{Unigram Tokenizers} Our MIAs focus on tokenizers trained using the BPE algorithm. This is because state-of-the-art LLMs employ this training paradigm~\cite{dai2024deepseekmoe}. In addition to BPE tokenizers, we further evaluate the MIAs against unigram tokenizers\footnote{\url{https://github.com/google/sentencepiece}}. Table~\ref{a_exp} presents the experimental results. It is observed that our MIAs also generalize to unigram tokenizers. This is likely because the unigram tokenizer merges the most frequent string as a new token at each step, which resembles the BPE and overfits distinctive tokens into the vocabulary.

 \FloatBarrier

% \section{Dataset Construction}\label{a: dataset}
% The C4 corpus \cite{raffel2020exploring} is a large-scale text collection widely used in natural language processing research. It was constructed by processing and cleaning data from the publicly available Common Crawl web scrape. In our experiment, we draw on 1,681,296 web pages spanning 4,133 websites, treating each website as an individual dataset. Figure~\ref{fig: dataset distribution} presents a histogram of sample counts per dataset.
\begin{figure*}[t]
    \centering
    \subfigure[Vocabulary Size: $80\text{,\,}000$]{\includegraphics[width=0.196\textwidth]{fig/intial_80000_obs.pdf}}
    \subfigure[Vocabulary Size: $110\text{,\,}000$]{\includegraphics[width=0.196\textwidth]{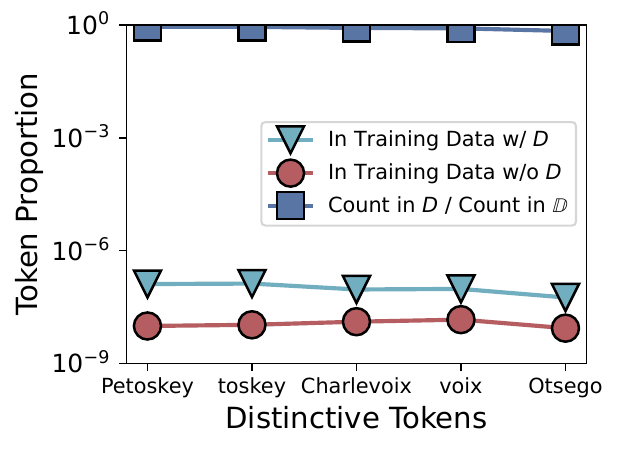}}
    \subfigure[Vocabulary Size: $140\text{,\,}000$]{\includegraphics[width=0.196\textwidth]{fig/intial_140000_obs.pdf}}
    \subfigure[Vocabulary Size: $170\text{,\,}000$]{\includegraphics[width=0.196\textwidth]{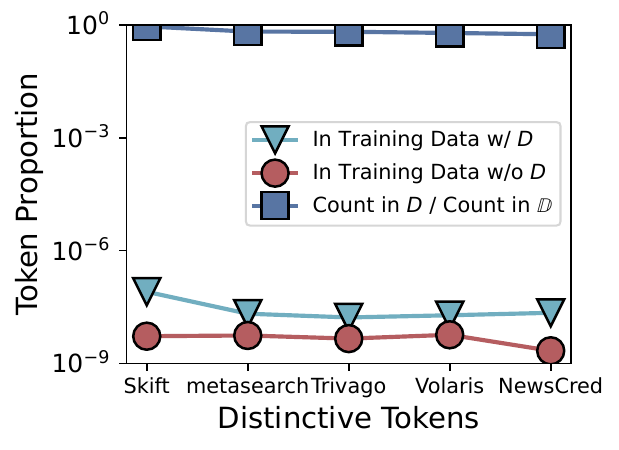}}
    \subfigure[Vocabulary Size: $200\text{,\,}000$]{\includegraphics[width=0.196\textwidth]{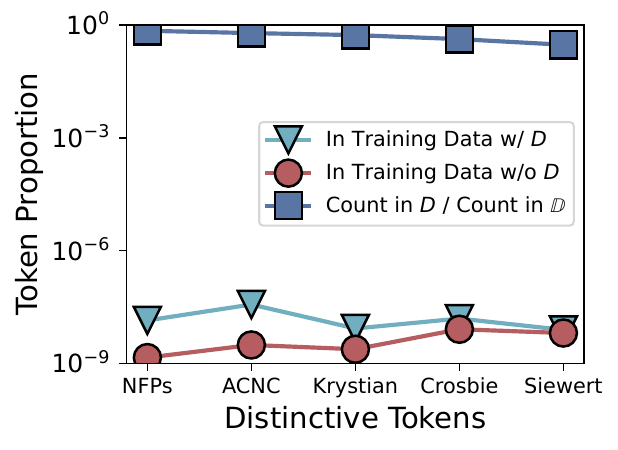}}
    \vspace{-0.1cm}
\caption{Distinctive tokens in MIA via \vsignal.}
    \label{fig: full diff}
\end{figure*}
\begin{figure*}[h]
    \centering
    \subfigure[Vocabulary Size: $80\text{,\,}000$]{\includegraphics[width=0.196\textwidth]{fig/roc_curve_vsize_80000.pdf}}
    \subfigure[Vocabulary Size: $110\text{,\,}000$]{\includegraphics[width=0.196\textwidth]{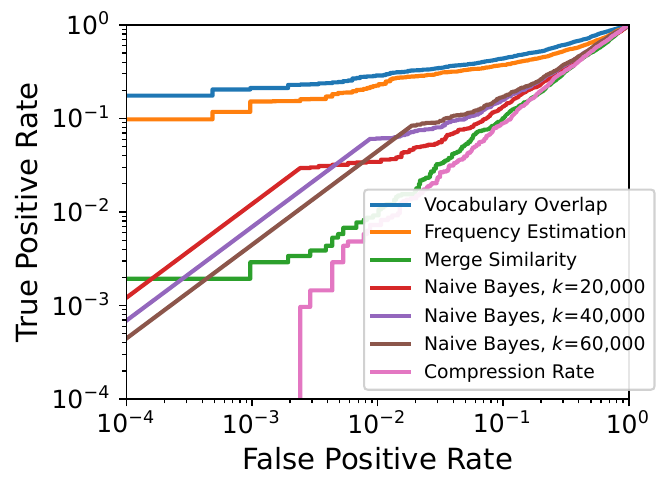}}
    \subfigure[Vocabulary Size: $140\text{,\,}000$]{\includegraphics[width=0.196\textwidth]{fig/roc_curve_vsize_140000.pdf}}
    \subfigure[Vocabulary Size: $170\text{,\,}000$]{\includegraphics[width=0.196\textwidth]{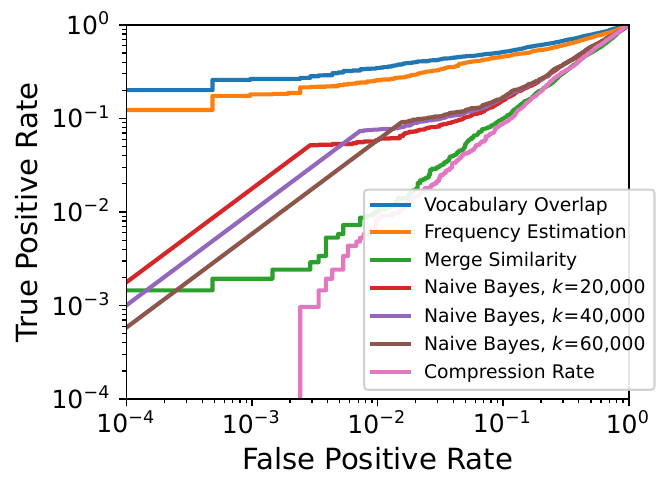}}
    \subfigure[Vocabulary Size: $200\text{,\,}000$]{\includegraphics[width=0.196\textwidth]{fig/roc_curve_vsize_200000.pdf}}
    \vspace{-0.1cm}
\caption{ROC curves for MIAs using different methods.}
    \label{fig: full ROC curves}
%\vspace{-1em}
\end{figure*}

\begin{figure*}[h]
    \centering
    \subfigure[Vocabulary Size: $80\text{,\,}000$]{\includegraphics[width=0.196\textwidth]{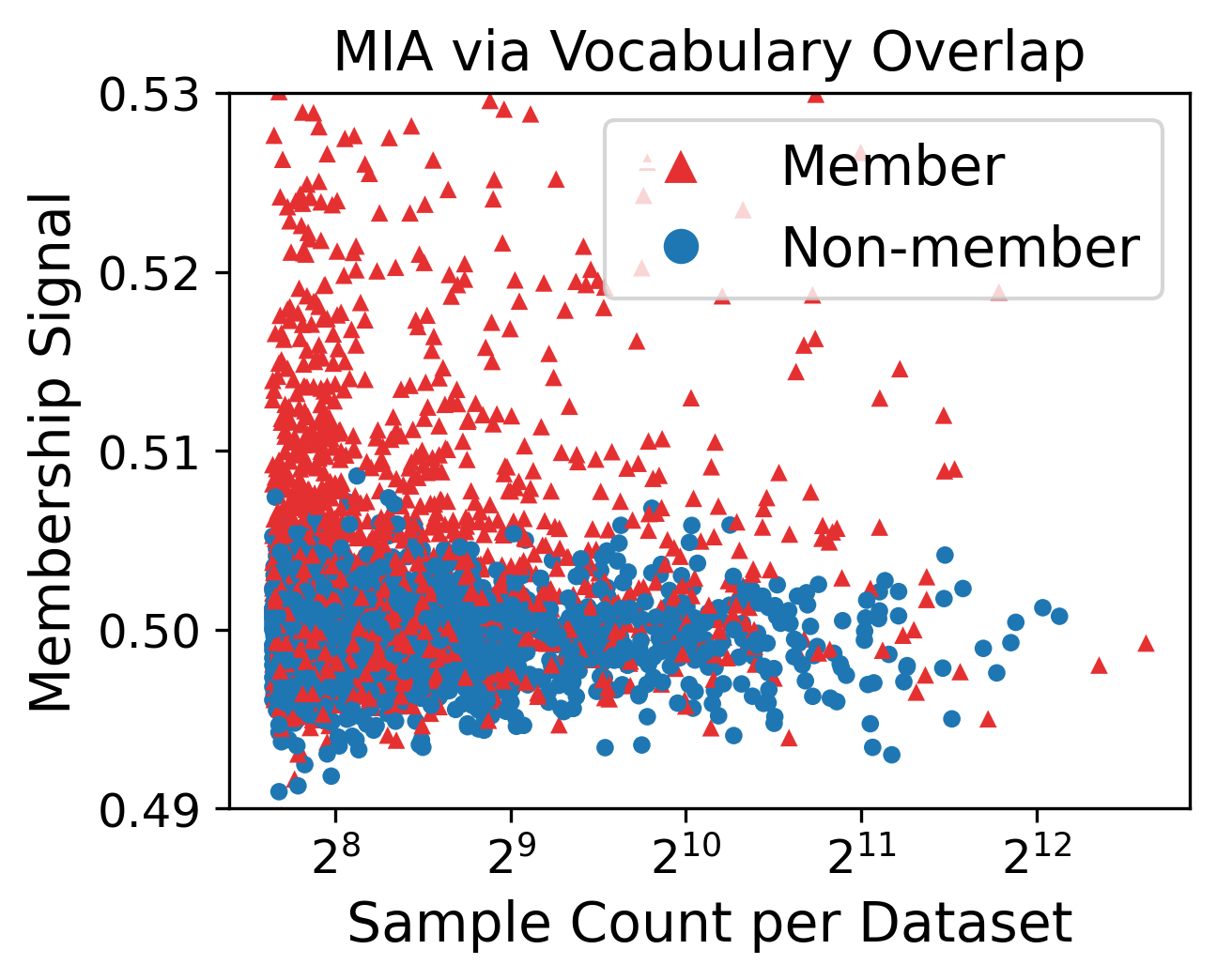}}
    \subfigure[Vocabulary Size: $110\text{,\,}000$]{\includegraphics[width=0.196\textwidth]{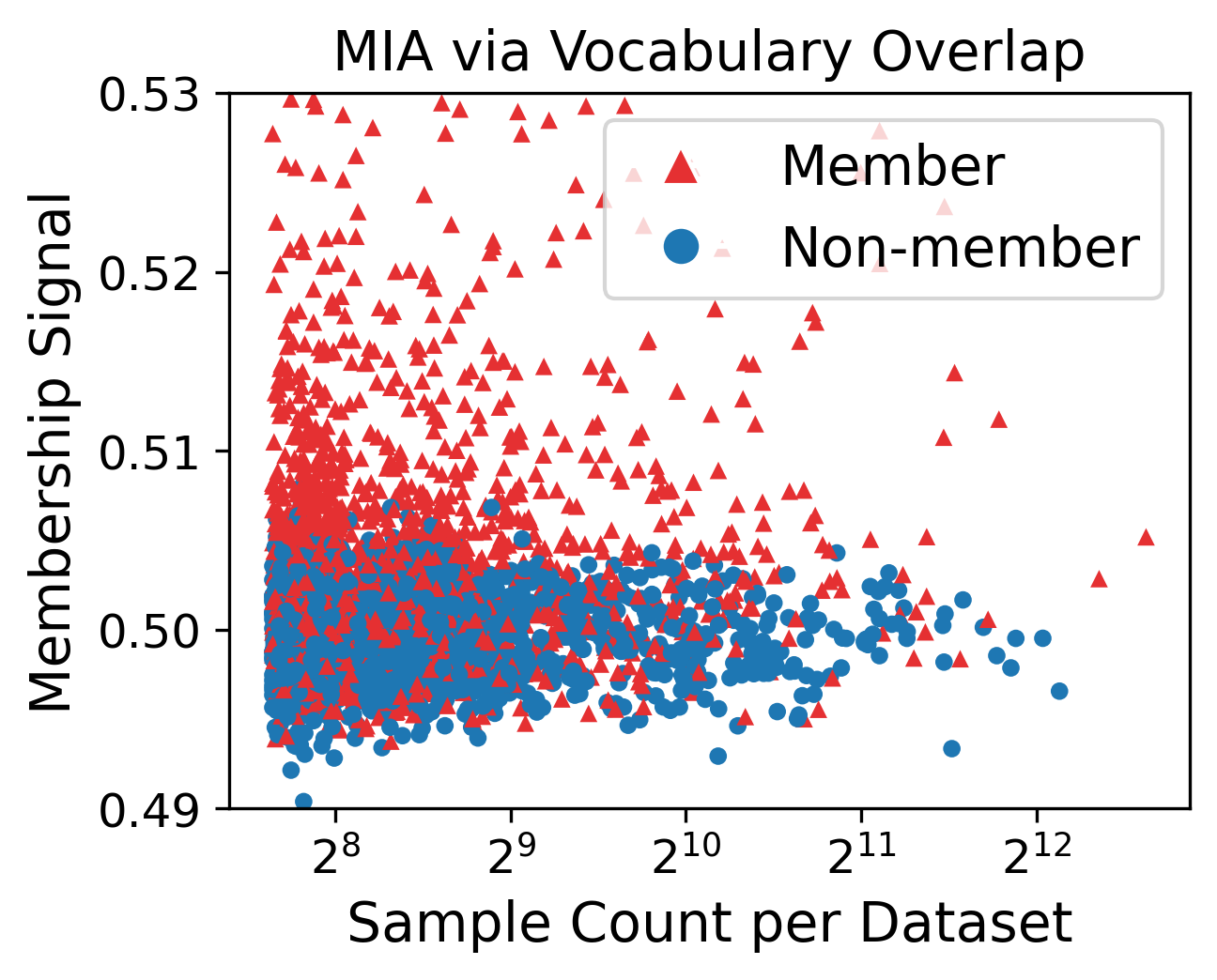}}
    \subfigure[Vocabulary Size: $140\text{,\,}000$]{\includegraphics[width=0.196\textwidth]{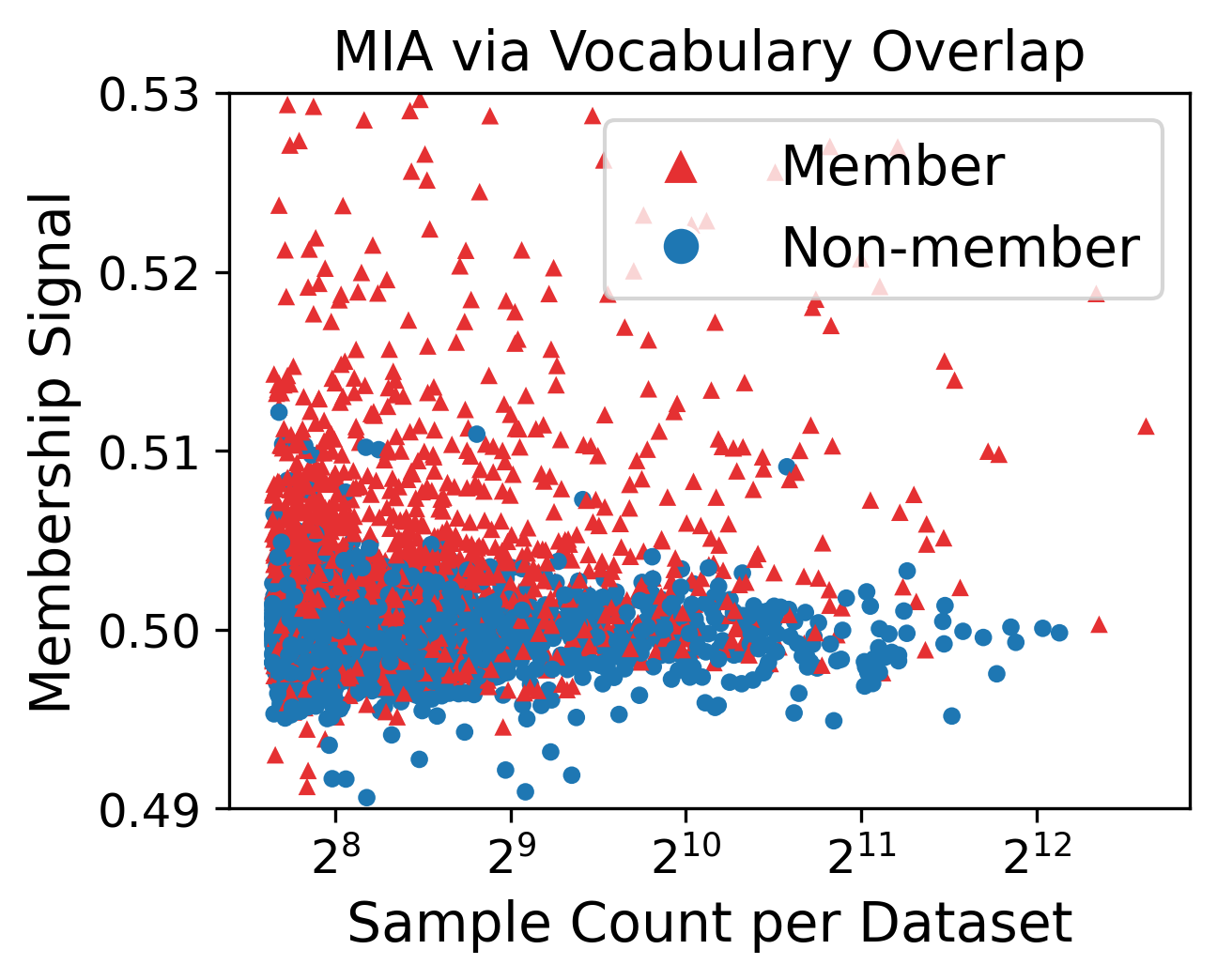}}
    \subfigure[Vocabulary Size: $170\text{,\,}000$]{\includegraphics[width=0.196\textwidth]{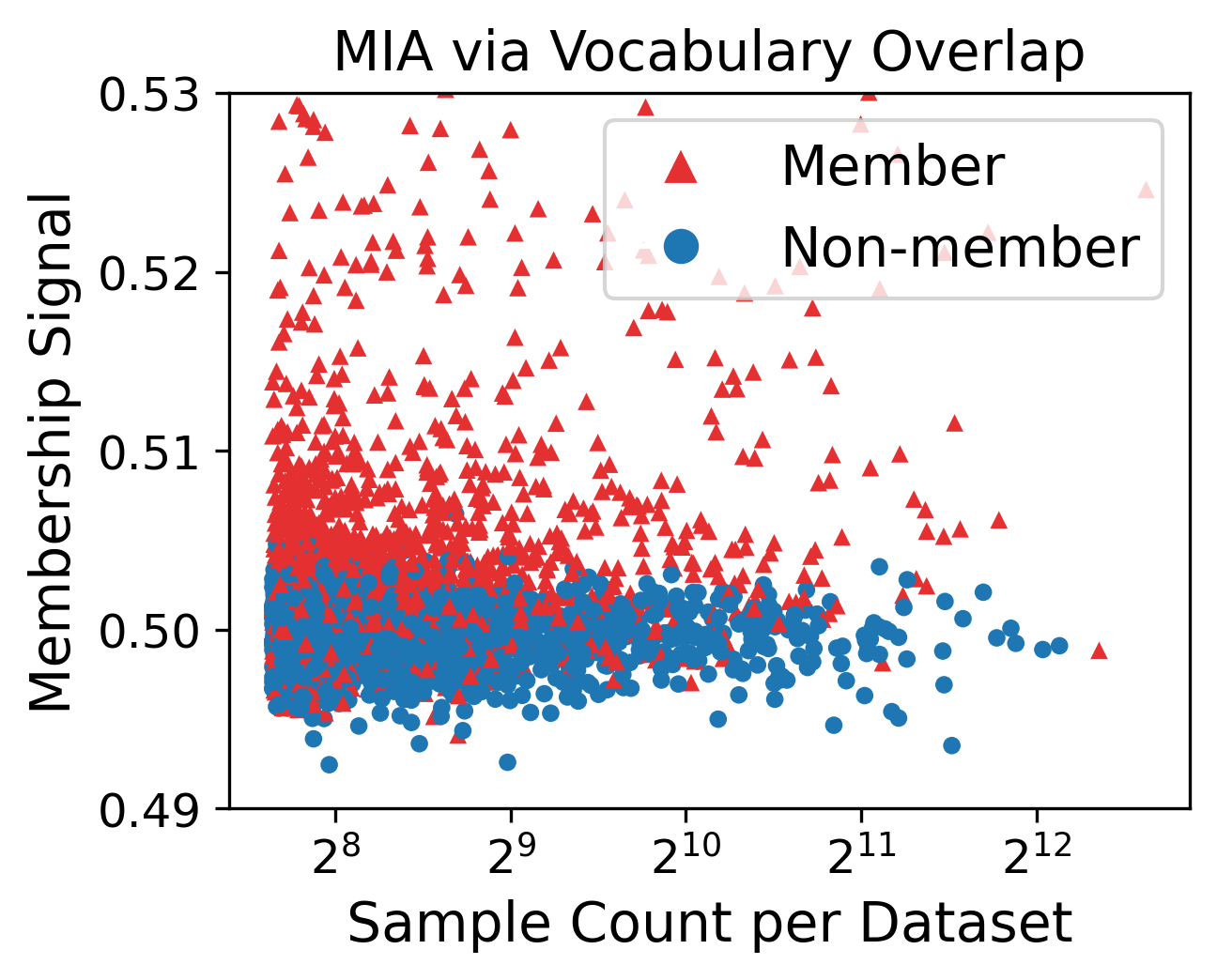}}
    \subfigure[Vocabulary Size: $200\text{,\,}000$]{\includegraphics[width=0.196\textwidth]{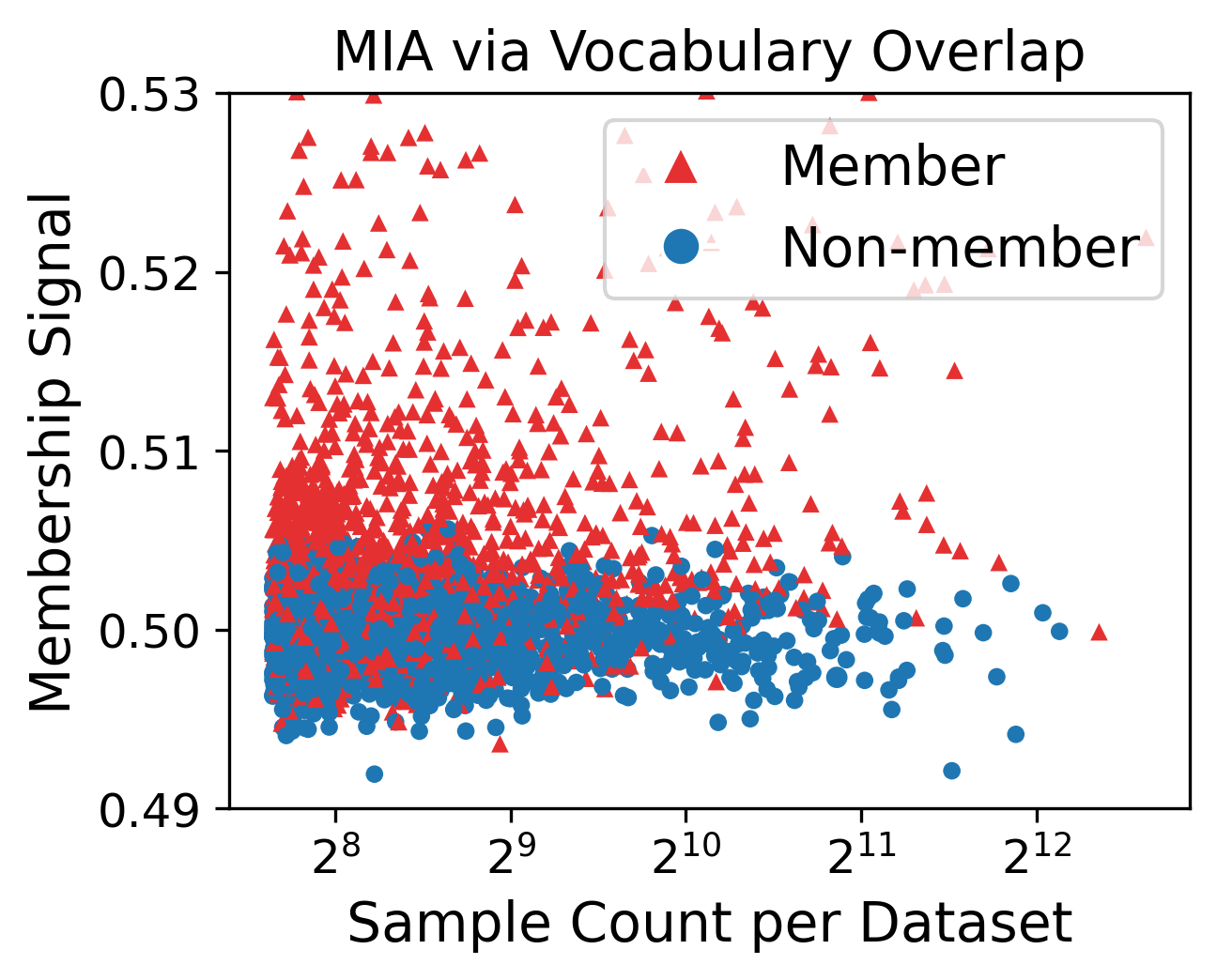}}
    \vspace{-0.1cm}
\caption{Dataset distribution based on MIA via \vsignal.}
    \label{fig: full dataset score V}
%\vspace{-1em}
\end{figure*}

\begin{figure*}[h]
    \centering
    \subfigure[Vocabulary Size: $80\text{,\,}000$]{\includegraphics[width=0.196\textwidth]{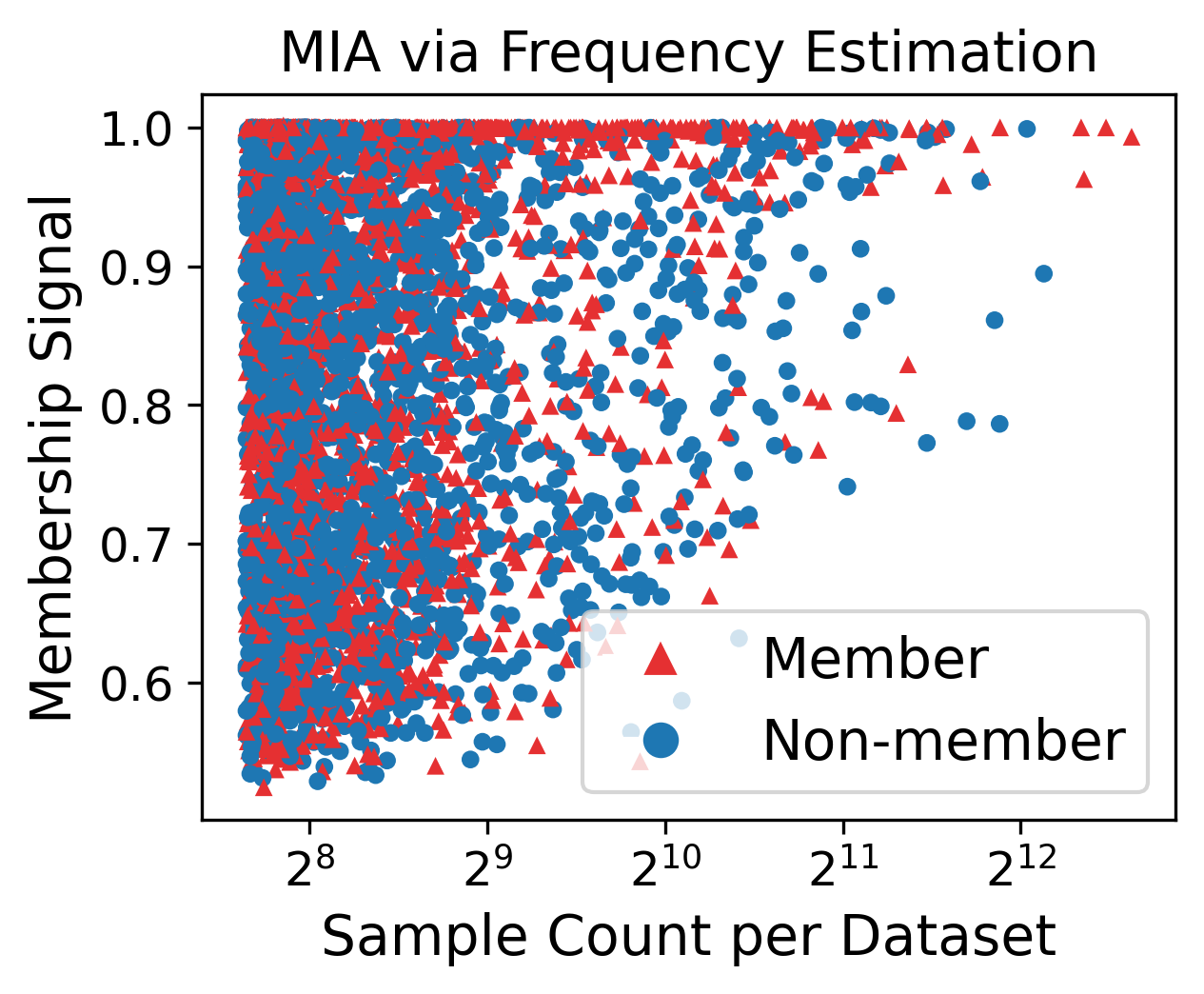}}
    \subfigure[Vocabulary Size: $110\text{,\,}000$]{\includegraphics[width=0.196\textwidth]{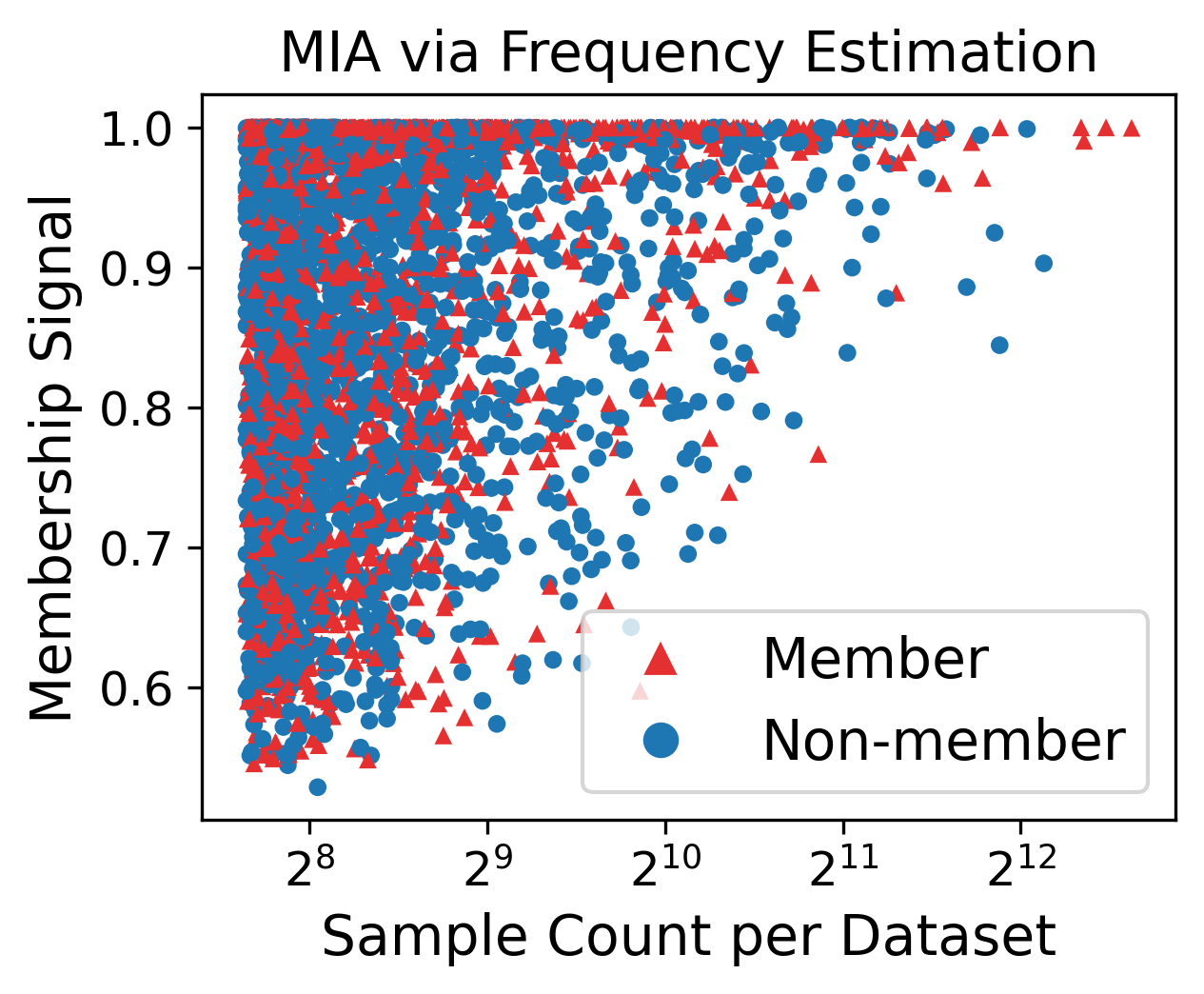}}
    \subfigure[Vocabulary Size: $140\text{,\,}000$]{\includegraphics[width=0.196\textwidth]{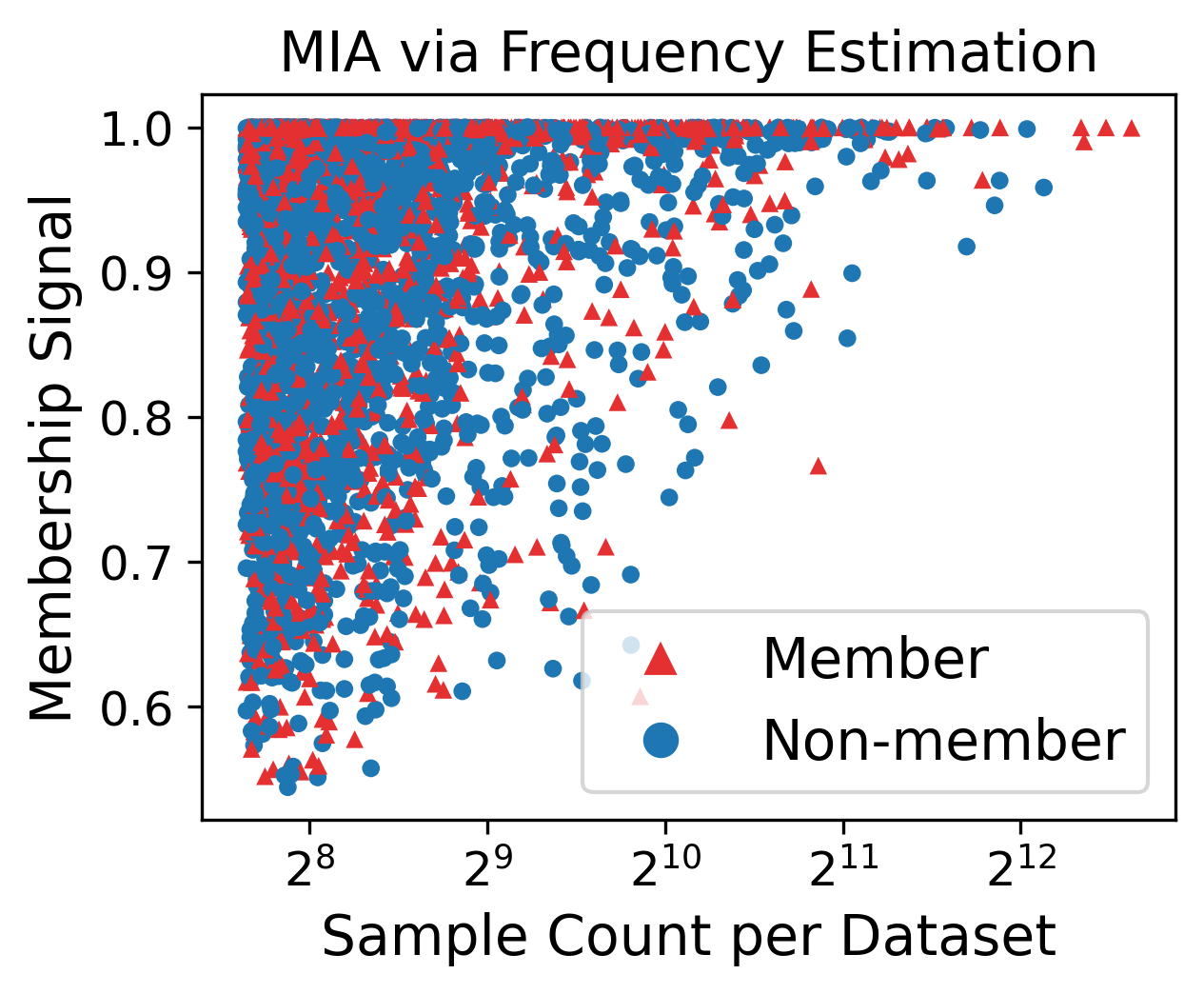}}
    \subfigure[Vocabulary Size: $170\text{,\,}000$]{\includegraphics[width=0.196\textwidth]{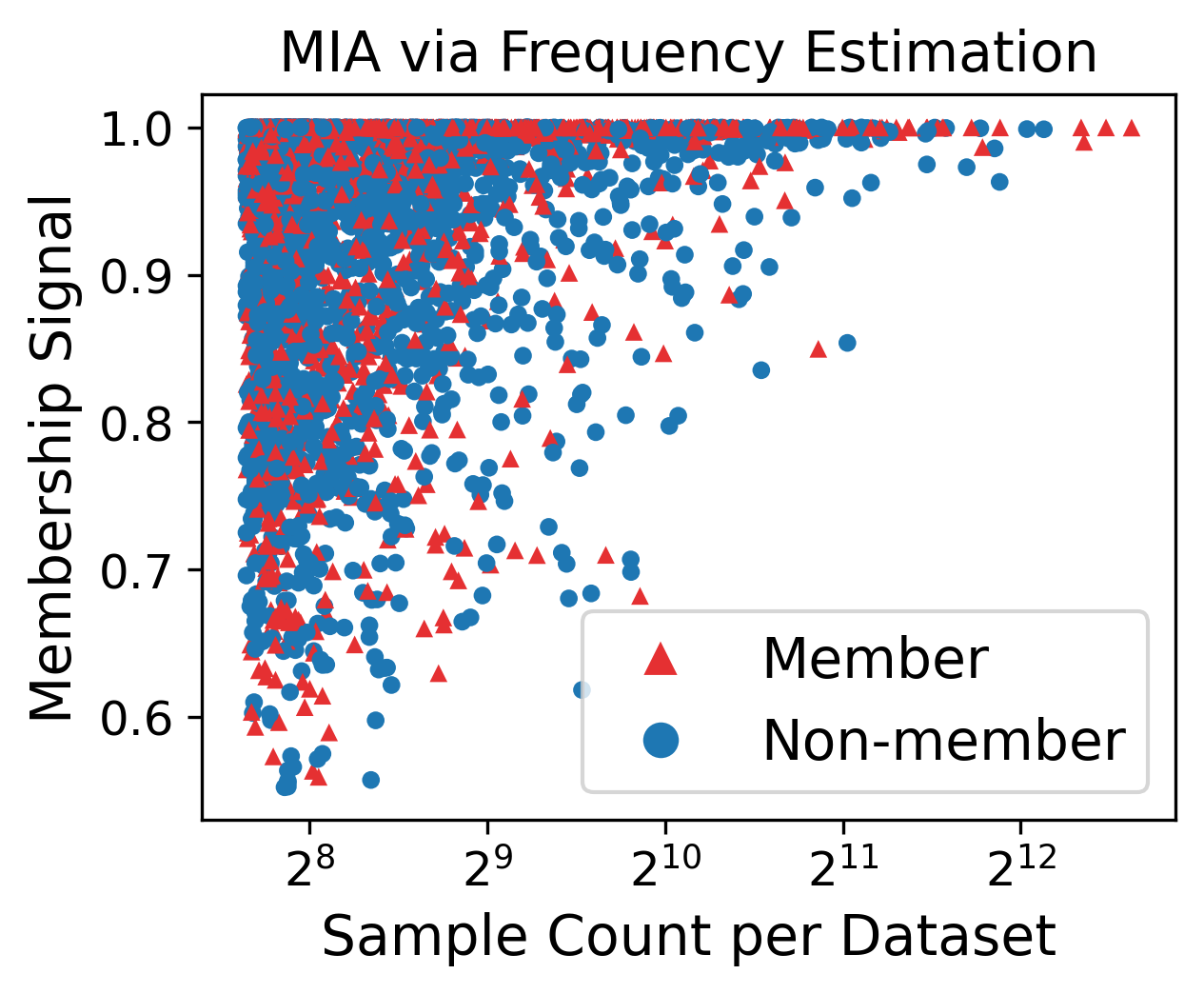}}
    \subfigure[Vocabulary Size: $200\text{,\,}000$]{\includegraphics[width=0.196\textwidth]{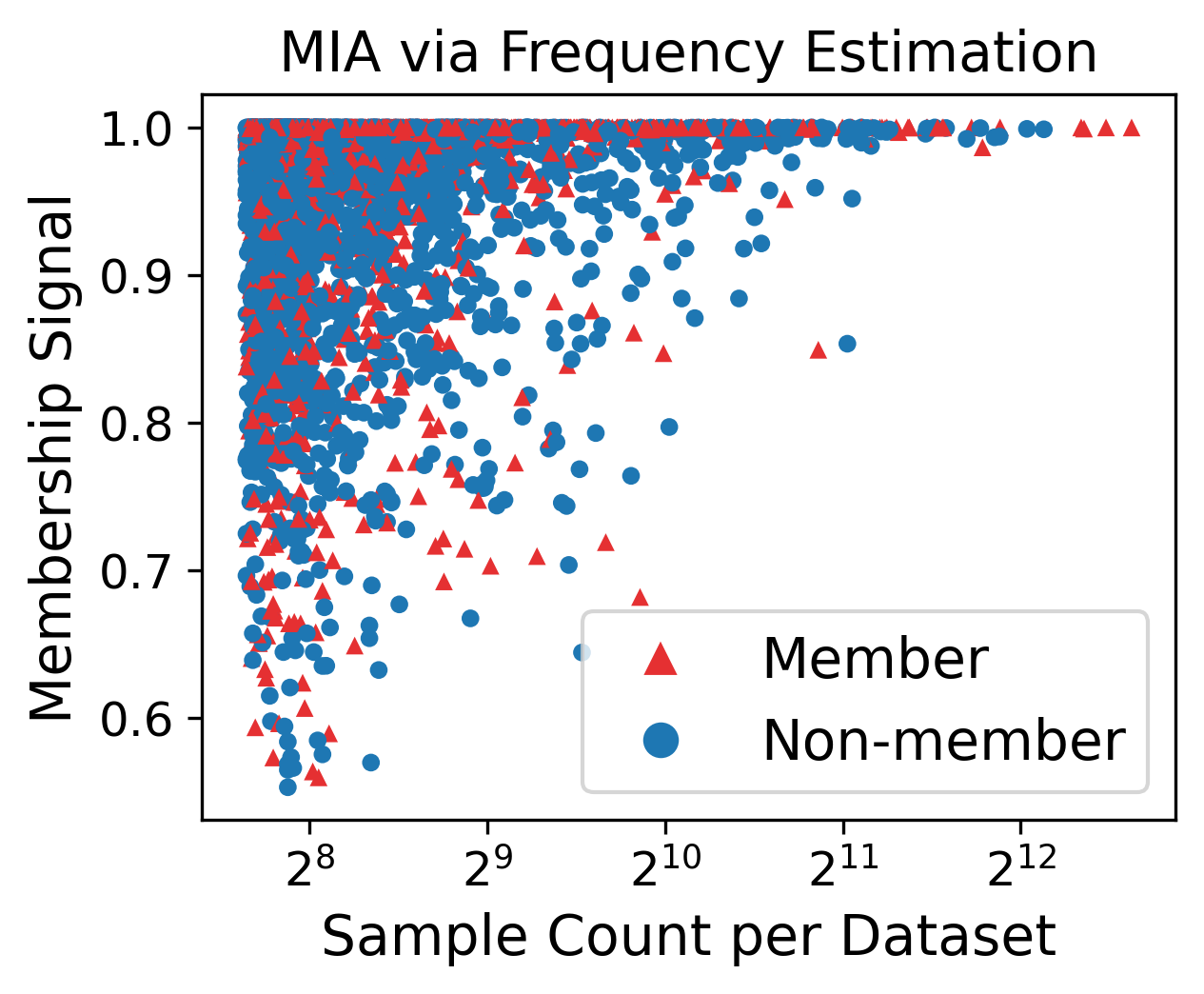}}
    \vspace{-0.1cm}
\caption{Dataset distribution based on MIA via \fsignal.}
    \label{fig: full dataset score F}
%\vspace{-1em}
\end{figure*}

\cleardoublepage

%%%%%%%%%%%%%%%%%%%%%%%%%%%%%%%%%%%%%%%%%%%%%%%%%%%%%%%%%%%%%%%%%%%%%%%%%%%%%%%%
\end{document}